\newcommand{\nc}{\newcommand}
\nc{\fig}[4]
{
  \begin{figure}[ht!]  
    \centering{\scalebox{#1}{\includegraphics*{./figures/#2.eps}}}
    \caption{#4}
    \label{fig:#3}
  \end{figure}
}
\nc{\bXYZ}{{\bf XYZ\ }}
\nc{\nn}{\nonumber} 
\nc{\nit}{\noindent}
\nc{\marginnote}[1] {\marginpar{\tiny #1}}
\nc{\SH}{Schr\"odinger}
\nc{\NLS}{nonlinear Schr\"odinger}
\nc{\ie}{\emph{i.e.\ \mbox{}}}
\nc{\eg}{\emph{e.g.\ \mbox{}}}
\nc{\per}{{l_j}}
\nc{\perx}{q}
\nc{\vol}{{\rm Vol}}
\nc{\lapx}{\dfrac{\partial^2}{\partial x^2}}
\nc{\lapy}{\dfrac{\partial^2}{\partial y^2}}
\nc{\lapxy}{\dfrac{\partial^2}{\partial xy}}
\nc{\diff}[2]{\frac{d #1}{d #2}}
\nc{\diffn}[3]{\frac{d^{#3} #1}{d {#2}^{#3}}} 
\nc{\pdiff}[2]{\frac{\partial #1}{\partial #2}} 
\nc{\pdiffn}[3]{\frac{\partial^{#3} #1}{\partial{#2}^{#3}}} 
\def\Xint#1{\mathchoice
  {\XXint\displaystyle\textstyle{#1}}%
  {\XXint\textstyle\scriptstyle{#1}}%
  {\XXint\scriptstyle\scriptscriptstyle{#1}}%
  {\XXint\scriptscriptstyle\scriptscriptstyle{#1}}%
  \!\int}
\def\XXint#1#2#3{{\setbox0=\hbox{$#1{#2#3}{\int}$} \vcenter{\hbox{$#2#3$}}\kern-.5\wd0}}
\def\dashint{\Xint-}
\nc{\Av}{\dashint_\cell}
\nc{\avg}[1]{\mbox{$\left \langle\ \!#1\!\!\ \right \rangle$}}
\nc{\intRd}{\int_{{\mathbb R}^d}}
\nc{\abs}[1] {\lvert #1 \rvert} 
\nc{\norm}[2] {{\lVert #1 \rVert}_{#2}} 
\nc{\normH}[1]{\|#1\|_{H^1}^2}
\nc{\normL}[1]{\|#1\|_2^2}
\nc{\nl}[1]{|#1|^{2\sigma}}
\nc{\nlb}[1]{#1^{2\sigma+1}}
\nc{\Linvd}{L_\delta^{-1}}
\nc{\Linvz}{L_0^{-1}}
\nc{\Linvzs}{L_0^{-2}}
\nc{\st}[1]{\stackrel{(\ref{eq:#1})}{=}}
\nc{\stt}[2]{\stackrel{(\ref{eq:#1}),(\ref{eq:#2})}{=}}
\nc{\veps}{\epsilon}
\nc{\Ue}{U_\eps}
\nc{\koe}{\frac{k}{\veps}} %% k over \varepsilon
\nc{\wrat}{w_{\rm ratio}}
\nc{\aij}{A^{ij}}                  % inverse effective mass tensor: option 3
\nc{\minv}{m_*^{-1}} % inverse effective mass
\nc{\sqminv}{m_*^{-\frac{1}{2}}}
\nc{\G}{\gamma_{\rm ef\,\!f}}
\nc{\zetap}{{\zeta_*}}
\nc{\zetas}{{\zeta_{1*}}}
\nc{\Pe}{P_{\rm edge}}
\nc{\slope}{\dfrac{d\cP[u(\cdot;\mu)] }{d\mu}}
\nc{\moot}{\mu_2}
\nc{\mum}{\mu_{\rm min}}
\def\R{\mathbb{R}}
\def\T{\mathbb{T}}
\nc{\bk}{{\bf k}}
\nc{\bq}{{\bf q}}
\nc{\br}{{\bf r}}
\nc{\bx}{{\bf x}}
\nc{\by}{{\bf y}}
\nc{\bz}{{\bf z}}
\nc{\bl}{{\bf l}}
\nc{\bnu}{{\bf \nu}}
\nc{\bxc}{{\bf x}_c}
\nc{\bxi}{{\bold\xi}}
\nc{\cP}{{\cal P}}
\nc{\cPedge}{{\cal P}_{edge}}
\nc{\cPc}{{\cal P}_{cr}} % critical power (Townes)
\nc{\cell}{{\cal B}}
\nc{\cDD}{\Lambda} % changed notation
\nc{\cF}{{\cal F}}
\nc{\cG}{{\cal G}}
\nc{\cO}{{\cal O}}  
\nc{\cQ}{{\cal Q}}  
\nc{\cR}{{\cal R}}
\nc{\cI}{{\cal I}}
\nc{\cK}{{\cal K}}
\nc{\cL}{{\cal L}} 
\nc{\cM}{{\cal M}}
\nc{\cN}{{\cal N}} 
\nc{\cE}{{\cal E}}
\nc{\cH}{{\cal H}} 
\nc{\cS}{{\cal S}} 
\nc{\cX}{{\cal X}}
\nc{\cZ}{{\cal Z}} 
\nc{\cT}{{\cal T}} 
\nc{\order}{{\cal O}}
\newcommand{\blochn}[2]{\mathcal{T}_{#1} \left \{ #2 \right \}}
\newcommand{\innero}[2]{\left \langle #1, #2 \right \rangle_{L^2(\Omega)}}
\newcommand{\innerr}[2]{\left \langle #1, #2 \right \rangle_{L^2(\real^d)}}
\newcommand{\Tb}[2]{\mathcal{T}_{#1} \left \{ #2 \right \}}
\newcommand{\real}{\mathbb{R}}
\newcommand{\integer}{\mathbb{Z}}
\newcommand{\kv}{\mathbf{k}}
\newcommand{\Ls}{L^2_\textrm{symm}(\Omega)}
\newcommand{\mv}{\mathbf{m}}
\newcommand{\kav}{\boldsymbol{\kappa}}
\newcommand{\lv}{\mathbf{l}}
\newcommand{\x}{\mathbf{x}}
\newcommand{\y}{\mathbf{y}}
\newcommand{\z}{\mathbf{z}}
\newcommand{\err}{\Psi^\eps}
\newcommand{\errl}{\Psi^\eps_{\textrm{near}}}
\newcommand{\errh}{\Psi^\eps_{\textrm{far}}}
\newcommand{\errtl}{\tilde{\Psi}^\eps_{\textrm{near}}}
\newcommand{\errth}{\tilde{\Psi}^\eps_{\textrm{far}}}
\newcommand{\errthb}{\tilde{\Psi}^\eps_{\textrm{far},b}}
\newcommand{\eps}{\varepsilon}
\newcommand{\bigo}{\mathcal{O}}
\newcommand{\D}{\partial}
\newtheorem{thm}{Theorem}[section]
\newtheorem{rem}[thm]{Remark}
\nc{\pulse}{%
  \begin{picture}(60,50)(0,-20)
  \qbezier(0,0)(5,0)(10,15)
  \qbezier(10,15)(15,30)(20,30)
  \qbezier(20,30)(22,30)(25,18)
  \qbezier(25,18)(27,10)(30,10)
  \qbezier(30,10)(34,10)(38,15)
  \qbezier(38,15)(42,20)(45,20)
  \qbezier(45,20)(49,20)(53,10)
  \qbezier(53,10)(57,0)(60,0)
  \multiput(0,0)(0,-2){10}{\line(0,-1){1}}
  \multiput(60,0)(0,-2){10}{\line(0,-1){1}}
  \put(30,-20){\vector(-1,0){30}}
  \put(30,-20){\vector(1,0){30}}
  \put(0,-15){\makebox(60,10){{\tiny $T$}}}
  \end{picture}}
\begin{document}

\title{Defect Modes and Homogenization of Periodic Schr\"{o}dinger Operators}

\author{M. A. Hoefer\footnotemark[2] \and
  M. I. Weinstein\footnotemark[3]}

\maketitle

\renewcommand{\thefootnote}{\fnsymbol{footnote}}

\footnotetext[2]{Department of Mathematics, North
  Carolina State University, Raleigh, NC 27695; mahoefer@ncsu.edu}
\footnotetext[3]{Department of Applied Physics and Applied
  Mathematics, Columbia University, New York, NY 10027; miw2103@columbia.edu}

\renewcommand{\thefootnote}{\arabic{footnote}}

\begin{abstract}
  We consider the discrete eigenvalues of the operator
  $H_\eps=-\Delta+V(\x)+\eps^2Q(\eps\x)$, where $V(\x)$ is periodic
  and $Q(\y)$ is  localized on $\R^d,\ \
  d\ge1$. For $\eps>0$ and sufficiently small, discrete eigenvalues
  may bifurcate (emerge) from spectral band edges of the periodic Schr\"odinger
  operator, $H_0 = -\Delta_\x+V(\x)$, into spectral gaps.  The nature
  of the bifurcation depends on the homogenized Schr\"odinger operator
  $L_{A,Q}=-\nabla_\y\cdot A \nabla_\y +\ Q(\y)$. Here, $A$ denotes
  the inverse effective mass matrix, associated with the spectral band
  edge, which is the site of the bifurcation.
\end{abstract}

\begin{keywords}
  multiple scales, Lyapunov-Schmidt reduction, eigenvalue bifurcation,
  spectral band edge
\end{keywords}

\begin{AMS}
  % 35 - PDEs
  35B27, % Homogenization
  35B32, % Bifurcation
  35C20, % Asymptotic expansions
  35J10 % Schrodinger operator
\end{AMS}

\pagestyle{myheadings} \thispagestyle{plain} \markboth{M. A. Hoefer
  and M. I. Weinstein}{Defect Modes and Homogenization of Periodic
  Schr\"{o}dinger Operators}

\section{Introduction and Outline}
\label{sec:intr-outl-results}

Self-adjoint elliptic partial differential operators with periodic
coefficients {\it e.g.}  the Schr\"odinger
operator with a periodic potential, the time-harmonic Helmholtz
equation with variable refractive index, and the time-harmonic Maxwell
equations with variable dielectric and permeability tensors, play a
central role in wave propagation problems in classical and quantum
physics. The spectrum of such operators, characterized by
Floquet-Bloch theory \cite{RS4,Kuchment-01,Eastham-73}, consists of
the union of closed intervals (spectral bands). The eigenstates are
{\it extended} (not localized) and form a complete set with respect to
which any function in $L^2(\R^d)$ may be represented.
 
In many problems in fundamental and applied physics, periodic media
are perturbed by spatially localized defects. These may appear as
random imperfections in a media, {\it e.g.} a defect in a crystal, or
in engineering applications, they may be introduced deliberately in
order to influence wave propagation
\cite{Ashcroft-Mermin:76,Joannopoulos-etal:08}.  Since the essential
spectrum is unchanged by a sufficiently localized and smooth
perturbation (Weyl's theorem, \cite{RS4}), typical localized
perturbations will only introduce eigenvalues in {\it spectral gaps}
of the spectrum with associated localized {\it defect modes}.
  %%%

This paper is concerned with a class of localized (defect)
perturbations to a periodic Schr\"odinger operator of the form:
\begin{equation}
  \label{eq:81}
  H_\eps\ =\  -\Delta_\x + V(\x) + \eps^2Q(\eps \x),
  \nn\end{equation}
where $V(\x)$ is periodic on $\R^d$, $Q(\y)$ decays as $|\y|$
tends to infinity and $\eps$ is a small parameter.

Our main result, Theorem \ref{sec:main-theorem}, concerns the
perturbed eigenvalue problem
\begin{equation}
  \label{eq:8}
  \begin{split}
    H_\eps u_\eps = \mu_\eps u_\eps, \quad u_\eps \in H^1(\real^d) ,%,\ \
    %u\in L^2(\R^d).
  \end{split}
\end{equation}
for $\varepsilon$ positive and sufficiently small.
See section \ref{sec:main-results} for hypotheses on the periodic
potential, $V$, and the localized perturbation, $Q$.
 
For $\eps$ sufficiently small, we prove the bifurcation of discrete
eigenvalues into the spectral gaps, associated with the unperturbed
operator, $H_0=-\Delta + V(\x)$. For any given spectral band edge, we
give detailed expansions with error estimates for the perturbed
eigenvalues and corresponding localized eigenfunctions in terms of the
eigenstates of a {\it homogenized} Schr\"odinger operator
\begin{equation}
  L_{A,Q}\ =\ -\sum_{j,l=1}^d \ \frac{\partial}{\partial y_j}\ A_{jl}\
  \frac{\partial}{\partial y_l}\ +\ Q(\y). 
\label{LAQ}
\end{equation}
Here, $A_{jl}$ denotes the {\it inverse effective mass matrix},
associated with the particular band edge from which the bifurcation
occurs; see Theorem \ref{sec:main-theorem}. $A_{jl}$, derivable by
formal multiple scale expansion (see section
\ref{sec:homog-multi-scale}), is expressible in terms of the band edge (Floquet-Bloch) 
eigenstate.  It is  proportional to the Hessian
matrix $D^2E_{b_*}(\kv_*)$ of the band dispersion function, associated with $-\Delta+V(\x)$,  evaluated
at the band edge $E_*=E_{b_*}(\kv_*)$.  \bigskip
 
Referring to the schematics of figures \ref{fig:band_edge} and
\ref{fig:multiple_degenerate}, we discuss our results.
\begin{itemize}
\item Suppose the inverse effective mass matrix, $A$, is positive definite and
  assume $L_{A,Q}$ has an eigenvalue, $e_{A,Q}<0$. This occurs if
  $Q(\y)$ is a 
  ``down-defect'' (sufficiently ``deep'' in dimensions $d\ge3$) as in figure \ref{fig:band_edge}.a. In this case,
  Theorem \ref{sec:main-theorem} asserts the existence of an
  eigenvalue at $E_*+\eps^2e_{A,Q}+\cO(\eps^3)<E_*$.
\item Now suppose the inverse effective mass matrix, $A$, is 
  negative definite and $L_{A,Q}$ has an eigenvalue,
  $e_{A,Q}>0$. This occurs if $Q(\y)$ is a  ``up-defect'' (sufficiently ``high'' in
  dimensions $d\ge3$) as in figure
  \ref{fig:band_edge}.b. In this case, Theorem \ref{sec:main-theorem}
  asserts the existence of an eigenvalue at
  $E_*+\eps^2e_{A,Q}+\cO(\eps^3)>E_*$
\item Fig.~\ref{fig:multiple_degenerate} shows a more general band
  edge bifurcation when $L_{A,Q}$ has three eigenvalues $e_{A,Q}^{(1)}
  < e_{A,Q}^{(2)} < e_{A,Q}^{(3)}$, the largest of which is degenerate
  with multiplicity three.  Theorem \ref{sec:main-theorem} asserts the
  existence of five ordered eigenvalues at $E_* + \eps^2 e_{A,Q}^{(j)}
  + \cO(\eps^3),\ j=1,2$ and $E_* + \eps^2 e_{A,Q}^{(3)} + \eps^3
  \mu_3^{(k)} + \cO(\eps^4),\ k=1,2,3$.
\end{itemize}
 \begin{figure}
  \centering
  % Not keeping aspect ratio by default for some reason?!?!
  \includegraphics[width=12.5cm,height=6.25cm]{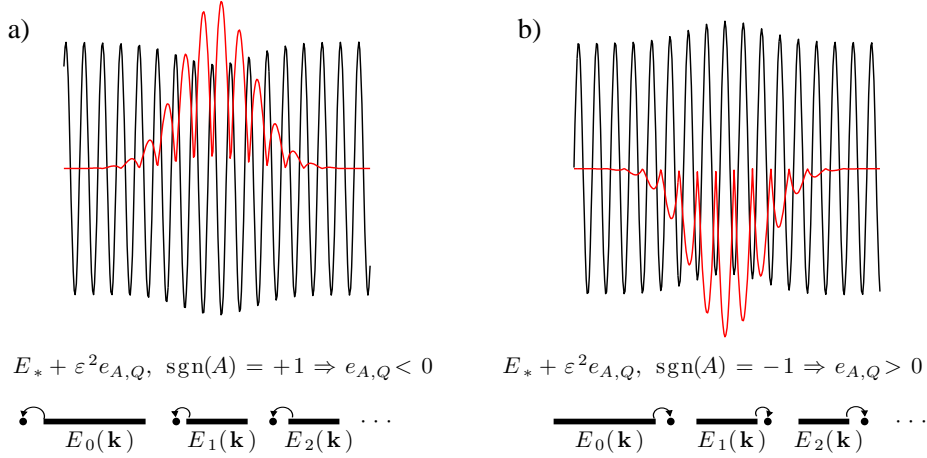}
  \caption{a) Periodic structure with ``down defect'' and
    corresponding localized eigenstate for the case of positive
    definite effective mass tensor.  b) Periodic structure with ``up
    defect'' and corresponding localized eigenstate for the case of
    negative definite effective mass tensor.  Below are shown
    eigenvalue bifurcations from band edges of the form
    $E_{b_*}(\kv_*) = E_*$.}
  \label{fig:band_edge}
\end{figure}

 \begin{figure}
  \centering
  % Not keeping aspect ratio by default for some reason?!?!
  \includegraphics{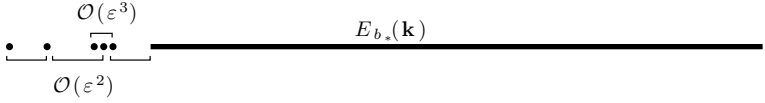}
  \caption{Schematic of band edge bifurcations in the case where the inverse effective mass matrix,  $A$, is positive
    definite. The homogenized operator, $L_{A,Q}$, is assumed to have two simple eigenvalues and one degenerate
    eigenvalue with multiplicity three.}
  \label{fig:multiple_degenerate}
\end{figure}
\bigskip

\subsection{Outline of the paper and overview of the proof}
\label{sec:outline}
Section \ref{sec:spectr-theory-peri} summarizes the required spectral
theory for Schr\"odinger operators with periodic potentials and
introduces variants of the classical Sobolev space, $H^s(\R^d)$, which provide a
natural functional analytic setting. Section \ref{sec:main-results}
contains the hypotheses on $V$ and $Q$ and the statement of our main
theorem, Theorem \ref{sec:main-theorem}. In section
\ref{sec:homog-multi-scale} we present a {\it formal} multiple scale /
homogenization expansion in which we systematically construct
bifurcating eigenstates and eigenvalues to any prescribed order. In
section \ref{sec:proof-theorem-} we prove Theorem
\ref{sec:main-theorem}. In particular, we study the equations
governing the correction, $\Psi^\eps$ to the $N-$ term multiple scale
expansion.

To obtain error bounds of suitably high order in $\eps$, we use a
Lyapunov-Schmidt approach.  Specifically, we decompose the error into
Floquet-Bloch modes associated with energies lying near the spectral
band edge, $E_*$, and those lying ``far" from $E_*$:
$\Psi^\eps=\Psi^\eps_{\rm near}+\Psi^\eps_{\rm far}$.  
% Each term in the formal expansion of section
% \ref{sec:homog-multi-scale} is expressible in terms of modes,
% localized at the band edge.
$\Psi^\eps_{\rm near}$ has the character of a wave-packet, spectrally
supported on a small interval with endpoint $E_*$. The next step is to
solve for $\Psi^\eps_{\rm far}$ as a functional of the ``parameter''
$\Psi^\eps_{\rm near}$, with appropriate bounds. Substitution of
$\Psi^\eps_{\rm far}[\Psi^\eps_{\rm near}]$ into the near equation
implies a closed equation for $\Psi^\eps_{\rm near}$.  With strong
motivation from the structure of terms in the multiple scale
expansion, we appropriately rescale, solve via the implicit function
theorem, and estimate $\Psi^\eps_{\rm near}$. The approach we take has
been applied in the context of the nonlinear Schr\"odinger /
Pitaevskii equation in
\cite{sparber_effective_2006,Pel-Sch:07,dohnal_coupled_2008,dohnal_coupled-mode_2009,Ilan-Weinstein:10}.\bigskip

{\bf Previous work for linear Schr\"odinger operators:} Bifurcation of
eigenvalues from the edge of the continuous spectrum for Schr\"odinger
operators with small decaying potentials, corresponding to weak
defects in dimensions one and two for the case of a homogeneous medium
or vacuum ($V\equiv0$), was studied in \cite{Simon:76}.  Conditions
ensuring the existence of eigenvalues in the gaps of periodic
potentials were obtained in \cite{Alama-Deift-Hempel:89} and
\cite{Figotin-Klein:97,Figotin-Klein:98}, using the Birman-Schwinger
(integral equation) formulation of the eigenvalue problem.
  Homogenization theory was applied to
obtain eigenvalues in the spectral gaps of a class of periodic
divergence form elliptic operators, governing localized states in high
contrast media in \cite{Kamotski-Smyshlyaev:06,Cherdantsev:09}.  An
elementary variational argument in spatial dimensions one and two,
yielding general conditions for the existence of discrete modes in
spectral gaps of periodic potentials, was recently presented in
\cite{Johnson-etal:10}.  More general, variational methods can be
applied to obtain defect modes which are obtained as infinite
dimensional saddle points of {\it strongly indefinite} functionals;
see, for example, \cite{dolbeault_eigenvalues_2000}.

{\it Our results concern a particular class of {\it weak defects},
  slowly varying and of small amplitude: $\eps^2Q(\eps\x)$, which give
  rise to defect modes in \underline{any} spatial dimension.  We note
  that the one- and two-term truncated multi-scale homogenization
  expansion of defect modes, which we construct, are natural trial
  functions for a variational proof of existence of ground states; see
  the discussion in Appendix \ref{sec:vari-exist-proof}.  Note also
  that the scaling of the perturbing potential, $\eps^2 Q(\eps\x)$,
  also arises naturally in solitary standing wave (``soliton defect
  mode'') bifurcations from band edges of periodic potentials in the
  nonlinear Schr\"odinger / Gross-Pitaevskii equation
  \cite{Ilan-Weinstein:10}.}
  
Homogenization theory has been used to study periodic elliptic
divergence form operators near spectral band edges in
\cite{Birman:04,Birman-Suslina:06,allaire_periodic_2008}.
Homogenization results for the {\it time-dependent} Schr\"{o}dinger
equation with a scaling, equivalent to the one considered here, were
obtained by two-scale convergence methods in
\cite{allaire_homogenization_2005}; see also
\cite{poupaud_semi-classical_1996,bechouche_semiclassical_1999,allaire_periodic_2008}.
In \cite{allaire_homogenization_2005} the contrast between the scaling
we use and the semi-classical scaling is discussed.  These results
establish the validity of the homogenized time-dependent Schr\"odinger
equation on certain {\it finite} time scales. The results of the
present paper focus on a subclass of solutions, bound states, which
are controlled on {\it infinite} time scales.

Finally, we mention work on effective classical electron motion in
solid state physics, derived from the Schr\"odinger equation for an
electron in a spatially periodic Hamiltonian, perturbed by spatially
slowly varying electrostatic and magnetic potentials
\cite{Nenciu:91,PanatiSpohnTeufel:02,PanatiSpohnTeufel:03}, in a
semi-classical limit.  \bigskip

\textbf{Acknowledgments:} This research was initiated while MAH was an
NSF Postdoctoral Fellow under DMS-08-03074 in the Department of
Applied Physics and Applied Mathematics at Columbia University. MIW
was supported in part by NSF grant DMS-07-07850 and DMS-10-08855. MIW
would also like to acknowledge the hospitality of the Courant
Institute of Mathematical Sciences, where he was on sabbatical during
the preparation of this article.

\subsection{Notation and conventions}
\label{sec:notation}

We note that we may, without loss of generality, restrict to the case
where the fundamental period cell is $\Omega=[0,1]^d$. Indeed, let
${\cal B}$ denote the fundamental period cell, spanned by the linearly
independent vectors $\{\br_1, \dots, \br_d\}$ and define the constant
matrix $\cR^{-1}$ to be the matrix whose $j^{\rm th}$ column is
$\br_j$. Then, under the change of coordinates
$\bx\mapsto\bz=\cR\bx$,
\begin{align}
  &-\nabla_\bx\cdot \nabla_\bx\ +\ V(\bx)\ {\rm acting\ on}\
  L_\textrm{per}^2(\cell)\ \ {\rm transforms\ to}\nn\\ 
  & -\nabla_\bz\cdot \alpha\ \nabla_\bz
  + \tilde{V}(\bz) \equiv\ -\ \sum_{i,j=1}^d \alpha_{ij}\
  \frac{\D^2}{\D z_i\D z_j}\ +\ \tilde{V}(\bz)\nn\\ 
  & {\rm acting\ on}\ L^2_\textrm{per}\left([0,1]^d\right)\
  {\rm where}\nn\\
  &\alpha= \cR \cR^T,\ \
  \tilde{V}(\bz)=V\left(\cR^{-1}\bz\right),\ \ \bx=\cR^{-1}\bz\ .
  \nn
\end{align}

\bigskip

\begin{enumerate}
\item Integrals with unspecified region of integration are assumed to be taken over $\R^d$,
 {\it i.e.} $\int f\ =\int_{\R^d} f(\x) d\x$.
\item For $f, g\in L^2$, the Fourier transform and its inverse are given by:
\begin{align}\label{FTdef}
  \mathcal{F}\{f\}(\kv) &\equiv \widehat{f}(\kv) = \int e^{-2\pi
    i \kv \cdot \x} f(\x)\, d\x,\\ 
  \mathcal{F}^{-1}\{g\}(\x)  &\equiv\ \widecheck{g}(\kv) = \int e^{2\pi
    i \x \cdot \kv} g(\kv)\, d\kv .\nn
\end{align}
Thus, $ \mathcal{F}\ \mathcal{F}^{-1}=Id$.  \item $\Omega =
  [0,1]^d$ is the fundamental period cell, $\Omega^* = [-1/2,1/2]^d$
  is the dual fundamental cell or Brillouin zone
\item $1_A(\x)$ is the indicator function of the set $A$; $\chi(|\kv|
  \le a) \equiv 1_{\{\kv \in \Omega^* \, : \, |\kv | \le a
    \}}$
\item The repeated index summation convention is used throughout
\item Fourier spectral cutoff:
\begin{align}
  \chi(|\nabla | < a) G(\x) &\equiv\ \left(\cF^{-1} \chi(|\kv|<a)
    \cF\right) G \ =\ \int e^{2\pi i \x \cdot \kv}
  \chi(|\kv| < a) \widehat{G}(\kv) d\kv\nn\\
  \end{align}
\item $\cT$ and $\cT^{-1}$ denote the Gelfand-Bloch transform and its
  inverse; see section~\ref{sec:spectr-theory-peri}.
\item Bloch spectral cutoff:
  \begin{align}
    \chi(|\nabla | < a) G(\x)&\equiv \mathcal{T}^{-1} \left \{ \sum_{b \ge
        0} \chi(|\cdot | 
      < a\delta_{bb_*}) \blochn{b}{G}(\cdot) \,p_b(\x; \cdot) \right
    \}(\x),\nn
  \end{align}
  where $b_*$ is the index of the spectral band under
    consideration,
\item $H^s=H^s(\R^d)$ is the Sobolev space of order $s$
  \begin{equation}
    \label{eq:86}
    \| f \|_{H^s}^2 \equiv \sum_{|\alpha| \le s} \| \partial^\alpha f
    \|^2_{L^2} \sim \| \widehat{f} \|_{L^{2,s}}^2 = \int_{\real^d} (1
    + |\kv|^{s\over 2})^2 | \hat{f}(\kv) |^2 \, d\kv 
  \end{equation}
\end{enumerate}

%\section{Background}
%\label{sec:background}

\section{Spectral Theory for Periodic Potentials}
\label{sec:spectr-theory-peri}

%\subsection{Floquet-Bloch Theory}
%\label{sec:floquet-bloch-theory}

In this section we summarize basic results on the spectral theory of
Schr\"odinger operators with periodic potentials; see, for example,
\cite{RS4,Kuchment-01,Eastham-73}.
\medskip

\noindent{\bf Gelfand-Bloch transform:}\ \ Given $f\in L^2(\real^d)$,
we introduce the transform $\mathcal{T}$ and its inverse as follows
\begin{align}
  \label{eq:43}
  \mathcal{T}\{f(\cdot)\}(\x;\kv) = &\tilde{f}(\x;\kv) = \sum_{\z \in
    \integer^d} e^{2\pi i \z \cdot \x} \widehat{f}(\kv+\z),  \\
  \mathcal{T}^{-1}\{\tilde{f}(\x;\cdot) \}(\x) &= 
  \int_{\Omega^*} e^{2\pi i \x \cdot \kv} \tilde{f}(\x;\kv) d\kv .
\end{align}
One can check that $ \mathcal{T}^{-1} \mathcal{T}=Id$.  

Two important properties of the transformation $ \mathcal{T}$ are $
\mathcal{T}\D_{x_j}f=\left(\D_{x_j}+2\pi i k_j\right) \mathcal{T}f$
and $ \left(\mathcal{T} e^{2\pi i \kv\cdot }f\right)(\x,\kv)=e^{2\pi i
  \kv\cdot \x }\mathcal{T}f(\x,\kv)$. It follows that
\begin{align}
  \left( \mathcal{T} \Phi(\nabla) e^{2\pi i \kv\cdot }f\right)(\x,\kv)
  &= e^{2\pi i \kv\cdot x} \Phi(\nabla+2\pi i\kv)
  \left(\mathcal{T}f\right)(\x,\kv)\label{TF-conj}\\
  \label{eq:116}
\mathcal{T} \left(v(\cdot) f\right)(\x,\kv) &= v(\x)
  \left(\mathcal{T} f\right)(\x,\kv),\ \ {\rm if\ } v\ {\rm is\
    periodic.}
\end{align}
\medskip

\noindent {\bf Floquet-Bloch states:} We seek solutions of the eigenvalue
equation
\begin{equation}
  \left(-\Delta + V(\x)\right) u(\x) = E u(\x)\label{ev-eqn}
\end{equation}
in the form $u(\x;\kv)=e^{2\pi i\kv\cdot\x}p(\x;\kv),\ \kv\in\Omega^*$
where $p(\x;\kv)$ is periodic in $\x$ with fundamental period cell
$\Omega$. $p(\x;\kv)$ then satisfies the periodic elliptic boundary
value problem:
\begin{equation}
  \left(-\left(\nabla+2\pi i\kv\right)^2 + V(\x)\right) p(\x;\kv) = E(\kv)
  p(\x;\kv)\label{k-ev-eqn}, \quad \x \in \T^d .
\end{equation}
For each $\kv\in\Omega^*$, the eigenvalue problem \eqref{k-ev-eqn} has
a discrete set of eigenpairs \\
$\{\ p_b(\x;\kv),\ E_b(\kv)\ \}_{b\ge0}$
which form a complete orthonormal set in $L^2_{\rm per}(\Omega)$.  The
spectrum of $-\Delta + V(\x)$ in $L^2(\real^d)$ is the union of closed
intervals
\begin{equation}
  \label{eq:87}
  \textrm{spec}(-\Delta + V) = \bigcup_{b
        \ge 0, ~ \kv \in \Omega^*} E_b(\kv) .
\end{equation}
We will study the bifurcation of eigenvalues from the band edge
\begin{equation}
  \label{eq:110}
  E_* \equiv E_{b_*}(\kv_*), \quad k_{*,j} \in \{0, 1/2 \}, \quad j =
  1, \ldots, d ,  
\end{equation}
with the associated, real-valued band edge eigenfunction
\begin{equation}
  \label{eq:52}
  w(\x) \equiv e^{2\pi i \kv_* \cdot \x} p_{b_*}(\x;\kv_*) \in L^2(\Omega) .
\end{equation}
For example, the lowest band edge is $E_0(0)$ and the associated
eigenfunction is periodic $p_0(\x + \mathbf{e}_j;0) = p_0(\x;0)$,
$j=1,\ldots, d$ for the standard Cartesian basis vectors
$\{\mathbf{e}_j\}_{j=1}^d$.
\begin{rem}
  \label{sec:spectr-theory-peri-1}
  Since 
  \begin{equation}
  w(\x + \mathbf{e}_j) = e^{2\pi i k_{*,j}} w(\x) = s_j w(\x),
  \label{w-symm}
  \end{equation}
  where $s_j = +1$ if $k_{*,j} = 0$ and $s_j = -1$ otherwise, the
  natural function space to work in is $\Ls$, i.e.~$f \in \Ls$ if $f
  \in L^2(\Omega)$ and $f(\x + \mathbf{e}_j) = s_j f(\x)$.  Without
  loss of generality, and for ease of presentation, we focus on the
  case where $s_j = +1$, $j= 1, \ldots, d$ so that $\Ls =
  L^2_\textrm{per}(\Omega)$, the space of square integrable, periodic
  functions.  This implies that $\kv_* = 0$.  The more general case in
  eq.~(\ref{eq:110}) can be handled by taking $\kv \to (\kv - \kv_*)$
  and interpreting values of $\kv$ reflected about the boundary of
  $\Omega^*$.  The simplicity of $E_*$ and the relation $\nabla
  E_{b_*}(\kv_*) = 0$ (see Hypothesis H2 in
  Sec.~\ref{sec:main-results}) implies that $E(\kv' + \kv_*)$ can be
  extended as an even function of $k'_j = k_j - k_{j,*}$ for $j = 1,
  \ldots, d$ \cite{RS4}.
\end{rem}

We will make repeated use of the following self-adjoint operator
\begin{equation}
  \label{eq:54}
  L_* \equiv - \Delta + V(\x) - E_*: H_\textrm{per}^2(\Omega) \to
  L^2(\Omega) .
\end{equation}
\medskip

\noindent{\bf Projections $\mathcal{T}_b$ and Completeness of Floquet
  Bloch states:}\ Define
\begin{equation}
  \label{eq:82}
  \mathcal{T}_b\{f\}(\kv) \equiv \langle
  p_b(\cdot;\kv), \tilde{f}(\cdot;\kv) \rangle_{L^2(\Omega)} \equiv \int_{\Omega}
  \overline{p_b(\x;\kv)}\  \tilde{f}(\x;\kv)\, d\x .
\end{equation}
By completeness of the $\{p_b(\x;\kv)\}_{b\ge0}$
\begin{equation*}
  \tilde{f}(\x;\kv) = \sum_{b\ge0}  \mathcal{T}_b\{f\}(\kv)\ p_b(\x;\kv)
\end{equation*}
Furthermore, applying ${\cal T}^{-1}$  we have 
\begin{align}
  f(\x) &= \sum_{b\ge0}  \int_{\Omega^*}\ \mathcal{T}_b\{f\}(\kv)\
  u_b(\x;\kv)\ d\kv 
  \label{complete1}\\ &= \sum_{b\ge0}\ \int_{\Omega^*}\ \langle
  u_b(\cdot;\kv),f\rangle_{L^2(\real^d)}\ u_b(\x;\kv)\ d\kv ,
  \label{complete2}
\end{align}
where $u_b(\x;\kv) = e^{2\pi i \kv \cdot \x} p_b(\x;\kv)$.  The second
equality follows from an application of the Poisson summation formula.
\medskip

\noindent{\bf Sobolev spaces and the Gelfand-Bloch transform:}\\
Recall the Sobolev space, $H^s$, the space of functions with
square-integrable derivatives of order $\le s$. Since $E_0(0)=\inf\
{\rm spec}(-\Delta+V)$, then $L_0=-\Delta+V(\x)-E_0(0)$ is a
non-negative operator and $H^s(\R^d)$ has the equivalent norm defined
by
\begin{equation}
  \|\phi\|_{H^s}\ \sim \ \| (I+L_0)^{s\over2}\phi\|_{L^2}\
  \nn\end{equation}
Introduce the space (see, {\it e.g.}
\cite{Pel-Sch:07,dohnal_coupled-mode_2009}) 
\begin{equation}
  \label{eq:67}
  {\cal X}^s = L^2(\Omega^*,l^{2,s}),
\end{equation}
with norm
\begin{equation}
  \label{eq:68}
  \| \tilde{\phi} \|^2_{{\cal X}^s} \equiv \int_{\Omega^*}
  \sum_{b = 0}^\infty \left(1 + |b|^{2\over d}\right)^s  |
  \mathcal{T}_b \{ \phi \} (\kv) |^2 \, d\kv . 
\end{equation}
Now note that 
\begin{align}
  \|\phi\|_{H^s}^2 &\sim\| (I+L_0)^{s\over2}\phi\|_{L^2}^2\ \nn\\
  &=\ \left\|\
    \int_{\Omega^*}\ e^{2\pi i\kv\cdot }\sum_{b\ge0}
    \mathcal{T}_b\{\phi \}(\kv)
    \left(1+ E_b(\kv) - E_0(0) \right)^{\frac{s}{2}} \ p_b(\cdot,\kv)\
  d\kv \right\|_{L^2}^2\nn\\ 
  &=\sum_{b\ge0}\ \int_{\Omega^*}\ |\mathcal{T}_b \{\phi\}(\kv)|^2\
  |1+ E_b(\kv) - E_0(0) |^{s}\ d\kv
  \nn\\
  &\sim\ \sum_{b\ge0}\ \left(1+|b|^{2\over d}\right)^{s}\
  \int_{\Omega^*}\ |\mathcal{T}_b \{\phi\}(\kv)|^2\ d\kv
  \nn\\
  &\equiv \| \tilde{\phi} \|^2_{{\cal X}^s} .
  \label{isonorm-pf}
\end{align}
The second to last line follows from the Weyl asymptotics
$E_b(\kv)\sim b^{2\over d}$ \cite{Hor4}.  Thus we have
\begin{proposition}
  $H^s(\real^d)$ is isomorphic to ${\cal X}^s$ for $s \ge
  0$. Moreover, there exist positive constants $C_1$, $C_2$ such that
  for all $\phi\in H^s(\real^d)$
  \begin{equation}
    \label{eq:69}
    C_1 \|\phi\|_{H^s(\real^d)} \leq  \|
    \tilde{\phi} \|_{{\cal X}^s}  \leq C_2 \|\phi\|_{H^s(\real^d)} \ . 
  \end{equation}
\end{proposition}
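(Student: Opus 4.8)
The plan is to make rigorous the formal chain of equivalences already displayed in \eqref{isonorm-pf}. The starting point is the norm equivalence $\|\phi\|_{H^s}\sim\|(I+L_0)^{\frac{s}{2}}\phi\|_{L^2}$ recorded above: since $E_0(0)=\inf\,{\rm spec}(-\Delta+V)$, the operator $L_0=-\Delta+V-E_0(0)$ is non-negative and self-adjoint, so $I+L_0\ge I$ is boundedly invertible and $(I+L_0)^{\frac{s}{2}}$ is a well-defined non-negative self-adjoint operator for every $s\ge0$. It therefore suffices to compare $\|(I+L_0)^{\frac{s}{2}}\phi\|_{L^2}$ with $\|\tilde\phi\|_{\cX^s}$.

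First I would invoke the Plancherel identity for the Gelfand--Bloch transform, $\|\psi\|_{L^2(\real^d)}^2=\sum_{b\ge0}\int_{\Omega^*}|\mathcal{T}_b\{\psi\}(\kv)|^2\,d\kv$, which follows from the orthonormality of $\{p_b(\cdot;\kv)\}_{b\ge0}$ in $L^2(\Omega)$ together with the completeness/inversion formulas \eqref{complete1}--\eqref{complete2}. Since each Floquet--Bloch state $u_b(\cdot;\kv)=e^{2\pi i\kv\cdot}p_b(\cdot;\kv)$ solves $L_0\,u_b(\cdot;\kv)=(E_b(\kv)-E_0(0))\,u_b(\cdot;\kv)$ by \eqref{k-ev-eqn}, the transform diagonalizes $L_0$, hence also $(I+L_0)^{\frac{s}{2}}$, giving $\mathcal{T}_b\{(I+L_0)^{\frac{s}{2}}\phi\}(\kv)=(1+E_b(\kv)-E_0(0))^{\frac{s}{2}}\,\mathcal{T}_b\{\phi\}(\kv)$. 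Applying Plancherel with $\psi=(I+L_0)^{\frac{s}{2}}\phi$ yields
\begin{equation*}
  \|(I+L_0)^{\frac{s}{2}}\phi\|_{L^2}^2=\sum_{b\ge0}\int_{\Omega^*}\bigl(1+E_b(\kv)-E_0(0)\bigr)^{s}\,|\mathcal{T}_b\{\phi\}(\kv)|^2\,d\kv .
\end{equation*}
Comparing with the definition \eqref{eq:68} of $\|\tilde\phi\|_{\cX^s}^2$, the Proposition reduces to the uniform two-sided weight estimate: there exist constants $0<c\le C<\infty$ such that $c\,(1+|b|^{\frac{2}{d}})\le 1+E_b(\kv)-E_0(0)\le C\,(1+|b|^{\frac{2}{d}})$ for all $b\ge0$ and all $\kv\in\Omega^*$; raising to the power $s\ge0$, multiplying by $|\mathcal{T}_b\{\phi\}(\kv)|^2$ and integrating then gives \eqref{eq:69}.

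This weight estimate is the core of the argument, and the step I expect to be the main obstacle, since the classical Weyl law \cite{Hor4} is a statement about the integrated counting function $\#\{(b,\kv):E_b(\kv)\le\lambda\}$ rather than a pointwise bound on the individual band functions that is uniform in $\kv$. For the lower bound, $E_b(\kv)\ge E_0(0)$ already gives $1+E_b(\kv)-E_0(0)\ge1$, and I would use the min--max principle to compare the eigenvalues of the periodic problem \eqref{k-ev-eqn} for $-(\nabla+2\pi i\kv)^2+V$ on $\Omega$ with those of the free shifted operator $-(\nabla+2\pi i\kv)^2$, whose spectrum is the explicit set $\{4\pi^2|\z+\kv|^2:\z\in\integer^d\}$; together with boundedness of $V$ (under the hypotheses of Section \ref{sec:main-results}) and lattice-point counting, this gives $c'\,b^{\frac{2}{d}}-\|V\|_{L^\infty}\le E_b(\kv)\le C'\,b^{\frac{2}{d}}+\|V\|_{L^\infty}$ for $b$ large, uniformly in $\kv$. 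For the finitely many remaining bands $|b|\le b_0$, continuity of $\kv\mapsto E_b(\kv)$ on the compact set $\overline{\Omega^*}$ bounds the weight above and below by positive constants, as does $1+|b|^{\frac{2}{d}}$. Splicing the two regimes produces the constants $c,C$, hence the constants $C_1,C_2>0$ in \eqref{eq:69}; the resulting two-sided norm bound is precisely the assertion that $H^s(\real^d)$ and $\cX^s$ are isomorphic for $s\ge0$.
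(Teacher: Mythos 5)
Your proof is correct and follows essentially the same route as the paper: diagonalize $(I+L_0)^{s/2}$ via the Gelfand--Bloch transform, apply the Plancherel identity, and then compare the weight $1+E_b(\kv)-E_0(0)$ with $1+|b|^{2/d}$ uniformly in $\kv$. The only difference is that you spell out the uniform two-sided Weyl-type bound via min--max comparison with the shifted Laplacian plus lattice-point counting, where the paper simply cites Weyl asymptotics; that is a legitimate and welcome filling-in of the detail rather than a different argument.
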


\section{Main Results}
\label{sec:main-results}
In this section we give a precise formulation of our main theorem,
Theorem \ref{sec:main-theorem}.  The following are our assumptions.
\begin{itemize}
\item[H1] \textbf{Regularity.}  $V \in
  L^\infty_{\textrm{per}}(\Omega)$, $Q \in H^\sigma(\real^d)$ for
  $\sigma > d$, $E_{b_*} \in C^3(\Omega^*)$, and $p_{b_*} \in
  L^2_\textrm{per}(\Omega; C^3(\Omega^*))$.
\item[H2] \textbf{Band edge.}  $E_* \equiv E_{b_*}(\kv_*)$,
  where $\kv_*$ is an endpoint of the  $b_*^{th}$ band 
  such that \subitem (a)\ $E_*$ is a simple eigenvalue with
  corresponding eigenfunction 
  \begin{equation}
  w(\x) \equiv e^{2\pi i \kv_* \cdot \x}
  p_{b_*}(\x;\kv_*) \in H^2_\textrm{per}(\Omega)
  \nn\end{equation}
   and normalization
  $\| w \|_{L^2(\Omega)} = 1$.  \subitem (b)\ $\nabla E_{b_*}(\kv_*) =
  0$. \subitem (c)\ The Hessian matrix,
 \begin{equation}
 A\equiv \frac{1}{8\pi^2} D^2E_{b_*}(\kv_*),
 \label{hessian}
 \end{equation}
 is sign definite.
\item[H3] \textbf{Existence of eigenvalue to homogenized equation.}\\
  Introduce the homogenized operator
  \begin{equation}
    L_{A,Q}\ \equiv  -\nabla_\y\cdot A\ \nabla_\y + Q(\y)\ =\
    -\sum_{j,l}\frac{\D}{\D y_j}A_{jl} \frac{\D}{\D y_l} + Q(\y) 
    \label{LAQ-def}
  \end{equation}
  Set $\textrm{sgn}(A)=+1$ if $A$ is positive definite and
  $\textrm{sgn}(A)=-1$ if $A$ is negative definite.  Assume $ L_{A,Q}$
  has a simple eigenvalue $e_{A,Q}$ with $\textrm{sgn}(A) e_{A,Q} < 0$
  and corresponding eigenfunction $F_{A,Q}(\y)\in H^2(\real^d)$; i.e.
 \begin{equation}
   \label{eq:111}
   L_{A,Q} F_{A,Q} = e_{A,Q} F_{A,Q}, \quad
   \int_{\real^d} F_{A,Q}^2(\y) d\y = 1, \quad \textrm{sgn}(A)\, e_{A,Q} < 0;
 \end{equation}
 see figure \ref{fig:q_spectrum}(a).
 \begin{figure}
   \centering
   \includegraphics{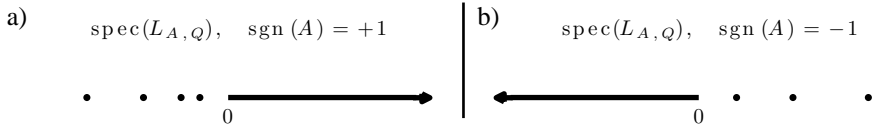}
   \caption{Discrete and continuous spectrum of $L_{A,Q}$.  a)
     Positive definite effective mass tensor.  b) Negative definite
     effective mass tensor.}
   \label{fig:q_spectrum}
 \end{figure}
% \bXYZ\footnote{\bXYZ\ I suggest a figure with subfigures (a) $sgn(A)=+1$, spectrum of $L_{A,Q}=$ point evalue(s)  
%  to the left of zero and continuous spec $[0,\infty)$ and (b) 
%  $sgn(A)=-1$, spectrum of $L_{A,Q}=$ point evalue(s)  
%  to the right of zero and continuous spec $(-\infty, 0]$ }
\end{itemize}
\begin{rem}
  For further details regarding the smoothness properties of $E_{b_*}$
  and $p_{b_*}$ with respect to $\kv$, we refer the reader to
  \cite{RS4,wilcox_theory_1978}.  It can be verified that
  hypothesis H2 holds in one dimension at all band edges
  \cite{Eastham-73} and at the lowest band edge in arbitrary
  dimensions \cite{Kirsch-Simon:87}.  Band edges with multiplicity
  greater than one exist, e.g.~for the separable potential $V(\x) =
  \sum_{j=1}^d V_1(x_j)$, $d \ge 2$.
\end{rem}

\medskip

\begin{theorem} 
  \label{sec:main-theorem}
  \nit (1) {\bf Positive definite effective mass tensor:}\ Assume
  hypotheses H1-H3, with $\textrm{sgn}(A)=+1$. Then, there exists
  $\eps_0 > 0$ such that for all $0<\eps<\eps_0$, \eqref{eq:8} has an
  eigenpair $\mu_\eps,\ u_\eps(\x;\mu_\eps) \in
  H^2(\real^d)$. $\mu_\eps$ lies in the spectral gap of
  $-\Delta+V(\x)$ at a distance $\cO(\eps^2)$ below the spectral band
  edge having  $E_*$ as its left endpoint.
  
  Moreover, to any order in $\eps$, this solution can be approximated
  by the two-scale homogenization expansion, see Eq.~\eqref{eq:46}, \eqref{eq:42},
  with error estimate
  \begin{align}
    \label{eq:96}
    \Big \| u_\eps(\cdot;\mu_\eps) - \sum_{n=0}^N \eps^n U_n(\cdot,\eps \cdot)
    \Big \|_{H^2(\real^d)} &\leq \eps^{N-1} C,\nn\\
    \Big | \mu_\eps - E_* - \eps^2 e_{A,Q} - \sum_{n=3}^N \eps^n \mu_n \Big |
    &\leq \eps^{N+1} C,
  \end{align}
  for all $N \geq 4$ and some constant $C>0$, which is independent of $\eps$.
  
  \nit (2) {\bf Negative definite effective mass tensor:}\ Assume
  hypotheses H1-H3, with $\textrm{sgn}(A)=-1$. Then, the statement of
  part (1) applies, but now $\mu_\eps$ lies in the spectral gap of
  $-\Delta+V(\x)$ at a distance $\cO(\eps^2)$ above the spectral band
  edge having $E_*$ as its right endpoint.
\end{theorem}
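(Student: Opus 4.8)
\medskip
\noindent\textbf{Proof strategy.}
The plan is to combine a formal two--scale (homogenization) expansion, producing approximate eigenpairs $(u_\eps^N,\mu_\eps^N)$ that solve \eqref{eq:8} up to a residual of order $\eps^{N+1}$, with a Lyapunov--Schmidt reduction that controls the correction. The constructions for $\textrm{sgn}(A)=\pm1$ differ only through the substitutions $A\mapsto-A$, $e_{A,Q}\mapsto-e_{A,Q}$, ``band minimum / below $E_*$'' $\mapsto$ ``band maximum / above $E_*$'', so it suffices to prove part (1). Throughout I take $A>0$ and, by Remark~\ref{sec:spectr-theory-peri-1}, $\kv_*=0$, so $\Ls=L^2_\textrm{per}(\Omega)$ and $L_*=-\Delta+V-E_*$ has one--dimensional kernel spanned by the real band--edge eigenfunction $w$.

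\emph{Step 1 (formal expansion).} Put $\y=\eps\x$ and posit $u_\eps(\x)=\sum_{n\ge0}\eps^nU_n(\x,\eps\x)$, with $U_n(\cdot,\y)\in L^2_\textrm{per}(\Omega)$ decaying in $\y$, and $\mu_\eps=E_*+\sum_{n\ge2}\eps^n\mu_n$. Substituting into \eqref{eq:8}, using $\nabla_\x\mapsto\nabla_\x+\eps\nabla_\y$, and equating powers of $\eps$: at $\order(1)$, $L_*U_0=0$, hence $U_0(\x,\y)=f_0(\y)w(\x)$ by H2(a); at $\order(\eps)$, $L_*U_1=2\nabla_\x\cdot\nabla_\y U_0$ is solvable since $\innero{w}{\D_{x_j}w}=\tfrac12\int_\Omega\D_{x_j}(w^2)=0$, giving $U_1=\sum_j\D_{y_j}f_0\,w_j^{(1)}+f_1w$ with $L_*w_j^{(1)}=2\D_{x_j}w$; at $\order(\eps^2)$, projecting $L_*U_2=2\nabla_\x\cdot\nabla_\y U_1+\Delta_\y U_0-Q(\y)U_0+\mu_2U_0$ onto $w$ yields the homogenized eigenvalue equation $-\nabla_\y\cdot A\nabla_\y f_0+Qf_0=\mu_2f_0$, with $A_{jl}=\delta_{jl}+2\innero{w}{\D_{x_l}w_j^{(1)}}$. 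A standard second--order perturbation computation for the Bloch problem \eqref{k-ev-eqn} (differentiating twice in $\kv$ at $\kv_*$ and using $\nabla E_{b_*}(\kv_*)=0$) identifies this matrix with $A=\frac1{8\pi^2}D^2E_{b_*}(\kv_*)$ of \eqref{hessian}; hence, by H3, $\mu_2=e_{A,Q}$ and $f_0=F_{A,Q}$. Inductively, the $\order(\eps^n)$ equation for $U_n$ ($n\ge3$) is solvable modulo $w$, its solvability condition having the form $(L_{A,Q}-e_{A,Q})f_{n-2}=G_{n-2}-\mu_nF_{A,Q}$ for explicit $G_{n-2}$ built from lower--order data; simplicity of $e_{A,Q}$ fixes $\mu_n=\innerr{F_{A,Q}}{G_{n-2}}$, then $f_{n-2}\in F_{A,Q}^\perp$ uniquely (normalize $f_0=F_{A,Q}$, $f_n\perp F_{A,Q}$ for $n\ge1$). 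Ellipticity of $L_{A,Q}$ (H2(c)) together with $Q\in H^\sigma$, $\sigma>d$ (H1) places all $U_n$ in the requisite weighted spaces. Setting $u_\eps^N(\x)=\sum_{n=0}^N\eps^nU_n(\x,\eps\x)$, $\mu_\eps^N=E_*+\sum_{n=2}^N\eps^n\mu_n$, we obtain $(H_\eps-\mu_\eps^N)u_\eps^N=\eps^{N+1}R_\eps^N$, $R_\eps^N$ controlled uniformly in $\eps$.

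\emph{Step 2 (Lyapunov--Schmidt split and the far equation).} Write $u_\eps=u_\eps^N+\err$, $\mu_\eps=\mu_\eps^N+\nu$, and impose $\innerr{\err}{U_0(\cdot,\eps\cdot)}=0$ to remove the scaling freedom; then $(H_0-\mu_\eps^N)\err+\eps^2Q(\eps\cdot)(u_\eps^N+\err)-\nu(u_\eps^N+\err)=-\eps^{N+1}R_\eps^N$. Split $\err=\errl+\errh$ via the Bloch cutoff $\errl=\chi(|\nabla|<\delta)\err$ onto band $b_*$ with quasimomentum within a fixed small $\delta$ of $\kv_*$. Since $E_*$ is a nondegenerate band edge bordering a spectral gap, $H_0-\mu_\eps^N$ is boundedly invertible on the range of $I-\chi(|\nabla|<\delta)$, uniformly as $\eps\to0$; as $\|\eps^2Q(\eps\cdot)\|_{L^\infty}=\eps^2\|Q\|_{L^\infty}$ and $|\nu|=\order(\eps^2)$, the far equation
\[
\errh=-(H_0-\mu_\eps^N)^{-1}\bigl(I-\chi(|\nabla|<\delta)\bigr)\Bigl[\eps^2Q(\eps\cdot)(u_\eps^N+\errl+\errh)-\nu(u_\eps^N+\errh)+\eps^{N+1}R_\eps^N\Bigr]
\]
is a contraction in $\errh$, determining $\errh=\errh[\errl,\nu]$ with $\|\errh\|\lesssim\eps^2\|\errl\|+|\nu|+\eps^{N+1}$ (the far component of $u_\eps^N$ is itself small, as $U_0(\cdot,\eps\cdot)$ is spectrally concentrated near $\kv_*$ --- here the regularity of $Q$ and $p_{b_*}$ enters).

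\emph{Step 3 (near equation, implicit function theorem, and the main difficulty).} Substituting $\errh[\errl,\nu]$ into $\chi(|\nabla|<\delta)$ applied to the full equation and passing to the slow variable --- writing $\errl(\x)\approx\psi(\eps\x)w(\x)$, $\tilde\nu=\nu/\eps^2$ --- the rescaled near operator $\eps^{-2}(H_\eps-\mu_\eps^N)\big|_\textrm{near}$ converges to $L_{A,Q}-e_{A,Q}$ (using $E_{b_*}(\kv)-E_*\to4\pi^2(\kv-\kv_*)^\top A(\kv-\kv_*)$ and $\eps^{-2}\cdot\eps^2Q(\eps\x)\to Q(\y)$), and the near equation takes the form $(L_{A,Q}-e_{A,Q})\psi=\tilde\nu(F_{A,Q}+\psi)+\mathcal{N}_\eps(\psi,\tilde\nu)$ with $\mathcal{N}_\eps$ Lipschitz and small, of size dictated by the $\order(\eps^{N+1})$ residual and the $\order(\eps^{-2})$ near--field resolvent. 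Since $e_{A,Q}$ is simple, $L_{A,Q}-e_{A,Q}$ is invertible on $F_{A,Q}^\perp$; projecting onto $F_{A,Q}^\perp$ (which, with the gauge condition $\psi\perp F_{A,Q}$, determines $\psi$) and onto $F_{A,Q}$ (a scalar Fredholm equation determining $\tilde\nu$) yields a system whose linearization at $\eps=0$ is invertible ($(L_{A,Q}-e_{A,Q})|_{F_{A,Q}^\perp}$ invertible, scalar block $\approx-\|F_{A,Q}\|_{L^2}^2\neq0$), so the implicit function theorem furnishes a unique small solution $(\psi(\eps),\tilde\nu(\eps))$, hence $\errl,\errh,\nu$ and the eigenpair $u_\eps=u_\eps^N+\err\in H^2(\real^d)$. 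Tracking powers of $\eps$ through the rescalings and the $H^s\simeq\cX^s$ equivalence --- the only loss being the $\order(\eps^{-2})$ near--field resolvent, which does not touch the scalar equation for $\nu$ --- gives $\|u_\eps-u_\eps^N\|_{H^2}\le C\eps^{N-1}$ and $|\mu_\eps-\mu_\eps^N|\le C\eps^{N+1}$, i.e.~\eqref{eq:96}; since $e_{A,Q}<0$, $\mu_\eps=E_*+\eps^2e_{A,Q}+\order(\eps^3)<E_*$ lies in the gap below the band edge for small $\eps$, proving (1), and (2) follows by the symmetry noted above. I expect the main obstacle to be exactly this near regime: $H_\eps-\mu_\eps^N$ is only $\order(\eps^2)$ in size on the near subspace, so one must rescale to the homogenization variables and establish \emph{uniform--in--$\eps$} convergence of the rescaled operator and right--hand side to the homogenized data on a ball of fixed radius --- simultaneously controlling the non--quadratic part of $E_{b_*}(\kv)$ near $\kv_*$, the $\kv$--dependence of $p_{b_*}(\x;\kv)$ relative to $p_{b_*}(\x;\kv_*)$, the replacement of $\eps^2Q(\eps\x)$ by $Q(\y)$, and the $\eps$--weighted norm bookkeeping --- so that the implicit function theorem applies and the reduced problem inherits the structure of the homogenized eigenvalue problem \eqref{eq:111}.
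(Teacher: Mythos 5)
Your high-level strategy (formal two-scale expansion $+$ Lyapunov--Schmidt split into near/far Bloch components $+$ rescaling to the homogenized operator $+$ implicit function theorem) is the same as the paper's, and Step~1 correctly reproduces Proposition~\ref{sec:bigepsn-order-equat}. But there is a genuine gap in Steps~2--3, and you have in fact put your finger on exactly where it is without resolving it.

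The paper's near/far decomposition \eqref{eq:40} uses an $\eps$-\emph{dependent} cutoff $\chi(|\kv-\kv_*|<\eps^r)$ with $r\in(2/3,1)$, whereas you use a fixed small $\delta$. This choice is not cosmetic. After rescaling the near equation by $\kav=\kv/\eps$, the cutoff becomes $\chi(|\kav|<\eps^{r-1})$, which opens to all of $\real^d$ as $\eps\to0$, so $\chi_\eps L_{A,Q}\chi_\eps$ genuinely converges to $L_{A,Q}$; and the Taylor remainder of the band function, $\tfrac{1}{6}\D^3E_{b_*}(\kv')\,k_jk_lk_m$, divided by $\eps^2$, contributes $\order(\eps^{3r-2})$ (hence $r>2/3$), while the far resolvent bound costs $\eps^{-2r}$, giving the far estimate $\eps^{2-2r}$ (hence $r<1$); see \eqref{eq:38}, \eqref{eq:50}, \eqref{eq:71}. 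With your fixed $\delta$, the rescaled near ball is $|\kav|<\delta/\eps$, so the cubic remainder, measured absolutely, is of size $\delta^3/\eps^2$ and is \emph{not} small; it cannot simply be absorbed into $\mathcal{N}_\eps$ as ``Lipschitz and small.'' You would have to run a different argument (e.g.\ treating it as a relatively bounded perturbation of the quadratic symbol), which you do not do. The same issue infects the replacement $p_{b_*}(\x;\kv)\to w(\x)$ in \eqref{eq:48} and the approximation of the Bloch transform of $Q(\eps\cdot)\errl$ by an ordinary Fourier transform of the envelope --- the content of the paper's Lemmas~\ref{lemma:TQPsi} and~\ref{lemma:TR}, which your writeup omits entirely. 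Your concluding remark that one must ``establish uniform-in-$\eps$ convergence of the rescaled operator and right-hand side to the homogenized data'' is precisely the statement of the missing Lemmas; identifying the obstacle is not the same as overcoming it.

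Two smaller discrepancies. First, the paper prescales the error as $u=U_\eps^{(N)}+\eps^{N-1}\Psi^\eps$, $\mu=\mu_\eps^{(N)}+\eps^{N+1}\Upsilon^\eps$ (see \eqref{eq:42}), so that $\Psi^\eps,\Upsilon^\eps$ are sought as $\order(1)$ objects; this is just bookkeeping, but it is needed to make the IFT linearization nondegenerate in a clean way. Second, in your final reduction you project onto $F_{A,Q}$ and $F_{A,Q}^\perp$ directly, but the operator you actually need to invert is the truncated $\chi_\eps L_{A,Q}\chi_\eps$, whose near-kernel is only approximately spanned by $F_{A,Q}$; the paper handles this by the Riesz projection $\Pi_\eps$ in \eqref{eq:59} associated with that truncated operator and then proves $\Pi_\eps\to\Pi_0=\innerr{F_{A,Q}}{\cdot}$. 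That is a fixable point, but as written your Fredholm alternative is applied to the wrong operator.
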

\medskip

Theorem \ref{sec:main-theorem} extends to the case where $L_{A,Q}$ has
multiple and/or degenerate eigenvalues with bifurcations from band
edges with $\kv_* \ne 0$, as discussed in the following two remarks.

\begin{rem} {\bf General band edge bifurcations:} As discussed in
  Remark \ref{sec:spectr-theory-peri-1}, Theorem
  \ref{sec:main-theorem} generalizes to band edges where $\kv_* \ne 0$
  satisfying eq.~(\ref{eq:110}) so that $w(\x) \in \Ls$.
\end{rem}

\begin{rem} {\bf Multiple simple eigenvalues:} Note that if $L_{A,Q}$
  has $M$ (finitely many) eigenvalues, $e_{A,Q}^{(j)},\ j=1,\dots,M$
  of multiplicity one, then Theorem \ref{sec:main-theorem} applies
  directly. Specifically, there exists $\tilde\eps_0>0$ such that for
  all $0 < \eps<\tilde\eps_0$, there are eigenvalue / eigenvector
  branches
  $\eps\to\left(\mu_\eps^{(j)},u_\eps(\cdot;\mu_\eps^{(j)})\right)$.
  This behavior is shown in Fig.~\ref{fig:multiple_degenerate} with
  two simple eigenvalue branches with spacings $\cO(\eps^2)$.
\end{rem}

\begin{rem} {\bf Branches emanating from degenerate eigenvalues of
    $L_{A,Q}$:}
  In spatial dimensions, $d > 1$, the operator $L_{A,Q}$ may have
  degenerate eigenvalues, {\it e.g.}~if there is symmetry in $Q(\y)$.
  Suppose $e_{A,Q}$ has multiplicity $M$. Then, since $L_{A,Q}$ is
  self-adjoint, $e_{A,Q}$ perturbs, generically, to $M$ distinct
  branches. Thus, applying the method of proof of Theorem
  \ref{sec:main-theorem}, each degenerate eigenvalue of $L_{A,Q}$ of
  multiplicity $M$ gives rise to $M$ branches of eigenpairs of
  $H_\eps$. The cluster of $M$ distinct eigenvalues of $H_\eps$ are
  within an interval of size $\cO(\eps^3)$ about $E_*+\eps^2 e_{A,Q}$.
  The $j^\textrm{th}$ eigen-branch satisfies the error estimates
  \begin{align}
    \label{eq:123}
    \Big \| u_\eps^{(j)}(\cdot;\mu_\eps^{(j)}) - \sum_{n=0}^N \eps^n
    U_n^{(j)}(\cdot,\eps \cdot)
    \Big \|_{H^2(\real^d)} &\leq \eps^{N-1} C,\nn\\
    \Big | \mu_\eps^{(j)} - E_* - \eps^2 e_{A,Q} - \sum_{n=3}^N \eps^n \mu_n^{(j)}
    \Big | &\leq \eps^{N+1} C,
  \end{align}
  for $j = 1, 2, \ldots, M$, all $N \geq 4$ and some constant $C>0$,
  which is independent of $\eps$.  This behavior is shown in
  Fig.~\ref{fig:multiple_degenerate} where an eigenvalue of
  multiplicity three bifurcates from the band edge.
\end{rem}

\section{Homogenization and Multi-scale Expansion}
\label{sec:homog-multi-scale}

We derive a formal asymptotic expansion for the bound state that
bifurcates from the band edge into a gap.  The results of these
calculations will be used as an ansatz in the next section \ref{sec:proof-theorem-}
 to
rigorously prove existence and error estimates.

We assume that $u_\eps(\x;\mu_\eps)$ satisfies eq.\ (\ref{eq:8})
\begin{equation}
  \label{eq:12}
  [-\Delta + V(\x) + \eps^2Q(\eps \x)] u_\eps = \mu_\eps u_\eps,
\end{equation}
and expand it in an asymptotic series as follows
\begin{equation}
  \label{eq:3}
  u_\eps(\x;\mu_\eps) = U_\epsilon(\x,\y) = \sum_{n=0}^\infty \eps^n
  U_n(\x,\y), \quad \mu_\eps 
  = E_* + \sum_{n=1}^\infty \eps^n \mu_n,
\end{equation}
where $\y = \eps \x$ is the slow variable.  Treating $\x$ and $\y$ as
independent variables, equation \eqref{eq:12} then takes the form
\begin{equation}
  \label{eq:4}
  \left[\ -\left(\ \nabla_\x + \epsilon\nabla_\y\ \right)^2  + V(\x)
    + \eps^2Q(\y)\ \right]\ U_\eps(\x,\y) = \mu_\eps U_\eps(\x,\y)\
  . 
\end{equation}
We seek a solution $U_\eps(\x,\y)$ which is periodic in the fast variable, $\x$,
 and localized in the slow variable, $\y$. Specifically, we assume $U_n(\x,\y) \in L^\infty_{\textrm{per}}
(\Omega;H^2(\real^d))$.
%: $U_n(\x + \mathbf{e}_i, \y) = U_n(\x,\y)$, $i = 1, \ldots,d$.  
Inserting (\ref{eq:3}) and (\ref{eq:4}) into Eq.\ (\ref{eq:12}) and
equating like powers of $\eps$ we find
\begin{align}
  \label{eq:5}
  \bigo(\eps^0): \quad L_* U_0 \equiv&~ \left[ -\Delta_\x + V(\x) -
    E_* \right] U_0 = 0,
  \\
  \label{eq:6}
  \bigo(\eps^1): \quad L_* U_1 =&~  2 \nabla_\x \cdot \nabla_\y U_0
  + \mu_1 U_0, \\
  \label{eq:7}
  \bigo(\eps^2): \quad L_* U_2 =&~  2 \nabla_\x
  \cdot \nabla_\y U_1 + \mu_1 U_1 - [-\Delta_\y + Q(y) - \mu_2] U_0 , \\
  \nonumber & \vdots \\
  \label{eq:23}
  \bigo(\eps^n): \quad L_* U_n =&~2\nabla_\x \cdot \nabla_\y
  U_{n-1} + \mu_1 U_{n-1} - [-\Delta_\y + Q(y) - \mu_2] U_{n-2} \\
  \nonumber
  &~ + \sum_{j=3}^{n-1} \mu_j  U_{n-j}\ +\ \mu_n U_0, \quad n \geq 3. \\
  \nonumber & \vdots
\end{align}
Viewed as a system of partial differential equations for functions of
the fast variable $\x$, depending on a parameter $\y$, each equation
in this hierarchy is of the form $L_* U = \mathcal{G}(\x)$ where
$\mathcal{G}(\x)$ has the same symmetry as $w(\x)$, the band edge state (see \eqref{eq:52}),  with period cell
$\Omega$.  To solve these equations, we make repeated use of the
following two solvability criteria based on the Fredholm alternative
applied to the self-adjoint operators $L_*$ and $L_{A,Q}$ with
$\ker(L_*) = \textrm{span}\{ w \} \subset L^2_\textrm{per}(\Omega)$
and $\ker(L_{A,Q}) = \textrm{span}\{ F_{A,Q} \} \subset
L^2(\real^d_\y)$, respectively:
\begin{proposition}
  \label{sec:homog-multi-scale-2}
  Let $\mathcal{G} \in L^2_\textrm{per}(\Omega)$, then $L_* U =
  \mathcal{G}$ has an $H^2_{\textrm{per}}(\Omega)$ solution if and
  only if
  \begin{equation}
    \label{eq:112}
    \langle w, \mathcal{G} \rangle_{L^2(\Omega)} = 0.
  \end{equation}
\end{proposition}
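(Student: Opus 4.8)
The plan is to read this off from the spectral theory of the self-adjoint operator $L_* = -\Delta + V(\x) - E_*$ on $L^2_\textrm{per}(\Omega)$ with domain $H^2_\textrm{per}(\Omega)$. The necessity of the orthogonality condition is immediate: if $U \in H^2_\textrm{per}(\Omega)$ satisfies $L_* U = \mathcal{G}$, then since $L_* w = 0$ and $L_*$ is symmetric,
\[
  \langle w, \mathcal{G} \rangle_{L^2(\Omega)} = \langle w, L_* U \rangle_{L^2(\Omega)} = \langle L_* w, U \rangle_{L^2(\Omega)} = 0 .
\]

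For the converse I would use that $L_*$ has compact resolvent. For $c$ large enough $(-\Delta + c)^{-1}$ is compact on $L^2_\textrm{per}(\Omega)$ (Rellich's theorem on the torus), and $V - E_*$ is a bounded, hence relatively compact, perturbation; thus $L_*$ has purely discrete spectrum consisting of real eigenvalues of finite multiplicity accumulating only at $+\infty$, with a complete orthonormal system of eigenfunctions in $L^2_\textrm{per}(\Omega)$. In particular $L_* \colon H^2_\textrm{per}(\Omega) \to L^2_\textrm{per}(\Omega)$ is Fredholm of index zero, has closed range, and, by self-adjointness, $\textrm{Range}(L_*) = \ker(L_*)^{\perp}$. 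By hypothesis H2(a), $E_*$ is a simple eigenvalue of $-\Delta + V$, so $\ker(L_*) = \textrm{span}\{w\}$ and $\textrm{Range}(L_*) = \{w\}^{\perp}$. Hence $\langle w, \mathcal{G} \rangle_{L^2(\Omega)} = 0$ forces $\mathcal{G} \in \textrm{Range}(L_*)$, i.e.\ there exists $U \in H^2_\textrm{per}(\Omega)$ with $L_* U = \mathcal{G}$, unique modulo $\textrm{span}\{w\}$. Equivalently, writing $\mathcal{G} = \sum_n c_n \phi_n$ with $L_* \phi_n = \lambda_n \phi_n$, $\phi_0 = w$, $\lambda_0 = 0$, one takes $U = \sum_{\lambda_n \neq 0} (c_n/\lambda_n)\, \phi_n$; the spectral gap $\inf\{|\lambda_n| : \lambda_n \neq 0\} > 0$ together with $\lambda_n \to +\infty$ makes $(1 + \lambda_n)^2/\lambda_n^2$ bounded over the nonzero modes, so $\| U \|_{H^2_\textrm{per}(\Omega)}^2 \lesssim \sum_n |c_n|^2 = \| \mathcal{G} \|_{L^2(\Omega)}^2 < \infty$, which simultaneously records the $H^2$-regularity claim.

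I do not expect a genuine obstacle: this is the textbook Fredholm alternative for a Schr\"odinger operator on a compact manifold, and the only point requiring care is that the $L^2$-theory a priori produces only a distributional solution of $-\Delta U = \mathcal{G} - (V - E_*) U$; since $V \in L^\infty_\textrm{per}(\Omega)$ by hypothesis H1 the right-hand side lies in $L^2_\textrm{per}(\Omega)$, and interior elliptic regularity on the torus upgrades $U$ to $H^2_\textrm{per}(\Omega)$. The companion solvability criterion for $L_{A,Q}$ on $\real^d$, used later in the multi-scale hierarchy, follows the same pattern: $Q \in H^\sigma(\real^d)$ with $\sigma > d$ decays at infinity, so $L_{A,Q} - e_{A,Q}$ is a relatively compact perturbation of $-\nabla \cdot A \nabla - e_{A,Q}$, $e_{A,Q}$ is an isolated eigenvalue of finite multiplicity lying outside $\textrm{spec}(-\nabla \cdot A \nabla)$, and the Fredholm alternative applies with $F_{A,Q}$ playing the role of $w$.
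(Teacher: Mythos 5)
Your proof is correct and follows exactly the route the paper has in mind: the paper itself introduces Propositions~\ref{sec:homog-multi-scale-2} and \ref{sec:homog-multi-scale-1} as ``solvability criteria based on the Fredholm alternative applied to the self-adjoint operators $L_*$ and $L_{A,Q}$'' and leaves the verification as standard, while you simply write out that Fredholm/spectral argument in full, including the elliptic-regularity upgrade to $H^2_\textrm{per}(\Omega)$. No discrepancy; your version is just more explicit than the paper's one-line justification.
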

\begin{rem}
  If $\kv_* \ne 0$, then $L^2_\textrm{per}(\Omega)$ and
  $H^2_\textrm{per}(\Omega)$ are replaced by function spaces with the
  same symmetry as $w(\x)$, $L^2_\textrm{symm}(\Omega)$ and
  $H^2_\textrm{symm}(\Omega)$.  See Remark
  \ref{sec:spectr-theory-peri-1}.
\end{rem}
\begin{proposition}
  \label{sec:homog-multi-scale-1}
  Let $\mathcal{H} \in L^2(\real^d)$, then $L_{A,Q} F = \mathcal{H}$
  has a solution $F \in H^2(\R^d_\y)$ if and only if
  \begin{equation}
    \label{eq:113}
    \langle F_{A,Q}, \mathcal{H} \rangle_{L^2(\real^d)} = 0 .
  \end{equation}
\end{proposition}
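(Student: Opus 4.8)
The plan is to obtain \eqref{eq:113} as the Fredholm alternative for the self-adjoint operator $L_{A,Q}$, the essential structural input being that its kernel is spanned by an eigenfunction attached to an \emph{isolated} point of the spectrum. First I would verify that $L_{A,Q} = -\nabla_\y\cdot A\nabla_\y + Q(\y)$ is self-adjoint on $L^2(\R^d)$ with domain $H^2(\R^d)$: since $A$ is a constant, symmetric, sign-definite matrix, the linear substitution $\y = |A|^{1/2}\z$ conjugates $-\nabla_\y\cdot A\nabla_\y$ to $-\textrm{sgn}(A)\,\Delta_\z$, self-adjoint on $H^2(\R^d)$; and by the Sobolev embedding $H^\sigma(\R^d)\hookrightarrow L^\infty(\R^d)$ for $\sigma>d$ (hypothesis H1), multiplication by $Q$ is a bounded symmetric operator, so Kato--Rellich gives self-adjointness of $L_{A,Q}$ on $H^2(\R^d)$, the domain unchanged. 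By hypothesis H3 --- in which $e_{A,Q}$ is posited to be a \emph{simple} eigenvalue of $L_{A,Q}$ with $\textrm{sgn}(A)\,e_{A,Q}<0$ --- together with the discussion preceding the proposition, $\ker(L_{A,Q}) = \mathrm{span}\{F_{A,Q}\}$, and this kernel corresponds to an isolated point of $\sigma(L_{A,Q})$ of multiplicity one, the gap separating it from the rest of the spectrum being furnished by the sign condition (one may also check this directly via Weyl's theorem, $Q$ being a relatively compact perturbation of $-\nabla_\y\cdot A\nabla_\y$).

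Next I would run the standard closed-range argument. For a self-adjoint operator an isolated point of the spectrum is an eigenvalue carrying no continuous spectrum, so $\mathrm{Ran}(L_{A,Q})$ is closed; concretely, letting $P$ be the rank-one Riesz spectral projection onto $\mathrm{span}\{F_{A,Q}\}$, the operator $L_{A,Q}$ commutes with $P$, and its restriction to $\mathrm{Ran}(I-P)\cap H^2(\R^d)$ maps \emph{onto} $\mathrm{Ran}(I-P)$ with bounded inverse --- the spectrum there being bounded away from the kernel eigenvalue by the spectral gap --- so $\mathrm{Ran}(L_{A,Q}) = \mathrm{Ran}(I-P)$, a closed subspace. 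Since $L_{A,Q}$ is self-adjoint, $\overline{\mathrm{Ran}(L_{A,Q})} = \ker(L_{A,Q})^\perp$; combined with closedness of the range this gives $\mathrm{Ran}(L_{A,Q}) = \ker(L_{A,Q})^\perp = \{\, g \in L^2(\R^d) : \langle F_{A,Q}, g\rangle_{L^2(\R^d)} = 0\,\}$. Hence $L_{A,Q}F = \mathcal{H}$ admits a solution $F \in H^2(\R^d)$ \iff $\langle F_{A,Q}, \mathcal{H}\rangle_{L^2(\R^d)} = 0$, which is \eqref{eq:113}; that a solution may be taken in $H^2(\R^d)$ --- rather than only in $L^2(\R^d)$ --- is immediate (by elliptic regularity, or since a solution can be produced inside the operator domain intersected with $\ker(L_{A,Q})^\perp$).

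The one point deserving care is the assertion that $\ker(L_{A,Q})$ is spanned by an eigenfunction attached to an \emph{isolated} spectral point rather than to the edge of a continuous spectrum: this is precisely where the simplicity and the sign condition $\textrm{sgn}(A)\,e_{A,Q}<0$ of hypothesis H3 enter, and it is what makes the clean dichotomy ``solvable \iff orthogonal to $F_{A,Q}$'' valid --- by contrast, at the edge of the continuous spectrum (as at $0$ for $-\Delta$ on $\R^d$) the range of the operator fails to be closed and no such condition is available. The remaining ingredients --- self-adjointness of $L_{A,Q}$ on $H^2(\R^d)$, closedness of the range at an isolated eigenvalue, and the self-adjoint range/kernel duality --- are routine.
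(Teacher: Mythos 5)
Your proof is correct and takes essentially the same route the paper relies on: the paper states this proposition without proof, invoking the Fredholm alternative for the self-adjoint operator with $\ker(L_{A,Q})=\mathrm{span}\{F_{A,Q}\}$, which is exactly the argument you flesh out (self-adjointness on $H^2(\R^d)$ via Kato--Rellich, isolatedness of the simple eigenvalue from the sign condition and Weyl's theorem, closed range at an isolated eigenvalue, and range $=\ker^{\perp}$). The only caveat, which you in effect handle correctly, is that $L_{A,Q}$ in this proposition must be read as the shifted operator $-\nabla_\y\cdot A\nabla_\y+Q-e_{A,Q}$ of \eqref{eq:22}, whose kernel is spanned by $F_{A,Q}$, rather than the unshifted operator of hypothesis H3.
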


\subsection{$\bigo(\eps^0)$ Equation}
\label{sec:lead-order-equat}

From H2, there exists a unique, real, bounded eigenfunction $w \in
H^2_\textrm{per}(\Omega)$ and a simple eigenvalue $E_*$ that satisfy
\begin{equation}
  \label{eq:24}
  L_* w = [-\Delta_\x + V(\x) - E_*] w = 0, \quad \| w
  \|_{L^2(\Omega)} = 1,
\end{equation}
so that the general solution to Eq.\ \eqref{eq:5} has the multiscale
representation
\begin{equation}
  \label{eq:25}
  U_0(\x,\y) = w(\x) F_0(\y) ,
\end{equation}
for some $F_0(\y) \in H^2(\real^d)$ that will be determined at higher
order.

\subsection{$\bigo(\eps^1)$ Equation}
\label{sec:first-order-equation}

Applying Prop.~\ref{sec:homog-multi-scale-2} to eq.~\eqref{eq:6} gives
the solvability condition
\begin{equation}
  \label{eq:9}
  2 \partial_{y_j} F_0 \innero{w}{\partial_{x_j} w} + 
  \mu_1 F_0 = 0 .
\end{equation}
Since the integrand in the first term, being the derivative of the symmetric function $w^2/2$, integrates to zero,
\begin{equation}
  \label{eq:10}
  \mu_1 = 0.
\end{equation}
Therefore, the general solution for $U_1$ consists of a homogeneous
and particular solution
\begin{equation}
  \label{eq:15}
  U_1(\x,\y) = w(\x)F_1(\y) + 2 \partial_{y_j} F_0(\y)
  L_*^{-1}\{\partial_{x_j} w\}(\x) .
\end{equation}
where $F_1 \in H^2(\real^d)$ is to be determined at higher order.
\begin{rem}
  \label{sec:bigoeps1-equation}
  For $d = 1$, the general solution is
  \begin{equation}
    \label{eq:20}
    U_1(x,y) = w(x) F_1(y) + \partial_y F_0(y)  w(x) \left(- x +
      \frac{\int_0^x \frac{dx'}{w(x')^2}}{\int_0^1
        \frac{dx'}{w(x')^2}} \right) .
  \end{equation}
\end{rem}

\subsection{$\bigo(\eps^2)$ Equation}
\label{sec:second-order-equat}

Inserting the expressions \eqref{eq:10} and \eqref{eq:15} into
Eq.~\eqref{eq:7} yields
\begin{equation}
  \label{eq:11}
  L_* U_2 = 2 \partial_{y_j} F_1 \partial_{x_j} w - \mathcal{L}[F_0],
\end{equation}
where the linear operator $\mathcal{L}[G]$ for $G \in H^2(\real^d)$
is
\begin{equation}
  \label{eq:31}
  \begin{split}
    \mathcal{L}[G](\x,\y) = &~-4 \partial_{x_j} L_*^{-1}
    \{\partial_{x_l} w \}(\x) \partial_{y_j} \partial_{y_l}
    G(\y)  \\
    &~ + w(\x) [-\Delta_\y + Q(\y) - \mu_2]G(\y).
  \end{split}
\end{equation}
\begin{definition}
  Define the operator $L_{A,Q}: H^2(\real^d) \to L^2(\real^d)$ by
  \begin{equation}
    \label{eq:22}
    L_{A,Q} G(\y) \equiv \innero{w(\cdot)}{\mathcal{L}[G](\cdot,\y)} 
    =  [- \nabla_\y \cdot A \nabla_\y + Q(\y) - e_{A,Q}] G(\y), 
  \end{equation}
  where $e_{A,Q}$ is the simple eigenvalue associated with the
  eigenfunction $F_{A,Q}(\y)$ in hypothesis H3 and
  \begin{equation}
    \label{eq:26}
    A_{jl} \equiv \delta_{jl} - 4 \innero{\partial_{x_j}
      w}{L_*^{-1}\{ \partial_{x_l} w\}}. 
  \end{equation}
\end{definition}

\begin{proposition}
  \label{sec:bigoeps2-equation}
  \begin{equation}
    \label{eq:2}
    A_{jl} = \frac{1}{8\pi^2} \partial_{k_j} \partial_{k_l} E_{b_*}(\kv_*) .
  \end{equation}
\end{proposition}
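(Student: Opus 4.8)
The plan is to identify $A_{jl}$ as defined by \eqref{eq:26} with the Hessian of the band function by directly differentiating the Bloch eigenvalue problem \eqref{k-ev-eqn} twice in $\kv$ and evaluating at $\kv_*$. Recall from Remark \ref{sec:spectr-theory-peri-1} that we may take $\kv_* = 0$, so that $p_{b_*}(\x;0) = w(\x)$ and $L_* = -\Delta_\x + V - E_*$ is exactly the operator \eqref{k-ev-eqn} shifted to $\kv = 0$. Write $p(\kv) \equiv p_{b_*}(\cdot;\kv)$, $E(\kv) \equiv E_{b_*}(\kv)$, and $L_*(\kv) \equiv -(\nabla_\x + 2\pi i \kv)^2 + V - E(\kv)$, so that $L_*(\kv) p(\kv) = 0$ with $L_*(0) = L_*$, $p(0) = w$. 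First I would differentiate $L_*(\kv) p(\kv) = 0$ once in $k_j$ and evaluate at $\kv = 0$: using $\partial_{k_j} L_*(\kv)\big|_{0} = -4\pi i \,\partial_{x_j} - \partial_{k_j}E(0)$ and $\partial_{k_j} E(0) = 0$ (Hypothesis H2(b)), this gives $L_* (\partial_{k_j} p)(0) = 4\pi i\, \partial_{x_j} w$. Since $\langle w, \partial_{x_j} w\rangle = 0$ (the integrand is $\tfrac12\partial_{x_j}(w^2)$), the Fredholm condition of Proposition \ref{sec:homog-multi-scale-2} is satisfied and one may solve $(\partial_{k_j}p)(0) = 4\pi i\, L_*^{-1}\{\partial_{x_j} w\} + c_j w$; the constant $c_j$ is fixed by a normalization of $p(\kv)$ but will not matter below.

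Next I would differentiate $L_*(\kv) p(\kv) = 0$ a second time, in $k_j$ and $k_l$, and pair with $w$. The product rule gives three groups of terms:
\begin{equation*}
\langle w,\ (\partial_{k_j}\partial_{k_l} L_*)(0) w\rangle
+ \langle w,\ (\partial_{k_j} L_*)(0)\, (\partial_{k_l} p)(0)\rangle
+ \langle w,\ (\partial_{k_l} L_*)(0)\, (\partial_{k_j} p)(0)\rangle
+ \langle w,\ L_*\, (\partial_{k_j}\partial_{k_l} p)(0)\rangle = 0.
\end{equation*}
The last term vanishes by self-adjointness of $L_*$ and $L_* w = 0$. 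For the first term, $\partial_{k_j}\partial_{k_l} L_*(\kv)\big|_0 = -8\pi^2 \delta_{jl} - \partial_{k_j}\partial_{k_l}E(0)$, so that contribution is $-8\pi^2\delta_{jl} - \partial_{k_j}\partial_{k_l}E(0)$ (using $\|w\|^2 = 1$). For the two middle terms, I substitute $(\partial_{k_m} L_*)(0) = -4\pi i\,\partial_{x_m}$ (the $\partial_{k_m}E(0)$ piece drops since $\nabla E(0) = 0$) and the expression for $(\partial_{k_m}p)(0)$ found above; the $c_m w$ pieces pair against $-4\pi i \langle w, \partial_{x_m} w\rangle = 0$ and drop out, leaving $-4\pi i\langle w, \partial_{x_j}(4\pi i\, L_*^{-1}\{\partial_{x_l}w\})\rangle$ plus the symmetric term, i.e.\ $16\pi^2\langle \partial_{x_j}w,\ L_*^{-1}\{\partial_{x_l}w\}\rangle + (j\leftrightarrow l)$ after integrating by parts. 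Collecting everything and dividing by $-8\pi^2$ yields
\begin{equation*}
\delta_{jl} - 4\,\langle \partial_{x_j}w,\ L_*^{-1}\{\partial_{x_l}w\}\rangle_{\rm sym} = \frac{1}{8\pi^2}\partial_{k_j}\partial_{k_l}E(0),
\end{equation*}
which, after symmetrizing in $j,l$ (note both $A_{jl}$ in \eqref{eq:26} and the Hessian are symmetric), is exactly \eqref{eq:2}.

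The main obstacles are bookkeeping rather than conceptual. One must be careful that $E_{b_*}$ and $p_{b_*}$ are genuinely twice differentiable in $\kv$ near $\kv_*$ — this is where Hypothesis H1 ($E_{b_*}\in C^3$, $p_{b_*}\in L^2_{\rm per}(\Omega;C^3)$) and the simplicity of $E_*$ in H2(a) are used, via analytic/smooth perturbation theory (\cite{RS4,wilcox_theory_1978}). One must also track the normalization convention for $p(\kv)$: imposing $\langle p(\kv), p(\kv)\rangle \equiv 1$ forces $\mathrm{Re}\langle w, (\partial_{k_m}p)(0)\rangle = 0$ but leaves $c_m$ otherwise free; as noted, every appearance of $c_m$ is annihilated by $\langle w, \partial_{x_m}w\rangle = 0$, so the final identity is convention-independent. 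Finally one should double-check factors of $2\pi$: the Bloch ansatz $u = e^{2\pi i \kv\cdot\x}p$ produces $\nabla_\x + 2\pi i\kv$, hence the $4\pi i$ and $8\pi^2$ constants above and the $1/(8\pi^2)$ in \eqref{hessian}, \eqref{eq:2}. An equivalent route — conjugating $L_*$ by $e^{2\pi i\kv\cdot\x}$ and applying second-order Rayleigh–Schrödinger perturbation theory in the parameter $\kv$ — gives the same answer and may be presented instead if preferred.
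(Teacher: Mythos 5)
Your approach — differentiate the Bloch eigenvalue problem $L_*^{(\kv)}p_{b_*}(\cdot;\kv)=0$ twice in $\kv$, pair with $w$, and use the solvability (Fredholm) conditions at first and second order — is exactly the route the paper takes in Appendix~\ref{sec:summary-discussion-1}; the only cosmetic difference is that you collapse the second differentiation and the solvability condition into one step, whereas the paper records the intermediate identity \eqref{eq:97} before pairing with $w$. So the method is sound and identical in substance.

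However, two sign slips in your bookkeeping do not cancel and, taken literally, would produce $A_{jl} = -\frac{1}{8\pi^2}\partial_{k_j}\partial_{k_l}E_{b_*}(0)$. First, from $L_*(\kv) = -\Delta - 4\pi i\,\kv\cdot\nabla + 4\pi^2|\kv|^2 + V - E(\kv)$ one has $\partial_{k_j}\partial_{k_l} L_*|_{\kv=0} = +8\pi^2\delta_{jl} - \partial_{k_j}\partial_{k_l}E(0)$, not $-8\pi^2\delta_{jl} - \partial_{k_j}\partial_{k_l}E(0)$ as you wrote. Second, after substituting $(\partial_{k_l}p)(0) = 4\pi i\,L_*^{-1}\{\partial_{x_l}w\} + c_l w$ into $\langle w,\,(-4\pi i\,\partial_{x_j})(\partial_{k_l}p)(0)\rangle$, the product of prefactors is $(-4\pi i)(4\pi i) = +16\pi^2$, and then integrating by parts on $\T^d$ flips the sign once more, giving $-16\pi^2\langle \partial_{x_j}w,\,L_*^{-1}\{\partial_{x_l}w\}\rangle$ rather than $+16\pi^2\langle\cdot,\cdot\rangle$. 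With both corrections the solvability identity reads $8\pi^2\delta_{jl} - \partial_{k_j}\partial_{k_l}E(0) - 32\pi^2\langle \partial_{x_j}w,\,L_*^{-1}\{\partial_{x_l}w\}\rangle = 0$, and dividing by $+8\pi^2$ recovers \eqref{eq:2} (compare \eqref{eq:117}). As currently written, the stated conclusion contradicts your own intermediate steps; the two-minute fix is the pair of sign corrections above. Everything else — the use of $\langle w,\partial_{x_j}w\rangle = 0$ to kill the normalization constants $c_m$ and the appeal to H1, H2(a) for differentiability — is correct and matches the paper.
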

We give the proof in appendix \ref{sec:summary-discussion-1}; see also
\cite{Ashcroft-Mermin:76}.

Applying Prop.~\ref{sec:homog-multi-scale-2} to Eq.\ \eqref{eq:11}
gives
\begin{equation}
  \label{eq:32}
  \innero{w(\cdot)}{\mathcal{L}[F_0](\cdot,\y)} = 0 ~ \Leftrightarrow ~ 
  L_{A,Q} F_0 = 0, \quad \mu_2 = e_{A,Q},
\end{equation}
is the effective, homogenized equation for Eq.~\eqref{eq:8} with the
effective mass tensor $A$.  We have assumed in H3 the existence of the
eigenpair $F_{A,Q} \in H^2(\real^d)$ and $e_{A,Q} \in \real \setminus
\{0\}$.  Thus, $F_0(\y) = F_{A,Q}(\y)$.

The general solution for $U_2$ consists of a homogeneous and
particular solution
\begin{equation}
  \label{eq:27}
    U_2(\x,\y) = w(\x)F_2(\y) + 2 \partial_{y_j} F_1(\y)
    L_*^{-1}\{\partial_{x_j} w\}(\x) + L_*^{-1} \{
    \mathcal{L}[F_{A,Q}](\cdot,\y) \} (\x).
\end{equation}

\subsection{$\bigo(\eps^3)$ Equation}
\label{sec:third-order-equation}

Inserting Eqs.~\eqref{eq:10}, \eqref{eq:15}, and \eqref{eq:27} into
equation \eqref{eq:23} with $n = 3$ gives
\begin{equation}
  \label{eq:13}
  \begin{split}
    L_* U_3 = & ~2 \partial_{y_j} F_2 \partial_{x_j} w - \mathcal{L}[F_1] -
    \mathcal{H}_3 + \mu_3 w F_{A,Q} ,
  \end{split}
\end{equation}
where $\mathcal{H}_3$ is known 
\begin{equation}
  \label{eq:14}
  \begin{split}
    \mathcal{H}_3(\x,\y) = &~ -2\partial_{x_j} \partial_{y_j} L_*^{-1}
    \{
    \mathcal{L}[F_{A,Q}](\cdot,\y)\}(\x) + \\
    &~2 L_*^{-1} \{\partial_{x_j} w\}(\x) [-\Delta_\y + Q(\y) -
    e_{A,Q}]\partial_{y_j} F(\y) .
  \end{split}
\end{equation}
By Prop.~\ref{sec:homog-multi-scale-2}, Eq.~\eqref{eq:13} is solvable
if and only if
\begin{equation}
  \label{eq:16}
  L_{A,Q} F_1 = - \innero{w(\cdot)}{\mathcal{H}_3(\cdot,\y)} + 
  \mu_3 F_{A,Q} .
\end{equation}
By Prop.~\ref{sec:homog-multi-scale-1}, Eq.~\eqref{eq:16} has a
solution if and only if
\begin{equation}
  \label{eq:17}
  \mu_3 = \innerr{F_{A,Q}(\cdot)}{\innero{w(\circ)}{\mathcal{H}_3(\circ,\cdot)}} .
\end{equation}
We can now write $F_1$ in terms of $F_{A,Q}$ as
\begin{equation}
  \label{eq:18}
  F_1(\y) = L_{A,Q}^{-1} \left\{ -
    \innero{w(\circ)}{\mathcal{H}_3(\circ,\cdot)} + \mu_3 F_{A,Q}(\cdot) \right
  \}(\y) .
\end{equation}
With this choice of $F_1$, eq.~\eqref{eq:13} is solvable and its
general solution is
\begin{equation}
  \label{eq:21}
  \begin{split}
    U_3(\x,\y) =&~ w(\x) F_3(\y) + 2 \partial_{y_j} F_2(\y) L_*^{-1}
    \{ \partial_{x_j} w \}(\x) - \\
    &~ L_*^{-1} \Big \{ \mathcal{L}[F_1](\cdot,\y) +
    \mathcal{H}_3(\cdot,\y) - \mu_3 w(\cdot) F_{A,Q}(\y) \Big
    \}(\x),
  \end{split}
\end{equation}
where $F_3(\y)$ is to be determined.  Note also that $F_2(\y)$,
introduced at $\mathcal{O}(\eps^2)$, is to be determined.

\subsection{$\big(\eps^n)$ Order Equation}
\label{sec:ntextrmth-order-equa}

Continuing the expansion to arbitrary $n \geq 4$ from
Eq.~\eqref{eq:23} we have
\begin{equation}
  \label{eq:28}
  L_* U_n = 2 \partial_{y_j} F_{n-1} \partial_{x_j} w - \mathcal{L}[F_{n-2}] -
  \mathcal{H}_n + \mu_n w F_{A,Q} ,
\end{equation}
where $\mathcal{H}_n$ is completely determined by all the lower order
solutions $U_l$, $l \leq n-3$
\begin{equation}
  \label{eq:29}
  \begin{split}
    \mathcal{H}_n(\x,\y) = &~2\partial_{x_j} \partial_{y_j} L_*^{-1}
    \Big \{ \mathcal{L}[F_{n-3}](\cdot,\y) + \mathcal{H}_{n-1}(\cdot,\y) -
    \mu_{n-1} w(\cdot) F_{A,Q}(\y) \Big \}(\x) \\
    &~- \sum_{l=3}^{n-1}\mu_l U_{n-l} .
  \end{split}
\end{equation}
By Prop.~\ref{sec:homog-multi-scale-2}, eq.~\eqref{eq:28} is solvable
if and only if
\begin{equation}
  \label{eq:30}
  L_{A,Q} F_{n-2} = - \innero{w(\cdot)}{\mathcal{H}_n(\cdot,\y)} +
  \mu_n F_{A,Q} .
\end{equation}
Furthermore, by Prop.~\ref{sec:homog-multi-scale-1}, eq.~\eqref{eq:30}
is solvable if and only if
\begin{equation}
  \label{eq:33}
  \mu_n = \innerr{F_{A,Q}(\cdot)}{\innero{w(\circ)}{\mathcal{H}_n(\circ,\cdot)}}, 
\end{equation}
With this choice of $\mu_n$, $F_{n-2}$ is given by
\begin{equation}
  \label{eq:34}
  F_{n-2}(\y) = L_{A,Q}^{-1} \left\{ -
    \innero{w(\circ)}{\mathcal{H}_n(\circ,\cdot)} + \mu_n F_{A,Q}(\cdot) \right
  \}(\y) . 
\end{equation}
Finally, with this choice of $F_{n-2}$, $U_n(\x,\y)$ is given by
\begin{equation}
  \label{eq:35}
  \begin{split}
    U_n(\x,\y) =&~ w(\x) F_n(\y) + 2 \partial_{y_j} F_{n-1}(\y) L_*^{-1}
    \{ \partial_{x_j} w \}(\x) - \\
    &~ L_*^{-1} \Big \{ \mathcal{L}[F_{n-2}](\cdot,\y) +
    \mathcal{H}_n(\cdot,\y) - \mu_n w(\cdot) F_{A,Q}(\y) \Big
    \}(\x).
  \end{split}
\end{equation}
Thus we have:
\begin{proposition}
  \label{sec:bigepsn-order-equat}
  The first $N$ equations \eqref{eq:5}, \eqref{eq:6}, \eqref{eq:7},
  $\ldots$, \eqref{eq:23} are solvable with solutions $U_n(\x,\y) \in
  H^2_{\textrm{per}}(\Omega; H^2(\real^d))$, $n = 0, 1, \ldots, N$
  uniquely determined up to the two arbitrary slowly varying functions
  $F_{N}(\y)$, $F_{N-1}(\y) \in H^2(\real^d)$ for $N \geq 1$.  These
  functions are the slowly varying envelopes of the homogeneous
  solutions to the $N-1^{\textrm{th}}$ and $N^{\textrm{th}}$ order
  equations.  Moreover,
  \begin{equation}
    \label{eq:46}
    U_\eps^{(N)}(\x,\y) \equiv \sum_{n=0}^N \eps^n U_n(\x,\y), \quad
    \mu_\eps^{(N)} \equiv E_* + \eps^2 e_{A,Q} + \sum_{n=3}^N \eps^n \mu_n,
  \end{equation}
  with the particular choice $F_{n-1} = F_{n} \equiv 0$, is an
  approximate solution for the eigenvalue problem \eqref{eq:12}
  (equivalently \eqref{eq:8}) with error formally of order
  $\eps^{N+1}$.
\end{proposition}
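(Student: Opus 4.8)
\emph{Overall approach.} The plan is to establish solvability and uniqueness by induction on the order $n$, and then read off the formal error order by direct substitution. The base cases $n=0,1,2,3$ are precisely the content of Sections~\ref{sec:lead-order-equat}--\ref{sec:third-order-equation}: these steps fix $\mu_1=0$, $\mu_2=e_{A,Q}$, $F_0=F_{A,Q}$, and then $\mu_3$ and $F_1$ via the two solvability conditions, leaving $F_2,F_3$ free. For the inductive step at order $n\ge4$ I would assume $U_0,\dots,U_{n-1}$ already constructed in $H^2_{\textrm{per}}(\Omega;H^2(\real^d))$, satisfying \eqref{eq:5}--\eqref{eq:23} through order $n-1$, with $\mu_1,\dots,\mu_{n-1}$ and $F_0,\dots,F_{n-3}$ determined and $F_{n-2},F_{n-1}$ still free. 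The $\bigo(\eps^n)$ equation is then \eqref{eq:28}; its inhomogeneity $\mathcal{H}_n$, defined by the recursion \eqref{eq:29} seeded by \eqref{eq:14}, involves only the $U_l$ with $l\le n-3$ and is therefore a known function, so the only unknowns on the right-hand side of \eqref{eq:28} are the scalar $\mu_n$, the envelope $F_{n-2}$ (through $\mathcal{L}[F_{n-2}]$), and the still-free $F_{n-1}$ (through $2\partial_{y_j}F_{n-1}\,\partial_{x_j}w$).

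\emph{Two nested Fredholm conditions.} I would then apply Proposition~\ref{sec:homog-multi-scale-2} to \eqref{eq:28}. Because $\innero{w}{\partial_{x_j}w}=\tfrac12\int_\Omega\partial_{x_j}(w^2)\,d\x=0$ by periodicity and reality of $w$, the $F_{n-1}$ term pairs to zero against $w$; hence $H^2_{\textrm{per}}(\Omega)$-solvability of \eqref{eq:28} is equivalent to \eqref{eq:30}, where the definition \eqref{eq:22} is used to identify $\innero{w}{\mathcal{L}[F_{n-2}]}$ with $L_{A,Q}F_{n-2}$. Equation \eqref{eq:30} is an equation on $\real^d_\y$ for $F_{n-2}$, and since $L_{A,Q}$ (in the sense of \eqref{eq:22}) has kernel $\textrm{span}\{F_{A,Q}\}$, Proposition~\ref{sec:homog-multi-scale-1} applies and, using $\|F_{A,Q}\|_{L^2(\real^d)}=1$, forces the value \eqref{eq:33} of $\mu_n$. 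With $\mu_n$ chosen this way, \eqref{eq:30} is solvable; I would take $F_{n-2}$ to be the reduced-resolvent solution \eqref{eq:34} (which pins down the $F_{A,Q}$-component of $F_{n-2}$ by convention) and then define $U_n$ by \eqref{eq:35}. The only amplitude left undetermined in $U_n$ is the newly introduced $F_n$, together with $F_{n-1}$, which gets pinned down at the next order; so after running through orders $0,\dots,N$ precisely the two envelopes $F_{N-1}$ and $F_N$ remain free, as asserted, and the particular choice $F_{N-1}=F_N\equiv0$ produces $U_\eps^{(N)},\mu_\eps^{(N)}$ as in \eqref{eq:46}.

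\emph{Regularity.} The membership $U_n\in H^2_{\textrm{per}}(\Omega;H^2(\real^d))$ I would carry along inside the same induction, using elliptic regularity for $L_*$ on the torus $\Omega$ (so $L_*^{-1}$ gains two $\x$-derivatives, and $w\in H^2_{\textrm{per}}(\Omega)$ by H2, with the $\kv$-smoothness of $p_{b_*},E_{b_*}$ from H1) together with elliptic regularity for $L_{A,Q}$ (so the reduced resolvent gains two $\y$-derivatives) and a bootstrap on $L_{A,Q}F_{A,Q}=e_{A,Q}F_{A,Q}$ that upgrades the regularity of $F_{A,Q}$ using $Q\in H^\sigma$. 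I expect the bookkeeping of $\y$-derivatives to be the main technical obstacle: each pass through the recursion \eqref{eq:29} and each application of $\mathcal{L}[\,\cdot\,]$ (see \eqref{eq:31}) introduces extra $\y$-derivatives of the lower-order envelopes, so one must verify that the smoothness of $F_0=F_{A,Q}$, hence of $Q$ via H1, is enough to keep every $F_m$ and $U_n$ in $H^2$ through order $N$. For each fixed $N$ this is a finite amount of regularity and is supplied by H1.

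\emph{Formal error order.} Finally, to see that $U_\eps^{(N)}$ solves \eqref{eq:12} with residual formally of order $\eps^{N+1}$, I would substitute $U_\eps^{(N)}$ into the two-scale operator of \eqref{eq:4} and collect powers of $\eps$: the coefficients of $\eps^0,\dots,\eps^N$ are exactly the discrepancies in \eqref{eq:5}--\eqref{eq:23} for $n=0,\dots,N$ and vanish by construction, so
\begin{equation*}
  \big[-(\nabla_\x+\eps\nabla_\y)^2+V(\x)+\eps^2Q(\y)-\mu_\eps^{(N)}\big]\,U_\eps^{(N)}(\x,\y)\;=\;\sum_{m\ge N+1}\eps^m R_m(\x,\y),
\end{equation*}
a finite sum (the operator contributes only powers $0,1,2$ in $\eps$, while $\mu_\eps^{(N)},U_\eps^{(N)}$ are polynomials in $\eps$) whose coefficients $R_m$ are fixed finite combinations of $\x$- and $\y$-derivatives of $w$ and of $F_0,\dots,F_N$ and of $Q$, hence bounded in $L^2_{\textrm{per}}(\Omega;L^2(\real^d))$ by the regularity already established. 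This is the claimed formal $\bigo(\eps^{N+1})$ accuracy; promoting it to the rigorous $H^2(\real^d)$ error bounds of Theorem~\ref{sec:main-theorem} requires the Lyapunov--Schmidt reduction carried out in Section~\ref{sec:proof-theorem-} and is not part of this proposition.
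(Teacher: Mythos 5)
Your proposal is correct and is essentially the paper's (unwritten) proof: Proposition~\ref{sec:bigepsn-order-equat} simply records the constructive hierarchy carried out in Sections~\ref{sec:lead-order-equat}--\ref{sec:ntextrmth-order-equa}, and your induction, nested Fredholm reductions via Propositions~\ref{sec:homog-multi-scale-2} and \ref{sec:homog-multi-scale-1}, and direct collection of the residual are exactly that computation made explicit. The refinements you add beyond what the paper writes — that uniqueness up to $F_{N-1},F_N$ tacitly uses the reduced-resolvent normalization in \eqref{eq:34}, and that the $\partial_{y_j}$ and $\mathcal{L}$ in \eqref{eq:29}, \eqref{eq:31} cost $\y$-regularity at each order — are accurate, though on the latter point the fixed $\sigma>d$ in H1 controls only finitely many orders $N$ without stronger smoothness of $Q$, a subtlety the paper itself does not spell out.
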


\begin{rem}
  The multi-scale form of the approximate eigenfunction given in
  Prop.~\ref{sec:bigepsn-order-equat} is used as a ``trial function'' in Appendix
  \ref{sec:vari-exist-proof} to give a ``quick'' variational existence
  proof for defect modes bifurcating from the lowest band edge.  We
  also show that a two term approximation (leading order homogenized solution plus 
  first nontrivial correction) yields a better estimate for
  the energy than the one-term approximation (leading order homogenized solution).
\end{rem}

\section{Proof of Theorem \ref{sec:main-theorem},\ Bounds on $u(\cdot,\mu)-U_\eps^{(N)},\ \mu-\mu_\eps^{(N)}$}
\label{sec:proof-theorem-}

To prove Theorem~\ref{sec:main-theorem}, we introduce the corrections
$\err(\x)$ and $\Upsilon^\eps$ to the approximate solution displayed
in eq.~\eqref{eq:46} through
\begin{equation}
  \label{eq:42}
  u(\x;\mu) \equiv U_\eps^{(N)}(\x,\eps \x) + \eps^{N-1} \err (\x),
  \quad \mu \equiv \mu_\eps^{(N)} + 
  \eps^{N+1} \Upsilon^\eps ,
\end{equation}

Then the error $\err$  satisfies the equation:
\begin{equation}
  \label{eq:19}
    [L_* - \eps^2 e_{A,Q} + \eps^2 Q(\eps \x)] \err(\x) = \eps^2
    R^\eps[\err, \Upsilon^\eps](\x),
\end{equation}
where
\begin{equation}
  \label{eq:37}
  \begin{split}
    R^\eps[\err, \Upsilon^\eps] = &~\Upsilon^\eps U_0 + 2\nabla_\x
    \cdot \nabla_\y U_N - Q U_{N-1} + \sum_{n=1}^N \mu_{n+1} U_{N-n}
    - \eps Q U_N \\
    &~ + \err \left( \sum_{n=1}^{N-2} \eps^n \mu_{n+2} + \eps^{N-1}
      \Upsilon^\eps
    \right ) + \Upsilon^\eps \sum_{n=1}^{N}\eps^n U_n \\
    &~ + \sum_{n=1}^{N-1} \eps^{n} \sum_{m=n}^N \mu_{m+1}
    U_{N+n-m} \\
    = &~ \Upsilon^\eps U_0 + 2\nabla_\x \cdot \nabla_\y U_N - Q
    U_{N-1} + \sum_{n=1}^N \mu_{n+1} U_{N-n} +
    \bigo(\eps) \\
    \equiv &~ \Upsilon^\eps w F_{A,Q} + U^\# + \mathcal{O}(\eps),
    \\
    U^\# \equiv &~2 \nabla_x \cdot \nabla_y U_N = Q U_{N-1} +
    \sum_{n=1}^N \mu_{n+1} U_{N-n} . % \\
%     &~ + \eps \Big( \mu_3 \err + U_1 \Upsilon^\eps + Q U_N + \sum_{m=n}^N
%     \mu_{m+1}U_{N+1-m} \Big) + \bigo(\eps^2) 
  \end{split}
\end{equation}

The leading order multi-scale approximation $U_0(\x,\eps\x) =
F_{A,Q}(\eps \x) w(\x)$ in the ansatz Eq.~\eqref{eq:42} with $w(\x) =
e^{2\pi i \kv_* \cdot \x} p_{b_*}(\x;\kv_*)$ the band edge
eigenfunction suggests that the dominant contribution to the frequency
content of $\err$ will be near the band edge $E_* = E_{b_*}(\kv_*)$.
Therefore, it is natural to decompose $\err$ into Bloch eigenfunctions
\begin{equation}
  \label{eq:65}
  \err(\x) = \sum_{b=0}^\infty \int_{\Omega^*}  p_b(\x;\kv) e^{2\pi i \kv
    \cdot \x} \mathcal{T}_b \{ \err \}(\kv) \, d\kv,
\end{equation}
with associated energies or frequencies $E_b(\kv)$ where $\kv$ varies
in the Brillouin zone $\Omega^*$.  Moreover, we introduce a spectral
localization of $\err$ into frequencies ``near'' the band edge and
``far'' from the band edge
\begin{equation}
  \label{eq:40}
  \begin{split}
    \err (\x) &= \errl(\x) + \errh(\x) = \mathcal{T}^{-1}
    \errtl(\x;\cdot) + \mathcal{T}^{-1} \errth(\x;\cdot), \\
    \errtl(\x;\kv) & = \Pi_{\textrm{near}} \err(\x) \equiv \chi(| \kv
    - \kv_* | < \eps^r) \,
    \mathcal{T}_{b_*} \{ \err \} ( \kv ) \, p_{b_*}(\x; \kv) ,  \\
    \errth(\x;\kv) & = \Pi_{\textrm{far}} \err(\x) \equiv
    \sum_{b=0}^\infty \chi(| \kv - \kv_* | \geq \eps^r \delta_{b_*,b}) \,
    \mathcal{T}_b \{ \err \} ( \kv ) \, p_b(\x; \kv),
  \end{split} 
\end{equation}
where $\delta_{n,m}$ is the Kronecker delta function and the indicator
functions are defined as
\begin{equation}
  \label{eq:41}
  \chi(|\kv - \kv_*| < \eps^r) \equiv 1_{\{\kv \in \Omega^*: |\kv -
    \kv_*| < \eps^r\}}(\kv), \quad 
  \chi(|\kv - \kv_*| \geq \eps^r) \equiv 1_{\{\kv \in \Omega^*: |\kv -
    \kv_* | \ge \eps^r\}}(\kv).
\end{equation}
\begin{rem}
  For our analysis near the band edge, we will use Taylor expansions
  of various quantities about $\kv = \kv_*$.  Without loss of
  generality, we will assume that $\kv_* \equiv 0$ which enables a
  notationally cleaner presentation.  See Remark
  \ref{sec:spectr-theory-peri-1}.
\end{rem}

% It is useful to introduce the additional projectors
% \begin{equation}
%   \label{eq:70}
%   \tilde{P}_{\textrm{near}} \equiv \chi(|k| < \eps^r) \langle
%   p_0(\cdot;\kv), *\cdot(\cdot) \rangle(\kv), 
% \end{equation}
We will use the conventions
\begin{equation}
  \label{eq:55}
  \begin{split}
    \errtl(\kv) &\equiv \innero{p_{b_*}(\cdot;\kv)}{\errtl(\cdot;\kv)}, \\[3mm]
    \errth(\kv) &\equiv \left (
      \innero{p_{b}(\cdot;\kv)}{\errth(\cdot;\kv)} \right )_{b \ge 0},
  \end{split}
\end{equation}
where $\errtl(\kv)$ is a scalar and $\errth(\kv)$ is an infinite
vector.  This decomposition was used in
\cite{dohnal_coupled_2008,dohnal_coupled-mode_2009,Ilan-Weinstein:10}.
The parameter $r$ is assumed to lie in the interval
\begin{equation}
  \label{eq:61}
  r \in (2/3,1),
\end{equation}
the choice of which will be made clear later.

We now apply the Bloch transform to Eq.~\eqref{eq:19}, project onto
the Bloch modes $p_b(\cdot;\kv)$  and use the properties \eqref{TF-conj} and \eqref{eq:116}
 to find
\begin{equation}
  \label{eq:36}
  \begin{split}
    &[E_b(\kv) - E_* - \eps^2 e_{A,Q}]\, \Tb{b}{\err}(\kv) + \eps^2
    \Tb{b}{Q(\eps \cdot) \err(\cdot)}(\kv)  \\[3mm]
    & = \eps^2
    \Tb{b}{R^\eps[\err,\Upsilon^\eps]}(\kv) , \quad
     b = 0, 1, 
    \ldots .
  \end{split}
\end{equation}
We view this as a coupled system of equations for the near and far
frequency components $\errtl (\kv)$ and $\errth (\kv)$,\ $\kv\in\Omega^*$
\begin{align}
  \label{eq:44}
  \textrm{near:} &~ \left \{ \begin{array}{l} (E_{b_*}(\kv) - E_* -
      \eps^2 e_{A,Q}) \errtl(\kv) + \\[3mm]
      \eps^2 \chi(|\kv| <
      \eps^r) \blochn{b_*}{Q(\eps \cdot) \errl (\cdot)}(\kv) \\[3mm] =
      \eps^2 \chi(|\kv| < \eps^r) \big [ -
        \blochn{b_*}{Q(\eps \cdot) \errh (\cdot)}(\kv) \\[3mm]
        + \blochn{b_*}{R^\eps[\errl + \errh,\Upsilon^\eps]}(\kv) \big
      ],
  \end{array} \right . \\[3mm]
  \label{eq:45}
  \textrm{far:} &~ \left \{ \begin{array}{l} (E_b(\kv) - E_* - \eps^2
      e_{A,Q}) \chi(|\kv| \geq \eps^r \delta_{b_*,b})
      \errthb(\kv) \\[3mm]
      + \eps^2 \chi(|\kv| \geq \eps^r \delta_{b_*,b})
      \blochn{b}{Q(\eps
        \cdot) \errh (\cdot)}(\kv) \\[3mm]
      = \eps^2 \chi(|\kv| \geq \eps^r \delta_{b_*,b}) \Big [
      -\blochn{b}{Q(\eps \cdot) \errl (\cdot)}(\kv) + \\[3mm]
      \blochn{b}{R^\eps[\errl + \errh,\Upsilon^\eps]}(\kv) \Big ],
      \quad b = 0, 1, 2, \ldots .
  \end{array} \right .
\end{align}

\subsection{Lyapunov-Schmidt Reduction}
%\subsection{Reduction to Near Frequency Equation}
\label{sec:clos-resc-low}

In this section, we derive a functional representation of the far
frequency components in terms of the near frequency components with an
associated estimate.  After insertion into the near frequency equation,
a closed system is obtained.

We use the implicit function theorem to solve the far frequency
equations.  To this end, we observe the following inequalities due to
the definiteness of the matrix $A_{jl} =
\frac{1}{8\pi^2}\partial_{k_j} \partial_{k_l} E_{b_*}(0)$\ (see hypothesis H2):
\begin{equation}
  \label{eq:38}
  \begin{split}
    &|E_{b_*}(\kv) - E_* - \eps^2 e_{A,Q}| = |A_{jl}k_j k_l +
    \bigo(|\kv|^3)| \geq C \eps^{2r} > 0, \quad
    \eps^{2r} \leq |\kv|, \quad \kv \in \Omega^*,   \\[3mm]
    &|E_b(\kv) - E_* - \eps^2 e_{A,Q}| \geq C > 0, \quad | b - b_* |
    \geq 1.
  \end{split}
\end{equation}
We now have the following existence result
\begin{proposition}
  \label{sec:clos-resc-low-2}
  There exists $\eps_0 > 0$ such that for $0 < \eps < \eps_0$, there
  is a mapping $(\psi,\upsilon,\eps) \to \errth[\psi,\upsilon]$,
  $\errth: L^2(\real^d) \times \real_{\upsilon} \times \real_\eps \to
  \mathcal{X}^2$ such that $\errth[\errl, \Upsilon^\eps]$ is the
  unique solution to the far frequency Eqs.~\eqref{eq:45}.  The
  mapping $\errth$ is $C^1$ with respect to $\psi$ and $\upsilon$.
  The solution of Eqs.~\eqref{eq:45} satisfies the estimate
  \begin{equation}
    \label{eq:71}
    \begin{split}
      \left \| \errh[\errl,\Upsilon^\eps] \right \|_{H^2(\real^d)} &\leq
      C\left \| \errth[\errl,\Upsilon^\eps] \right \|_{\mathcal{X}^2}
      \\
      &\leq C \eps^{2 - 
        2r} \left( 1 + \Upsilon^\eps + (1 + \eps^{N-1} \Upsilon^\eps)\| \errl 
        \|_{L^2(\real^d)} \right) ,
    \end{split}
  \end{equation}
  for $0 < r < 1$.
\end{proposition}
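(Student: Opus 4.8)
The plan is to solve the far frequency equations \eqref{eq:45} by the implicit function theorem, treating $\errth$ as the unknown and $(\psi,\upsilon,\eps)$ as parameters. The key structural observation is that on the far region the ``linear part'' of the operator,
\begin{equation}
  \label{eq:planA}
  \big(\mathcal{M}_\eps \errth\big)_b(\kv) \equiv (E_b(\kv) - E_* - \eps^2 e_{A,Q})\,\chi(|\kv|\ge\eps^r\delta_{b_*,b})\,\errthb(\kv) + \eps^2\,\chi(|\kv|\ge\eps^r\delta_{b_*,b})\,\blochn{b}{Q(\eps\cdot)\errh(\cdot)}(\kv),
\end{equation}
is boundedly invertible from $\mathcal{X}^0$ to $\mathcal{X}^0$, and in fact from $\mathcal{X}^2$ to $\mathcal{X}^0$ with a controlled norm. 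The diagonal multiplication operator $E_b(\kv)-E_*-\eps^2 e_{A,Q}$, restricted to the far set, is invertible by the lower bounds \eqref{eq:38}: its inverse has norm $\lesssim \eps^{-2r}$ on $\mathcal{X}^0$, since the worst case is $|\kv|\sim\eps^{2r}$ in band $b_*$. The perturbation $\eps^2 Q(\eps\cdot)$ is, by Hypothesis H1 ($Q\in H^\sigma$, $\sigma>d$) and the algebra/multiplier properties of $H^s$ encoded in the isomorphism $H^s\cong\mathcal{X}^s$ (Proposition in Section \ref{sec:spectr-theory-peri}), bounded on $\mathcal{X}^0$ with operator norm $O(\eps^2)$; since $\eps^2\cdot\eps^{-2r}=\eps^{2-2r}\to 0$ for $r<1$, a Neumann series shows $\mathcal{M}_\eps$ is invertible with $\|\mathcal{M}_\eps^{-1}\|_{\mathcal{X}^0\to\mathcal{X}^0}\lesssim\eps^{-2r}$.

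First I would recast \eqref{eq:45} as a fixed point: $\errth = \mathcal{M}_\eps^{-1}\,\eps^2\,\mathcal{R}_\eps[\psi,\errth,\upsilon]$, where $\mathcal{R}_\eps$ collects the right-hand side of \eqref{eq:45}, namely $-\blochn{b}{Q(\eps\cdot)\psi}(\kv)$ plus the Bloch projection of $R^\eps[\psi+\errh,\Upsilon^\eps]$ onto far modes. To promote the estimate from $\mathcal{X}^0$ to $\mathcal{X}^2$, I would exploit the fact that on the far set $|E_b(\kv)-E_*|$ is comparable to $1+|b|^{2/d}+|\kv-\kv_*|^2$ away from the cutoff and at least $C\eps^{2r}$ near it; combined with elliptic regularity for $L_*$, one gains two derivatives at the cost of one more factor of $\eps^{-2r}$ only on the sliver $|\kv|\sim\eps^{2r}$, which is harmless. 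Concretely, $\|\mathcal{M}_\eps^{-1}G\|_{\mathcal{X}^2}\lesssim \eps^{-2r}\|G\|_{\mathcal{X}^0}$ plus a term $\lesssim\|G\|_{\mathcal{X}^0}$ from the high-$b$ part, so overall $\|\errth\|_{\mathcal{X}^2}\lesssim \eps^{2-2r}\|\mathcal{R}_\eps\|_{\mathcal{X}^0}$. Next I would bound $\mathcal{R}_\eps$ in $\mathcal{X}^0\cong L^2$: the term $Q(\eps\cdot)\psi$ contributes $\lesssim\|\psi\|_{L^2}$; the term $R^\eps$, from its explicit form in \eqref{eq:37}, contributes $\lesssim 1 + \Upsilon^\eps + (\text{small})\|\errh\|_{H^2} + (1+\eps^{N-1}\Upsilon^\eps)\|\psi\|_{L^2}$, since the $U_n$ are fixed $H^2$ functions (after $\y=\eps\x$, their $L^2(\real^d)$ norms are $O(\eps^{-d/2})$ but multiplied by the compensating powers this is absorbed—more carefully one checks the scaled norms are $O(1)$ because of the slow-variable localization built into $U_n\in L^\infty_{\mathrm{per}}(\Omega;H^2(\real^d))$). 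Feeding this into the fixed-point map and using $\eps^{2-2r}\to0$ to absorb the $\|\errh\|_{H^2}$-dependence on the left, the contraction mapping / implicit function theorem yields existence, uniqueness, $C^1$ dependence on $(\psi,\upsilon)$, and precisely the estimate \eqref{eq:71}.

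The main obstacle, and the step requiring the most care, is the derivative-counting near the cutoff $|\kv|=\eps^r$: one must verify that the factor $\eps^{-2r}$ from inverting $E_{b_*}(\kv)-E_*-\eps^2 e_{A,Q}$ on the annulus $\eps^r\le|\kv|$ does not degrade when passing to the $\mathcal{X}^2$ norm, i.e.\ that multiplying by $(1+E_{b_*}(\kv)-E_0(0))$ and dividing by $(E_{b_*}(\kv)-E_*-\eps^2 e_{A,Q})$ is bounded uniformly (not merely by $\eps^{-2r}$) on the far set—this is where the sign-definiteness of $A$ (so that $|E_{b_*}(\kv)-E_*|\gtrsim|\kv|^2$ near $\kv_*$) and the exclusion of an $\eps^r$-ball are both essential, and where the constraint $r<1$ enters. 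A secondary technical point is confirming that $\mathcal{R}_\eps$ is genuinely $C^1$ in $(\psi,\upsilon)$ with locally uniform bounds, which follows because $R^\eps$ is polynomial in $\err$ and $\Upsilon^\eps$ with coefficients that are fixed $\eps$-dependent functions of controlled norm; once that is in hand the implicit function theorem applies verbatim.
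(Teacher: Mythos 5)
Your proposal follows essentially the same route as the paper: divide the far equations by $E_b(\kv)-E_*-\eps^2 e_{A,Q}$ (justified by the lower bounds \eqref{eq:38}), observe via the Weyl asymptotics $E_b(\kv)\sim b^{2/d}$ that the resulting multiplier is bounded from $\mathcal{X}^0$ to $\mathcal{X}^2$ with norm $O(\eps^{-2r})$, and close via a small-$\eps$ fixed point (the paper invokes the implicit function theorem with $D_{\tilde\phi}\tilde G[0,0,\psi,\upsilon]=I$; your Neumann series and contraction are an equivalent formulation of the same smallness argument). One clarification to your ``main obstacle'' paragraph: the ratio $(1+E_{b_*}(\kv)-E_0(0))/(E_{b_*}(\kv)-E_*-\eps^2 e_{A,Q})$ is \emph{not} uniformly bounded on the far set — it is only $\lesssim\eps^{-2r}$ in the $b_*$-band annulus where $|\kv|\sim\eps^r$ (not $\eps^{2r}$ as you wrote) so that the divisor is $\sim\eps^{2r}$, and it is $O(1)$ in the bands $b\ne b_*$ by Weyl; that $\eps^{-2r}$ loss is exactly what the prefactor $\eps^2$ compensates to produce the advertised $\eps^{2-2r}$, matching the conclusion you already reached correctly earlier in your argument.
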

\begin{proof}
  Since Eq.~\eqref{eq:45} is supported on frequencies away from $E_*$,
  we can divide it by $E_b(\kv) - E_* - \eps^2 e_{A,Q}$.  This
  suggests studying the equivalent equation $\tilde{G} = 0$ where
  $\tilde{G}$ has components
  \begin{equation}
    \label{eq:73}
    \begin{split}
      \tilde{G}_b \left [\tilde{\phi}, \eps, \psi,\upsilon \right
      ](\kv) \equiv &~ \chi(|\kv| \geq \eps^r \delta_{b_*,b})
      \bigg \{ \tilde{\phi}_b(\kv) +
      \eps^2 [ E_b(\kv) - E_* - \eps^2 e_{A,Q}]^{-1} \\
      \times \Big [\blochn{b}{Q(\eps \cdot) \phi(\cdot)}(\kv)& +
      \blochn{b}{Q(\eps \cdot) \psi(\cdot)}(\kv) -\blochn{b}{R[\psi +
        \phi,\upsilon]}(\kv) \Big ] \bigg \}, \quad b \ge 0 .
    \end{split}
  \end{equation}
  Any function $\tilde{\phi} \in \mathcal{X}^2$ satisfying
  $\tilde{G}[\tilde{\phi}, \eps, \psi, \upsilon] = 0$ is a solution of
  the far equations \eqref{eq:45} with $\psi = \errl$, $\upsilon
    = \Upsilon^\eps$.  $\tilde{G}$ is a continuous map $\mathcal{X}^2
  \times \real \times \real \to \mathcal{X}^0$ and $C^1$ with respect
  to $\psi$ and $\upsilon$ satisfying the estimate
  \begin{equation}
    \label{eq:74}
    \| \tilde{G}[\tilde{\phi},\eps,\psi,\upsilon]  \|_{\mathcal{X}^0}
    \leq C [1 + \upsilon + (1 + 
    \eps^{N+1-2r} \upsilon) \| \tilde{\phi}
    \|_{\mathcal{X}^0} + \eps^{2 - 2r}(1 + \eps^{N-1} \upsilon) \|
    \psi \|_{L^2(\real^d)} ] < \infty .
  \end{equation}
  Note that
  \begin{equation}
    \label{eq:75}
    \tilde{G}[0,0,\psi,\upsilon] = 0.
  \end{equation}
  The Proposition follows from the implicit function theorem
  \cite{nirenberg_topics_2001} if we can show that $D_{\tilde{\phi}} G
  [0,0,\psi,\upsilon]$ is invertible.  We have
  \begin{equation}
    \label{eq:76}
    \begin{split}
      \left( D_{\tilde{\phi}}
        \tilde{G}[\tilde{\phi},\eps,\psi,\upsilon] \tilde{f}
      \right)_b& (\kv)
      = \chi(|\kv| \geq \eps^r \delta_{b_*,b}) \tilde{f}_b(\kv) \\
      + \eps^2 \chi(|\kv| \geq \eps^r \delta_{b_*,b})
      &\frac{\blochn{b}{Q(\eps \cdot) f(\cdot)}(\kv) -
        \tilde{f}_b(\kv) \left (\sum_{l=1}^{N-2} \eps^l \mu_{l+2} +
          \eps^{N-1} \upsilon \right)}{E_b(\kv) - E_* - \eps^2
        e_{A,Q}} .
    \end{split}
  \end{equation}
  Therefore, $D_{\tilde{\phi}} \tilde{G}[0,0,\psi,\upsilon] = I$ is
  invertible.  Note that we use the fact that $0 < r < 1$ to conclude
  that $\lim_{\eps \to 0} \eps^2 \chi(|\kv| \geq \eps^r
  \delta_{b_*,b})/[E_b(\kv) - E_* - \eps^2 e_{A,Q}] \equiv 0$.  The
  implicit function theorem implies that there exists $\eps_0 > 0$ and
  a unique $\errth[\errl,\Upsilon^\eps] \in \mathcal{X}^2$ satisfying
  \begin{equation}
    \label{eq:77}
    \tilde{G}\left[\errth[\errl,\Upsilon^\eps],\eps,\errl,\Upsilon^\eps
    \right] = 0, 
  \end{equation}
  for $0 < \eps < \eps_0$.

  Equation \eqref{eq:77} is equivalent to
  \begin{equation}
    \label{eq:56}
    \begin{split}
      \left ( \errth \right )_b(\kv) = - \eps^2 \frac{\blochn{b}{
          \Pi_{\textrm{far}} \Big [ Q(\eps \cdot) 
          \errh(\cdot) - R^\eps[\errl + \errh,\Upsilon^\eps](\cdot) \Big ]}
        (\kv)}{E_b(\kv) - E_* - \eps^2 e_{A,Q}} .
    \end{split}
  \end{equation}
  We now demonstrate the inequality in Eq.~\eqref{eq:71}.  Using
  \eqref{eq:69}, \eqref{eq:45} and the invertibility of $L_* - \eps^2
  e_{A,Q}$ to obtain 
  \begin{equation}
    \label{eq:72}
    \begin{split}
      \| \errh \|_{H^2(\real^d)} &\leq C \left \| \errth \right \|_{\mathcal{X}^2} \\
      &= \eps^2 C \left \| \frac{\chi(|\cdot| \geq \eps^r
          \delta_{b_*,b}) \blochn{b}{ Q(\eps \circ) \errh(\circ) -
            R^\eps[\err,\Upsilon^\eps](\circ)}(\cdot)}{ E_b(\cdot) -
          E_* - \eps^2e_{A,Q}} \right \|_{\mathcal{X}^2} \\
      &\leq \eps^{2 - 2r} C \left \| \frac{(1+b^{2/d})
          \blochn{b}{Q(\eps \circ) \errh(\circ) -
            R^\eps[\err,\Upsilon^\eps](\circ)}(\cdot)}{1 + b^{2/d}}
      \right \|_{\mathcal{X}^0} \\
      & \leq \eps^{2-2r}C \left \| Q(\eps \cdot) \errh(\cdot) -
        R^\eps[\err,\Upsilon^\eps](\cdot) \right \|_{L^2(\real^d)} \\
% &\leq \eps^{2-2r}C \left ( \| Q(\eps \cdot) \errh(\cdot)
%         \|_{L^2(\real^d)} + \| Q(\eps \cdot) \errl(\cdot)
%         \|_{L^2(\real^d)} + 1 \right ) \\
      &\leq \eps^{2-2r}C \left [ (1 + \eps^{N-1} \Upsilon^\eps) \left
          (\| \errh(\cdot) \|_{L^2(\real^d)} + \| \errl(\cdot)
          \|_{L^2(\real^d)} \right ) + 1 + \Upsilon^\eps \right ] ,
%       \| \errh& \|_{H^2(\real^d)} \leq \eps^2 C \left \|(L_* -
%         \eps^2 e_{A,Q})^{-1} \Pi_{\textrm{far}} \Big [ Q(\eps \cdot)
%         \errh(\cdot) -
%         R^\eps[\err,\Upsilon^\eps](\cdot) \Big ] \right \|_{H^2(\real^d)} \\
%       &\leq \eps^2 C \left \| \sum_{b \geq 0} p_b(\cdot;\star)
%         \frac{\chi(|\star| \geq \eps^r \delta_{b_*,b}) \blochn{b}{
%             Q(\eps \circ) \errh(\circ) -
%             R^\eps[\err,\Upsilon^\eps](\circ)}(\star)}{ E_b(\star) -
%           E_* - \eps^2e_{A,Q}} \right \|_{\mathcal{X}^2} \\
%       &\leq \eps^{2 - 2r} C \left \| \sum_{b \geq 0} p_b(\cdot;\star)
%         \frac{(1+b^{2/d}) \blochn{b}{Q(\eps \circ) \errh(\circ) -
%             R^\eps[\err,\Upsilon^\eps](\circ)}(\star)}{1 + b^{2/d}}
%       \right \|_{\mathcal{X}^0} \\
%       & \leq \eps^{2-2r}C \left \| Q(\eps \cdot) \errh(\cdot) -
%         R^\eps[\err,\Upsilon^\eps](\cdot) \right \|_{L^2(\real^d)} \\
% %       &\leq \eps^{2-2r}C \left ( \| Q(\eps \cdot) \errh(\cdot)
% %         \|_{L^2(\real^d)} + \| Q(\eps \cdot) \errl(\cdot)
% %         \|_{L^2(\real^d)} + 1 \right ) \\
%       &\leq \eps^{2-2r}C \left [ (1 + \eps^{N-1} \Upsilon^\eps) \left (\|
%         \errh(\cdot) \|_{L^2(\real^d)} + 
%         \| \errl(\cdot) \|_{L^2(\real^d)} \right )  + 1 +
%       \Upsilon^\eps \right ] ,
  \end{split}
  \end{equation}
  where the constants $C$ are independent of $\eps$.  The third
  inequality results from the Weyl eigenvalue asymptotics
  \cite{Hor4} and the bound \eqref{eq:38}.  The last inequality
  results from direct estimation of the error terms \eqref{eq:37}.
  With $\eps$ small enough so that $\eps^{2-2r} C \leq 1/2$, we can
  subtract the term involving $\eps^{2-2r} C \| \errh
  \|_{H^2(\real^d)}$ from both sides of the inequality and then divide
  by $1 - \eps^{2-2r} C(1 + \eps^{N-1} \Upsilon^\eps)$ to obtain the
  desired estimate \eqref{eq:71}.
\end{proof}
\begin{rem}
  Note that we do not obtain smoothness of $\errh$ in $\eps$.  When
  applying the implicit function theorem in the above proof, we did
  not use any smoothness of the map $\tilde{G}$ in $\eps$.  This is
  because of the sharp, $\eps$ dependent cutoff function $\chi(|\kv|
  \geq \eps^r)$.
\end{rem}

\begin{rem}
  The estimate for $\errh \in H^2(\real^d)$ and $\errl \in
  L^2(\real^d)$ in \eqref{eq:71} can also be proved for $\errh \in
  H^s(\real^d)$ and $\errl \in H^{s-2}(\real^d)$ for $s \geq 2$.  The
  proof for the case $s \geq 3$ involves application of the operator
  $(I + L_*)^{s/2-1}$, which can be shown to be equivalent to the
  $H^{s-2}$ norm, to Eq. \eqref{eq:19} and necessitates further
  regularity conditions on the functions $V(\x)$, $Q(\eps \x)$, and
  $U_n(\x,\eps \x)$, $n=0, 1, \ldots, N$.
\end{rem}

\subsection{Near Frequency Equation and its Scaling}
\label{sec:analys-near-freq}

We now study the near frequency equation (\ref{eq:44}) with the aid of certain
Taylor expansions for $|\kv| < \eps^r$, where we invoke our
  regularity hypothesis H1
\begin{align}
  \nonumber E_{b_*}(\kv) = &~E_* +
  \frac{1}{2} \partial_{k_j} \partial_{k_l} E_{b_*}(0) k_j k_l +
  \frac{1}{6} \partial_{k_j}\partial_{k_l} \partial_{k_m} 
  E_{b_*}(\kv') k_j k_l k_m, \\
  \label{eq:47}
  = &~E_* + A_{jl}k_j k_l + \frac{1}{6} \partial_{k_j}\partial_{k_l} \partial_{k_m}
  E_{b_*}(\kv') k_j k_l k_m, \\
  \nonumber p_{b_*}(\x;\kv) = &~p_{b_*}(\x;0) + \partial_{k_j}
  p_{b_*}(\x;\kv'') k_j \\
  \label{eq:48}
  = &~w(\x) + \partial_{k_j}
  p_{b_*}(\x;\kv'') k_j,
\end{align}
for some $|\kv'(\kv)|, |\kv''(\kv)| < \eps^r$.  Inserting these
expansions into the near frequency equation (\ref{eq:44}), we have
\begin{equation}
  \label{eq:49}
  \begin{split}
    [A_{jl} &k_j k_l - \eps^2 e_{A,Q} ]
    \errtl(\kv) \\
    &+ \eps^2 \chi(|\kv | 
    < \eps^r) \blochn{b_*}{Q(\eps \cdot) \errl ( \cdot )}(\kv)
    = \eps^2 \tilde{R}_{\textrm{near}}^\eps[\errl,\Upsilon^\eps](\kv),
  \end{split}
\end{equation}
where
\begin{equation}
  \label{eq:50}
  \begin{split}
    \tilde{R}_{\textrm{near}}^\eps[\errl,\Upsilon^\eps]&(\kv) \equiv 
    \chi(|\kv| < \eps^r) \Big [ \blochn{b_*}{R^\eps \left [ \errl +
        \errh[\errl,\Upsilon^\eps], \Upsilon^\eps \right]}(\kv)
    \\
    & ~ - \blochn{b_*}{Q(\eps \cdot) \errh[\errl,\Upsilon^\eps]
      (\cdot)}(\kv) \\
    &~ - \frac{1}{6 \eps^2} k_j k_l k_m
    \partial_{k_j} \partial_{k_l} \partial_{k_m} e_{b_*}(\kv')
    \errtl(\kv) \Big ] , \\
    =&~ \chi(|\kv| < \eps^r) \blochn{b_*}{\Upsilon^\eps w(\cdot)
      F_{A,Q}(\eps \cdot) +
      U^\#(\cdot, \eps \cdot)}(\kv) \\
    & ~+ \bigo_{\mathcal{X}^2} \left[ (\eps + \eps^{2-2r}+
      \eps^{3r-2}) \|\errtl \|_{\mathcal{X}^2}\right] ,
  \end{split}
\end{equation}
where the leading order behavior comes from the definition of $R^\eps$
in Eq.~\eqref{eq:37}.  Recall that $U^\#$ is defined in
Eq.~\eqref{eq:37}.  The terms proportional to $\eps^{3r-2}$ put a
further restriction on the exponent $r$.  In order to keep
$\tilde{R}_{\textrm{near}}^\eps$ order one, we require
\begin{equation}
2/3 < r < 1.
\label{r-constraint}
\end{equation}
In this case, the error term satisfies the estimate
\begin{equation}
  \label{eq:39}
  \begin{split}
    &\left \| \tilde{R}^\eps_{\textrm{near}}[\errl,\Upsilon^\eps] \right
    \|_{{\cal
        X}^2} \\
    & \qquad \qquad \leq C \left [ 1 + \Upsilon^\eps + (\eps +
      \eps^{2-2r} + \eps^{3r-2}) \| \errtl \|_{\mathcal{X}^2} \right ]
    .
  \end{split}
\end{equation}
\bigskip

Equation \eqref{eq:49} can be rewritten suggestively as:
\begin{equation}
  \label{eq:49a}
  \begin{split}
 [A_{jl} &\frac{k_j}{\eps}\ \frac{k_l}{\eps} -  e_{A,Q} ]
    \errtl(\kv) \\
    &+ \chi\left(\left|\frac{\kv}{\eps}\right| 
    < \eps^{r-1}\right) \blochn{b_*}{Q(\eps \cdot) \errl ( \cdot )}(\kv)
    =  \tilde{R}_{\textrm{near}}^\eps[\errl,\Upsilon^\eps](\kv),
  \end{split}
  \end{equation}

Thus, we seek a solution of Eqs.~(\ref{eq:49}), (\ref{eq:50}) in the form
\begin{align}
  \label{eq:51}
    \errtl(\kv) &= \chi(|\kv | < \eps^r)\ \frac{1}{\eps^d}\
    \widehat{\Phi}\left(\frac{\kv}{\eps}\right)\ =\
    \chi\left(\frac{|\kv |}{\eps} < \eps^{r-1}\right)\
    \frac{1}{\eps^d}\
    \widehat{\Phi}\left(\frac{\kv}{\eps}\right), \\
    \errl(\x) &= \mathcal{T}^{-1} \left \{ \chi(|\cdot | < \eps^r)\
      \frac{1}{\eps^d}\ \widehat{\Phi}\left(\frac{\cdot}{\eps}\right)
      p_{b_*}(\x;\cdot) \right \}(\x) \nn\\
    &= \mathcal{T}^{-1} \left \{ \chi\left(\frac{|\cdot |}{\eps} <
        \eps^{r-1}\right)\ \frac{1}{\eps^d}\
      \widehat{\Phi}\left(\frac{\cdot}{\eps}\right) p_{b_*}(\x;\cdot)
    \right \}(\x)\nn
\end{align}
\bigskip

\bigskip

In the next two Lemmata,  Lemma \ref{lemma:TQPsi} and \ref{lemma:TR}, 
 we express the terms of  \eqref{eq:49a}, which involve the Gelfand-Bloch transform, in terms of the classical Fourier
transform plus a remainder, estimated to be small in $\eps$.   
\bigskip

\begin{lemma}\label{lemma:TQPsi}
\begin{itemize}
\item[(A)] Assume $\errtl(\kv)$ is given by \eqref{eq:51}. Then, 
\begin{align} [A_{jl} &\frac{k_j}{\eps}\ \frac{k_l}{\eps} -  e_{A,Q} ]
  \errtl(\kv)\nn\\
  & =\ \cF_\y \left \{ \left(\ A_{jl}\D_{y_j}\D_{y_l}-e_{A,Q}\
      \right) \ \chi(|\nabla|<\eps^{r-1})\
      \Phi \right \}\left(\frac{\kv}{\eps}\right)
  \nn \end{align}
\item[(B)]
\begin{align}
  & \blochn{b_*}{Q(\eps\cdot) \errl(\cdot)} (\kv)
  \ =\
  \frac{1}{\eps^d} \cF_\y \left\{\ Q\ \chi(|\nabla|\le\eps^{r-1})\Phi\
  \right \} \left(\frac{\kv}{\eps}\right)\ +\ \cE(\kv)
\end{align}
   where
  \begin{equation}
    \label{eq:124}
    \begin{split}
      \| \cE \|_{L^2(\real^d)} \le C  \eps^s \ \| Q \|_{H^s(\real^d)} \| \Phi
      \|_{L^2(\real^d)} ,
    \end{split}
  \end{equation}
  with $s>d$ and $0<r<1$.
  \end{itemize}
\end{lemma}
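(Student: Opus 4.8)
The plan is to treat (A) and (B) separately. Part (A) is a rescaling identity that carries no estimate: with $\errtl(\kv)$ as in \eqref{eq:51}, multiplication by the polynomial symbol $A_{jl}k_jk_l\eps^{-2}-e_{A,Q}$ commutes with the sharp cutoff $\chi(|\kv|<\eps^r)$; since $\chi(|\kv|<\eps^r)=\chi(|\kv/\eps|<\eps^{r-1})$ is exactly the Fourier-side cutoff attached to the multiplier $\chi(|\nabla|<\eps^{r-1})$, and since $A_{jl}k_jk_l\eps^{-2}-e_{A,Q}$, read as a function of $\kv/\eps$, is the Fourier symbol (with the normalization fixed by \eqref{FTdef} and \eqref{hessian}) of the constant-coefficient operator $A_{jl}\D_{y_j}\D_{y_l}-e_{A,Q}$, the change of variable $\kv\mapsto\kv/\eps$ carries the left-hand side of (A) onto the right-hand side. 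The real content is (B).

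For (B), the first step is to insert the representation \eqref{eq:51} of $\errl$ into $\blochn{b_*}{Q(\eps\cdot)\errl(\cdot)}$ and use linearity of the Gelfand-Bloch transform to write, with $u_b(\x;\kv)=e^{2\pi i\kv\cdot\x}p_b(\x;\kv)$,
\[
  \blochn{b_*}{Q(\eps\cdot)\errl(\cdot)}(\kv)=\int_{\Omega^*}\chi(|\kv'|<\eps^r)\,\eps^{-d}\,\widehat\Phi(\kv'/\eps)\,\blochn{b_*}{Q(\eps\cdot)\,u_{b_*}(\cdot;\kv')}(\kv)\,d\kv'.
\]
The second step is to evaluate $\blochn{b_*}{Q(\eps\cdot)u_{b_*}(\cdot;\kv')}(\kv)$ directly from the definition \eqref{eq:43}: expanding the $\Omega$-periodic amplitude $p_{b_*}(\x;\kv')$ in a Fourier series over $\Omega$, using the dilation identity $\widehat{Q(\eps\,\cdot)}(\bl)=\eps^{-d}\widehat Q(\bl/\eps)$, and applying Parseval on $\Omega$ when projecting onto $p_{b_*}(\cdot;\kv)$, one obtains
\[
  \blochn{b_*}{Q(\eps\cdot)u_{b_*}(\cdot;\kv')}(\kv)=\eps^{-d}\sum_{\mathbf n\in\Z^d}\Pi_{\mathbf n}(\kv,\kv')\,\widehat Q\big((\kv-\kv'-\mathbf n)/\eps\big),
\]
where $\{\Pi_{\mathbf n}(\kv,\kv')\}_{\mathbf n}$ are the Fourier coefficients over $\Omega$ of $\overline{p_{b_*}(\cdot;\kv)}\,p_{b_*}(\cdot;\kv')$; in particular $\Pi_0(\kv,\kv')=\innero{p_{b_*}(\cdot;\kv)}{p_{b_*}(\cdot;\kv')}$ and $\Pi_0(\kv,\kv)\equiv1$. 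The third step is to isolate the principal contribution: the $\mathbf n=0$ summand, with $\Pi_0(\kv,\kv')$ replaced by $1$, equals --- after substituting $\kv'=\eps\boldsymbol\eta$, recognizing $\chi(|\cdot|<\eps^{r-1})\widehat\Phi$ as the Fourier transform of $\chi(|\nabla|<\eps^{r-1})\Phi$, and using $\widehat f*\widehat g=\widehat{fg}$ --- exactly $\eps^{-d}\cF_\y\{Q\,\chi(|\nabla|\le\eps^{r-1})\Phi\}(\kv/\eps)$, the asserted main term; everything left over is $\cE(\kv)$.

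The fourth and last step is the estimate of $\cE$, which I would split into (i) the $\mathbf n\ne0$ tail and (ii) the $\mathbf n=0$ contribution of $\Pi_0(\kv,\kv')-1$. For (i): on the support $|\kv|,|\kv'|<\eps^r$ one has $|\kv-\kv'-\mathbf n|\ge|\mathbf n|/2$ for $\eps$ small, so $\widehat Q$ is sampled only at arguments of modulus $\gtrsim1/\eps$; the high-frequency tail estimate $\int_{|\boldsymbol\zeta|\ge R}|\widehat Q|^2\lesssim(1+R)^{-2s}\|Q\|_{H^s}^2$, together with Cauchy--Schwarz, a reindexing of the sum over the reciprocal-lattice cells, and the uniform bound $\sum_{\mathbf n}|\Pi_{\mathbf n}(\kv,\kv')|^2=\|\overline{p_{b_*}(\cdot;\kv)}p_{b_*}(\cdot;\kv')\|_{L^2(\Omega)}^2\le C$ for $\kv,\kv'$ near $\kv_*$ (from H1 and elliptic regularity of $p_{b_*}(\cdot;\kv)$), yields a bound $C\eps^{\alpha}\|Q\|_{H^s}\|\Phi\|_{L^2}$ with a positive exponent $\alpha=\alpha(s,r,d)$ increasing with $s$; this is where the hypothesis $s>d$ is used, and it also secures convergence of the $\mathbf n$-series. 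For (ii): $\Pi_0$ is continuous and $\equiv1$ on the diagonal, so $\Pi_0(\kv,\kv')-1$ vanishes as $(\kv,\kv')\to(\kv_*,\kv_*)$ at a rate governed by the $C^3$-smoothness of $p_{b_*}$ in $\kv$ (H1); since $\widehat Q((\kv-\kv')/\eps)$ localizes $\kv-\kv'$ to scale $\eps$, this piece is subordinate and, in the application, is absorbed together with the $\bigo_{\cX^2}[(\eps+\eps^{2-2r}+\eps^{3r-2})\|\errtl\|_{\cX^2}]$ errors of \eqref{eq:50}. I expect the main obstacle to be step (i): keeping the three competing scales $\eps^r$ (width of the near window), $\eps^{r-1}$ (cutoff in the rescaled frequency) and $\eps^{-1}$ (reciprocal-lattice spacing) in order, and tracking how the $\eps^{-d}$ prefactor interacts with the $L^2$ norm after rescaling, so as to land on the exponent stated in \eqref{eq:124}; by contrast, the decay of $\widehat Q$ and the summability of the $\Pi_{\mathbf n}$ are routine.
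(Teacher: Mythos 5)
Your treatment of (A) is correct and is the same rescaling observation the paper makes. For (B), you take a genuinely different route. The paper Taylor-expands the Bloch amplitude in the spectral variable, $p_{b_*}(\x;\kv') = p_{b_*}(\x;0) + \cO(|\kv'|)$, so that $\errl$ is written as the fixed periodic factor $p_{b_*}(\x;0)$ times a slowly-varying envelope plus a small remainder $\cE_1$; the $\kv'$-dependence of the amplitude is thus pushed into a separate error \emph{before} the transform is applied, and the commutation of $\cT$ with the fixed periodic factor then produces the lattice sum. You instead keep the full $(\kv,\kv')$-dependence, expand the bilinear kernel $\overline{p_{b_*}(\cdot;\kv)}\,p_{b_*}(\cdot;\kv')$ in a Fourier series over $\Omega$ with coefficients $\Pi_{\mathbf n}(\kv,\kv')$, and isolate the $(\mathbf n,\Pi_0)=(0,1)$ term. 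These are different decompositions of the same object, and the real content --- estimating the $|\mathbf n|\ge 1$ (resp.\ $|\mv|\ge 1$) tail by the decay of $\widehat Q$ at arguments of modulus $\gtrsim |\mathbf n|/\eps$, Cauchy--Schwarz, and the $s>d$ summability of $\sum |\mathbf n|^{-s}$ --- is the same in both. Your algebra recovers the main term with the correct $\eps^{-d}$ prefactor.

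There is one genuine gap. You claim that the piece coming from $\Pi_0(\kv,\kv')-1$ ``is absorbed together with the $\bigo_{\mathcal{X}^2}[(\eps+\eps^{2-2r}+\eps^{3r-2})\|\errtl\|_{\mathcal{X}^2}]$ errors of \eqref{eq:50}.'' That may be true in the downstream application, but Lemma~\ref{lemma:TQPsi}(B) asserts a specific bound $\|\cE\|_{L^2}\le C\eps^s\|Q\|_{H^s}\|\Phi\|_{L^2}$, so this contribution must be estimated \emph{here}, not deferred. Moreover, the heuristic you invoke --- that $\widehat Q((\kv-\kv')/\eps)$ localizes $\kv-\kv'$ to scale $\eps$ and hence controls $\Pi_0-1$ --- does not apply as stated: near $\kv_*=0$ one has $|\Pi_0(\kv,\kv')-1|\lesssim |\kv|+|\kv'|$, which is governed by the size of $\kv,\kv'$ individually (that is, by $\eps^r$ on the near window), \emph{not} by $|\kv-\kv'|$; the convolution localization buys you nothing for this piece. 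What you need is a direct estimate of
\begin{equation*}
  \int_{\Omega^*}\chi(|\kv'|<\eps^r)\,\eps^{-d}\widehat\Phi\!\left(\tfrac{\kv'}{\eps}\right)\left[\Pi_0(\kv,\kv')-1\right]\eps^{-d}\widehat Q\!\left(\tfrac{\kv-\kv'}{\eps}\right)d\kv'
\end{equation*}
in $L^2_\kv$, using $|\Pi_0-1|\lesssim\eps^r$ on the support together with Young's inequality and, if needed, symmetry / normalization of $p_{b_*}$ to upgrade $\eps^r$ to $\eps^{2r}$. This is precisely the analogue of the paper's estimate for $\blochn{}{Q(\eps\cdot)\cE_1(\cdot)}$ (its equation giving $C\eps^{s+r}\|Q\|_{H^s}\|\Phi\|_{L^2}$), which the paper handles by re-running the lattice-sum argument on $\cE_1$ rather than by a convolution-localization heuristic; you should do the same with your $\Pi_0-1$ piece and verify it meets (or does not meet) the stated exponent. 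Until this is done, your argument proves a version of the lemma with a weaker error bound, not the bound as written.
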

%%% footnote old proof
%%%%
%%%end of footnote old proof

\nit {\bf Proof of Lemma \ref{lemma:TQPsi}:} \ \ Recall the notation
$\mathcal{F}f=\widehat{f}$ % and $\mathcal{F}^{-1}f=\widecheck{f}$
for the Fourier transform % and its inverse
given by \eqref{FTdef}.  By \eqref{eq:51} since $\kv$
is localized near $0$ we have, Taylor expanding $p_{b_*}(\x;\kv)$
about $\kv=0$,
\begin{align*}
  \errl(\x) &= p_{b_*}(\x;0)\ \mathcal{F}^{-1} \left \{\ \chi\left(
      \left|\cdot\right|<\eps^{r}\right) \ \frac{1}{\eps^d}\
    \widehat{\Phi}\left(\frac{\cdot}{\eps}\right)\ \right \}\ (\x)\ +\
  \cE_1(\x) \\
  &= p_{b_*}(\x;0) \chi \left( |\nabla_\y| < \eps^{r-1} \right )
  \Phi(\y) |_{\y = \eps \x} + \cE_1(\x), \\
  \| \cE_1 \|_{L^2(\R^d)} & \le C \eps^r \| \Phi \|_{L^2(\R^d)}, \quad
  C=C\left( \| \nabla_\kv p_{b_*}
    \|_{L^\infty(\Omega\times\Omega^*)^d}\ \right).
\end{align*}
  
Since $\mathcal{T}$ commutes with multiplication by a periodic
function (see \eqref{eq:116}) and since $p_{b_*}(\x;0)$ is periodic
\begin{equation}
  \label{TQPsinear}
  \begin{split}
    & \blochn{}{Q(\eps\cdot) \errl(\cdot)}(\x,\kv)\\
    &\ \ \ = p_{b_*}(\x;0)\ \blochn{}{Q(\eps\cdot) \ \chi \left(
        |\nabla_{\eps \cdot}| < \eps^{r-1} \right ) \Phi(\eps \cdot)}
    (\x,\kv) + \blochn{}{Q(\eps \cdot) \cE_1(\cdot)}(\x,\kv) .
  \end{split}
\end{equation}

By the definition of $\mathcal{T}$, \eqref{eq:43}, we have
\begin{align}
  \label{eq:TQ}
  & \blochn{}{Q(\eps\cdot) \ \chi \left( |\nabla_{\eps \cdot}| <
      \eps^{r-1} \right ) \Phi(\eps \cdot)}
  (\x,\kv)\\
  &= \mathcal{F}_\x \ \left\{ Q(\eps\cdot)\ \chi \left( |\nabla_{\eps
        \cdot}| < \eps^{r-1} \right ) \Phi(\eps \cdot) \
  \right \} \left(\kv\right)\nn\\
  &+\ \ \sum_{\mv\ne0}\ \mathcal{F}_\x \ \left\{ Q(\eps\cdot)\
    \mathcal{F}_\kv^{-1} \left \{\ \chi\left(
        \left| \circ \right|<\eps^{r}\right) \ \frac{1}{\eps^d}\
      \widehat{\Phi}\left(\frac{\circ}{\eps}\right)\ \right \} (\cdot) \
  \right \} \left(\kv+\mv\right)\ e^{2\pi i \mv\cdot \x}\nn\\
  &=\ \ \frac{1}{\eps^d} \mathcal{F}_\y \ \left\{ Q(\cdot)\ \chi \left(
      |\nabla| < \eps^{r-1} \right ) \Phi(\cdot) \
  \right \} \left(\frac{\kv}{\eps}\right)\nn\\
  &\ \ \ \ \ +\ \sum_{|\mv|\ge 1}\ \int
  \widehat{Q}\left(\frac{\lv}{\eps}\right)\ \frac{1}{\eps^d}\
  \widehat{\Phi}\left(\frac{\kv+\mv-\lv}{\eps}\right) \
  \chi(|\kv+\mv-\lv |<\eps^r)\ d\lv\ e^{2\pi i
    \mv\cdot\x}\ \label{eq:sum} .
\end{align}

To prove \eqref{eq:124} we need to estimate the sum in \eqref{eq:sum}
for $|\kv|<\eps^r$.  For such $\kv$ the sum can be estimated as
follows:
\begin{align}
  & \sum_{|\mv|\ge 1}\ \int \left|
    \widehat{Q}\left(\frac{\lv}{\eps}\right)\right|\ \frac{1}{\eps^d}\
  \left|\widehat{\Phi}\left(\frac{\kv+\mv-\lv}{\eps}\right)\right|
  \ \chi(|\kv+\mv-\lv |<\eps^r)\ d\lv\nn\\
  &\le\ \sum_{|\mv|\ge 1}\ \left(\ \int_{|\mv-\lv|\le 2\eps^r
    } \left| \widehat{Q}\left(\frac{\lv}{\eps}\right) \right|^2\
    \frac{1}{\eps^d}\ d\lv\ \right)^{1\over2}\ \left(\
    \int_{|\mv-\lv|\le 2\eps^r } \left|
      \widehat{\Phi}\left(\frac{\lv}{\eps}\right) \right|^2\
    \frac{1}{\eps^d} \ d\lv\ \right)^{1\over2}\nn\\
  &\le C \sum_{|\mv|\ge 1} \left(1 +
    \left|\frac{\mv}{\eps}\right|^2 \right)^{-\frac{s}{2}}
  \left(\int_{|\mv-\lv|\le 2\eps^r } \left| 
      \widehat{Q}\left(\frac{\lv}{\eps}\right) \right|^2 \left(1 + 
      \left|\frac{\lv}{\eps}\right|^2 \right)^s \frac{1}{\eps^d}d\lv\
  \right)^{1\over2} \| \Phi \|_{L^2(\R^d)}\nn\\ 
  &\le\ C\ \eps^s\ \sum_{|\mv|\ge1} |\mv|^{-s}\ \|Q\|_{H^s(\R^d)}\ \| \Phi
  \|_{L^2(\R^d)}\ \le C\ \eps^s\ \|Q\|_{H^s(\R^d)}\ \| \Phi \|_{L^2(\R^d)},\ \
  s>d.\label{eq:sumbound}
\end{align}

A similar calculation shows 
\begin{equation}
  \label{eq:1}
  \| \blochn{}{Q(\eps \cdot) \cE_1(\cdot)} \|_{L^2(\R^d,\Omega^*)} \le
  C \eps^{s + r} \| Q \|_{H^s(\R^d)} \| \Phi \|_{L^2(\R^d)}, \quad s >
  d, \quad 0 < r < 1.
\end{equation}

Since $\cT_{b_*}\{f\}(\kv) = \langle p_{b_*}(\cdot,\kv),
\cT\{f\}(\cdot,\kv)\rangle_{L^2(\Omega)}$, we have by
\eqref{TQPsinear}, \eqref{eq:TQ}, \eqref{eq:sum},
\eqref{eq:sumbound}, and (\ref{eq:1}) that
\begin{align}
  &\blochn{b_*}{Q(\eps\cdot) \errl(\cdot)}(\kv)\ =\ \frac{1}{\eps^d}
  \cF_\y \left\{\ Q\ \chi(|\nabla|\le\eps^{r-1})\Phi\
  \right \} \left(\frac{\kv}{\eps}\right)\ +\ \cE(\kv) ,\nn\\
  & \|\cE\|_{L^2(\R^d)} \ \le\ C \eps^s \|Q\|_{H^s(\R^d)}\ \| \Phi
  \|_{L^2(\R^d)},\ \ s>d,\ \ 0<r<1,
\end{align}
which is the assertion of Lemma \ref{lemma:TQPsi}.
\bigskip

\begin{lemma}\label{lemma:TR}
  The right hand side of eq.~(\ref{eq:49}), defined in
  eq.~(\ref{eq:50}), satisfies
  \begin{align}
    \label{eq:126}
      &\tilde{R}^\eps_{\textrm{near}}[\errl,\Upsilon^\eps]\left(\kv \right) \\
      &=
      \frac{1}{\eps^d} \chi\left(\left|\frac{\kv}{\eps}\right|< \eps^{r-1}\right) 
      \left [ \Upsilon^\eps
        \widehat{F}_{A,Q}   \left(\frac{\kv}{\eps}\right) + 
        \innero{w(\cdot)}{\widehat{U}^\#\left(\cdot,\frac{\kv}{\eps}\right)}
      \right ]+ \cS(\kv),\nn
  \end{align}
  where
  \begin{equation}
    \label{eq:127}
    \| \cS \|_{L^2(\real^d)} \le C \left(\eps + \eps^{2 - 2r} + \eps^{3r-2} \|
    \right ) \| \Phi \|_{L^2(\real^d)} .
  \end{equation}
\end{lemma}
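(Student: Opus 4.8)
\medskip
\noindent\textbf{Proof proposal.}\ \ The plan is to reduce the assertion to the Gelfand--Bloch--to--Fourier conversion already performed in Lemma \ref{lemma:TQPsi}. First I would start from the second equality in \eqref{eq:50}, which has already peeled off, inside $\tilde{R}^\eps_{\textrm{near}}$, the explicit term $\chi(|\kv|<\eps^r)\blochn{b_*}{\Upsilon^\eps w(\cdot)F_{A,Q}(\eps\cdot)+U^\#(\cdot,\eps\cdot)}(\kv)$ together with a remainder that is $\bigo_{\mathcal{X}^2}\big[(\eps+\eps^{2-2r}+\eps^{3r-2})\|\errtl\|_{\mathcal{X}^2}\big]$. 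The three exponents there arise, respectively, from the $O(\eps)$ pieces of $R^\eps$ in \eqref{eq:37} (the $-\eps QU_N$ term, the $\Upsilon^\eps\sum_{n=1}^N\eps^nU_n$ term, the $\err$-linear terms, and $\sum_{n=1}^{N-1}\eps^n(\cdots)$), from the far-frequency contribution $\blochn{b_*}{Q(\eps\cdot)\errh}$ bounded via Proposition \ref{sec:clos-resc-low-2}, and from the cubic Taylor remainder of $E_{b_*}$ restricted to $|\kv|<\eps^r$. Using the rescaling \eqref{eq:51} together with the norm equivalence \eqref{eq:69}, this $\mathcal{X}^2$ remainder is controlled in $L^2(\real^d)$ by $C(\eps+\eps^{2-2r}+\eps^{3r-2})\|\Phi\|_{L^2(\real^d)}$ --- the $\Phi$-independent $O(\eps)$ contributions, and those carrying a factor of the parameter $\Upsilon^\eps$, being absorbed into the constant on the bounded ball in $(\Phi,\Upsilon^\eps)$ over which the fixed-point argument runs --- so it may be placed into $\cS$.

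It then remains to convert the explicit Bloch-transform term into the Fourier expression appearing in \eqref{eq:126}, which I would carry out exactly as in the proof of Lemma \ref{lemma:TQPsi}. Both $w(\x)$ and every fast, $\x$-periodic factor occurring in $U^\#$ commute with $\cT$ by \eqref{eq:116}: indeed, after the substitution $\y=\eps\x$, each term of the multiscale hierarchy \eqref{eq:35} is a finite sum of an $\x$-periodic function times a slowly varying envelope built from $F_{A,Q}$, $Q$ and their derivatives, so $\blochn{b_*}{\cdot}$ reduces to the $L^2(\Omega)$ pairing against the Gelfand--Bloch transforms of those envelopes. Taylor expanding $p_{b_*}(\x;\kv)=w(\x)+k_j\partial_{k_j}p_{b_*}(\x;\kv'')$ about $\kv=0$ (admissible for $|\kv|<\eps^r$ by H1, which bounds $\partial_\kv p_{b_*}$), the leading contraction against $w(\x)$, combined with $\|w\|_{L^2(\Omega)}=1$ and linearity of the pairing, produces precisely $\eps^{-d}\big[\Upsilon^\eps\widehat{F}_{A,Q}(\kv/\eps)+\innero{w(\cdot)}{\widehat{U}^\#(\cdot,\kv/\eps)}\big]$, where hats denote the Fourier transform in the slow variable. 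The off-diagonal lattice-sum tails ($\mv\ne0$) generated by $\cT$, and the first-order Taylor correction proportional to $k_j\partial_{k_j}p_{b_*}$ (which carries an extra factor $|\kv|<\eps^r$), are estimated exactly as in \eqref{eq:sum}--\eqref{eq:sumbound} and turn out to be negligible in $L^2$ relative to the exponents $\eps,\eps^{2-2r},\eps^{3r-2}$, using the rapid Fourier decay of $F_{A,Q}$ (which lies in $H^s(\real^d)$ with $s=\sigma+2>d$, by elliptic bootstrapping from $L_{A,Q}F_{A,Q}=e_{A,Q}F_{A,Q}$ and $Q\in H^\sigma$) and of the slow envelopes occurring in $U^\#$ (which inherit $H^s$ regularity, $s>d$, in the slow variable from $Q\in H^\sigma$ together with the smoothness of $V$, $E_{b_*}$, $p_{b_*}$ posited in H1, through the hierarchy of Section \ref{sec:homog-multi-scale}).

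Collecting the remainder from \eqref{eq:50} with the lattice-sum and Taylor remainders just described, and writing all of them with the $\|\Phi\|_{L^2(\real^d)}$ dependence via \eqref{eq:51}, yields the representation \eqref{eq:126} with the bound \eqref{eq:127}. The main obstacle I anticipate is precisely this last conversion step, and specifically the lattice-sum tail estimate: one must verify that the off-diagonal Gelfand--Bloch contributions of $w(\x)F_{A,Q}(\eps\x)$ and of the products (periodic in $\x$) $\times$ (slow in $\eps\x$) that assemble $U^\#$ are negligible to every order relevant for \eqref{eq:127}, which forces a careful accounting of how much Fourier decay of $F_{A,Q}$ and of the $U_n$-envelopes --- hence how much regularity of $Q$ (via H1) and of $V$ --- is actually consumed. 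This is a direct adaptation of \eqref{eq:sum}--\eqref{eq:sumbound}, but tracking the three exponents $\eps$, $\eps^{2-2r}$, $\eps^{3r-2}$ through it requires some care.
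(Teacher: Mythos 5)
Your proposal follows essentially the same route as the paper, whose own proof is simply the pointer ``follows in a similar manner as to Lemma~\ref{lemma:TQPsi} by use of eqs.~(\ref{eq:37}) and (\ref{eq:39}) along with the estimates in Prop.~\ref{sec:clos-resc-low-2} and eq.~(\ref{eq:50})''; you have correctly unpacked that pointer — identifying the origin of each of the three exponents $\eps$, $\eps^{2-2r}$, $\eps^{3r-2}$, and reducing the Bloch--to--Fourier conversion of the explicit term $\chi(|\kv|<\eps^r)\blochn{b_*}{\Upsilon^\eps w F_{A,Q}(\eps\cdot)+U^\#}$ to the lattice-sum estimate \eqref{eq:sum}--\eqref{eq:sumbound}, invoking the same $\cT$-commutation, Taylor expansion of $p_{b_*}$, and Fourier-decay regularity considerations that Lemma~\ref{lemma:TQPsi} uses. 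The reasoning is sound and is a legitimate expansion of the paper's terse proof.
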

\begin{proof}
  The proof follows in a similar manner as to Lemma \ref{lemma:TQPsi}
  by use of eqs.~(\ref{eq:37}) and (\ref{eq:39}) along with the estimates in
  Prop.~\ref{sec:clos-resc-low-2} and eq.~(\ref{eq:50}).
\end{proof}

% Substitution of the expansions of Lemmata \ref{lemma:TQPsi} and
% \ref{lemma:TR} into \eqref{eq:49a} and defining
% \begin{equation}
% \kav =
% \frac{\kv}{\eps}\label{kav-def}
% \end{equation}
%  results in
% \begin{equation}
%   \label{eq:53}
%   \begin{split}
%     & \left[\ A_{lm} \kappa_l \kappa_m - e_{A,Q}\ \right] \ \chi(|\kav
%     | <
%     \eps^{r-1})\ \cF{(\Phi)}(\kav) \\[3mm]
%     &+ \chi\left(\left|\kav\right| < \eps^{r-1}\right)\ \cF\left(\ Q\
%       \chi(|\nabla|\le\eps^{r-1})\Phi\
%     \right)\left(\kav\right)\   \\[3mm]
%     & = \chi(|\kav | < \eps^{r-1}) \left [ \Upsilon^\eps
%       \widehat{F}_{A,Q} \left(\kav\right) +
%       \innero{w(\cdot)}{\widehat{U}^\#\left(\cdot,\kav\right)}
%     \right ] \\[3mm]
%     &+ \eps^d \chi(|\kav | < \eps^{r-1})
%     \tilde{R}^\eps_{\textrm{near}}[\errl,\Upsilon^\eps](\eps \kav) .
%   \end{split}
% \end{equation}

\subsection{Solution of the Near Frequency Equation for Small $\eps$}
\label{sec:lyap-schm-reduct}

The results from the previous section enable us to complete the proof
of Theorem \ref{sec:main-theorem}.

Substitution of the expansions of Lemmata \ref{lemma:TQPsi} and
\ref{lemma:TR} into \eqref{eq:49a} and defining
\begin{equation}
  \kav = \frac{\kv}{\eps}\label{kav-def}
\end{equation}
results in
\begin{proposition}
  The near frequency component $\Phi$ satisfies the equation
  \begin{align}
 &   \cF_\y\ \{ \left(\ A_{jl}\D_{y_j}\D_{y_l}-e_{A,Q}\ \right) \ 
           \chi( |\nabla|<\eps^{r-1} )\
           \Phi \}\left(\kav\right)\nn\\
    &\ \ \  +\  
    \cF_\y\{ Q \chi(|\nabla| < \eps^{r-1}) \Phi  \}
    (\kav) 
    \ \ = \cF_\y{H}[\Phi, \Upsilon^\eps, \eps] .
 \label{FTeqn} \end{align}

  The right hand side has the following form
  \begin{equation}
    \label{eq:88}
    \begin{split}
      \cF{H}[\Phi,\Upsilon^\eps,\eps] &= \chi(|\kav| <
      \eps^{r-1}) (\eps^d
      \tilde{R}_{\textrm{near}}^\eps[\chi(|\nabla|< \eps^{r-1})
      \Phi/\eps^d,
      \Upsilon^\eps] + \tilde{R}_c[\Phi]) \\
      &= \chi(|\kav| < \eps^{r-1})[\Upsilon^\eps \widehat{F}_{A,Q}(\kav) +
      \innero{w(\cdot)}{\widehat{U}^\#(\cdot, \kav)} \\
      &\quad +
      \bigo(\eps^{r} + \eps^{2 - 2r} + \eps^{3r - 2}) ] .
    \end{split}
  \end{equation}
  We define the following operators $\chi_\eps$,
  $\overline{\chi}_\eps$ where
  \begin{equation}
    \label{eq:63}
    \chi_\eps \equiv \chi(|\nabla_\y| < \eps^{r-1}), \quad
    \overline{\chi}_\eps \equiv 1 - \chi_\eps = \chi(|\nabla_\y| \geq
    \eps^{r-1}) .
  \end{equation}
  In physical space we can write \eqref{FTeqn} as
  \begin{equation}
    \label{eq:64}
    \chi_\eps L_{A,Q} \chi_\eps \Phi = \chi_\eps
    H[\Phi,\Upsilon^\eps,\eps] .
  \end{equation}
  where
  \begin{equation}
    \label{eq:93}
    \begin{split}
      H[\Phi,\Upsilon^\eps,\eps](\y) &= 
      \Upsilon^\eps F_{A,Q}(\y) + \innero{w(\cdot)}{U^\#(\cdot,\y)}
      + h[\Phi,\Upsilon^\eps,\eps]  , \\
      \| \chi_\eps h[\Phi,\Upsilon^\eps,\eps] \|_{H^2(\real^d)} &\leq
      C(\eps^r + \eps^{2 - 2r} + \eps^{3r-2})(1 + \| \Phi
      \|_{H^2(\real^d)} ) .
    \end{split}
  \end{equation}
\end{proposition}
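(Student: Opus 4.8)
The plan is to obtain \eqref{FTeqn}, and hence \eqref{eq:64}, \eqref{eq:88}, \eqref{eq:93}, by a direct substitution: insert the ansatz \eqref{eq:51} for $\errtl$ into the rescaled near frequency equation \eqref{eq:49a}, evaluate each of the three resulting terms with Lemmata \ref{lemma:TQPsi} and \ref{lemma:TR}, multiply through by $\eps^d$, set $\kav = \kv/\eps$, and invert the Fourier transform $\cF_\y$.

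Carrying this out with $\errtl(\kv) = \chi(|\kv|<\eps^r)\,\eps^{-d}\widehat{\Phi}(\kv/\eps)$: Lemma \ref{lemma:TQPsi}(A) turns the first term $[A_{jl}(k_j/\eps)(k_l/\eps) - e_{A,Q}]\errtl(\kv)$ into $\cF_\y\{(A_{jl}\D_{y_j}\D_{y_l} - e_{A,Q})\chi_\eps\Phi\}(\kv/\eps)$, with $\chi_\eps = \chi(|\nabla_\y|<\eps^{r-1})$; Lemma \ref{lemma:TQPsi}(B) turns the second term $\blochn{b_*}{Q(\eps\cdot)\errl(\cdot)}(\kv)$ into $\eps^{-d}\cF_\y\{Q\,\chi_\eps\Phi\}(\kv/\eps)$ plus a remainder $\cE$ bounded in $L^2$ by $C\eps^s\|Q\|_{H^s}\|\Phi\|_{L^2}$, $s>d$; and Lemma \ref{lemma:TR} turns the right-hand side $\tilde R^\eps_{\textrm{near}}[\errl,\Upsilon^\eps](\kv)$ into $\eps^{-d}\chi(|\kv/\eps|<\eps^{r-1})[\Upsilon^\eps\widehat{F}_{A,Q}(\kv/\eps) + \innero{w(\cdot)}{\widehat{U}^\#(\cdot,\kv/\eps)}]$ plus a remainder $\cS$ bounded by $C(\eps + \eps^{2-2r} + \eps^{3r-2})\|\Phi\|_{L^2}$. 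Multiplying by $\eps^d$ cancels the three $\eps^{-d}$ factors, the substitution $\kv\mapsto\eps\kav$ converts the Bloch-variable equation into an equation in the slow dual variable $\kav$, and since the whole identity lives on $|\kav|<\eps^{r-1}$ it may be written as $\chi_\eps$ applied to both sides. Inverting $\cF_\y$ and recognizing, via the definition \eqref{eq:22} of $L_{A,Q}$, that the sum of the first two left-hand terms reconstitutes $L_{A,Q}\chi_\eps\Phi$, produces the sandwiched form \eqref{eq:64}, $\chi_\eps L_{A,Q}\chi_\eps\Phi = \chi_\eps H[\Phi,\Upsilon^\eps,\eps]$, with $H = \Upsilon^\eps F_{A,Q} + \innero{w(\cdot)}{U^\#(\cdot,\y)} + h$; here the correction $\tilde R_c[\Phi]$ appearing in \eqref{eq:88}, and hence the lower-order remainder $h = \cF_\y^{-1}(\cE + \cS) + \cF_\y^{-1}\tilde R_c[\Phi]$, records the discrepancy between the true $\errl$ sitting inside $\tilde R^\eps_{\textrm{near}}$ and its truncated, rescaled surrogate $\chi_\eps\Phi/\eps^d$.

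The step I expect to be the main obstacle is the $H^2$ estimate \eqref{eq:93} for $\chi_\eps h$. The remainder bounds supplied by the Lemmata are in $L^2(\real^d)$, whereas \eqref{eq:93} asks for an $H^2$ bound, and the crude frequency-localization inequality $\|\chi_\eps g\|_{H^2} \le C\eps^{r-1}\|g\|_{L^2}$ costs a power $\eps^{r-1}$ that blows up as $\eps\to0$. To avoid this one must re-run the estimates of Lemmata \ref{lemma:TQPsi}, \ref{lemma:TR} and of Proposition \ref{sec:clos-resc-low-2} in the $\mathcal{X}^2$/$H^2$ norms --- the higher-regularity variants referred to in the Remarks following \eqref{eq:71} and \eqref{eq:39} --- exploiting that $\errl$ is carried by the single band $b_*$ on frequencies $|\kv|<\eps^r$, that $F_{A,Q}$ and $U^\#(\cdot,\y)$ lie in $H^2$, and that $\Phi$ itself contributes two derivatives (so the bound may legitimately involve $\|\Phi\|_{H^2}$), and then tracking the powers of $\eps$ produced by the rescaling $\kv=\eps\kav$. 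The remaining points requiring care are essentially bookkeeping: the sharp, $\eps$-dependent cutoff $\chi_\eps$ commutes with neither multiplication by $Q(\y)$ nor $L_{A,Q}$, which is precisely why the reduced equation has the sandwiched form $\chi_\eps L_{A,Q}\chi_\eps$ rather than simply $L_{A,Q}$; the cubic Taylor remainder of $E_{b_*}$ carried through \eqref{eq:50} contributes the $\eps^{3r-2}$ term, so the constraint $2/3 < r < 1$ is exactly what makes the three exponents $r$, $2-2r$, $3r-2$ all positive and the remainder genuinely $\lilo(1)$; and the nonlinear dependence of $R^\eps$ on $\errl$ (through the ansatz \eqref{eq:51}) and on $\errh[\errl,\Upsilon^\eps]$ must be handled through the bound \eqref{eq:71} so that every $\|\errh\|$-dependent piece of $\tilde R_c[\Phi]$ is absorbed into the $\|\Phi\|_{H^2}$-proportional part of \eqref{eq:93}. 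With these in hand, \eqref{FTeqn}, \eqref{eq:64}, \eqref{eq:88}, and \eqref{eq:93} follow by collecting terms.
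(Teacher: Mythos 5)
Your proposal matches the paper's route exactly: the paper states this proposition as an immediate consequence of substituting Lemmata~\ref{lemma:TQPsi} and~\ref{lemma:TR} into~\eqref{eq:49a} and rescaling $\kav = \kv/\eps$, and your step-by-step unpacking of that substitution --- the $\eps^d$ cancellation, the outer/inner sandwiching by $\chi_\eps$ arising because the sharp cutoff does not commute with $Q$ or $L_{A,Q}$, and the provenance of the $\eps^r$, $\eps^{2-2r}$, $\eps^{3r-2}$ exponents --- is accurate. Your identification of the $L^2$-to-$H^2$ upgrade in~\eqref{eq:93} as the genuine subtlety is also apt: the paper silently relies on the higher-regularity variants noted in the remarks following~\eqref{eq:39} and~\eqref{eq:71}, which is precisely the fix you propose.
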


In order to solve Eq.~\eqref{eq:64}, we require a regularization that
guarantees the invertibility of the operator $\chi_\eps L_{A,Q}
\chi_\eps$.  Since zero is an isolated eigenvalue of $L_{A,Q}$, there
is a small disc of radius $\rho=\rho_\eps$ about zero, with boundary
$C_\rho$ such that for $\eps$ sufficiently small, $C_\rho$ encircles
$m$ eigenvalues of $\chi_\eps L_{A,Q} \chi_\eps$, counting
multiplicity, where $m$ is the multiplicity of zero as an eigenvalue
of $L_{A,Q}$.

Introduce the projection onto the spectral subspace associated with eigenvalues
of $\chi_\eps L_{A,Q} \chi_\eps$, encircled by $C_\rho$:
\begin{equation}
  \label{eq:59}
  \Pi_\eps \equiv \frac{1}{2\pi i} \int_{C_\rho} (\chi_\eps L_{A,Q} \chi_\eps -
  \lambda I)^{-1} \, d\lambda ,
\end{equation}
Note that
\begin{equation}
  \label{eq:106}
  \Pi_0 = \innerr{F_{A,Q}}{\cdot} ,
\end{equation}
projects onto the kernel of $L_{A,Q}$.

We now rewrite \eqref{eq:64} as the following 
system for $\Phi$ and $\Upsilon^\eps$:
\begin{align}
  \label{eq:104}
  \chi_\eps L_{A,Q} \chi_\eps \Phi &= \chi_\eps (I - \Pi_\eps)
  \chi_\eps H[\Phi,\Upsilon^\eps,\eps]
  \\
  \label{eq:105}
  \chi_\eps \Pi_\eps \chi_\eps H[\Phi,\Upsilon^\eps,\eps] &= 0 .
\end{align}
Any solution $(\Phi^\eps,\Upsilon^\eps)$ of \eqref{eq:104}, \eqref{eq:105} is a solution of \eqref{eq:64}

We claim that for 
$\eps$ small ~\eqref{eq:104} can be solved for $\Phi =
\Phi^\eps[\Upsilon^\eps]$ via the equivalent  nonlocal ``integral'' equation:
\begin{equation}
  \label{eq:58}
  \Phi^\eps = (\chi_\eps L_{A,Q} \chi_\eps)^{-1} \ (I - \Pi_\eps)\  
  \left(\Upsilon^\eps F_{A,Q} + \innerr{w(\cdot)}{U^\#(\cdot,\y)} +
  h[\Phi^\eps,\Upsilon^\eps,\eps] \right) .
\end{equation}
Indeed, the solution may be constructed using the iteration:
\begin{equation}
  \label{eq:60}
  \begin{split}
    \Phi^\eps_{j+1} &= (\chi_\eps L_{A,Q} \chi_\eps)^{-1} \chi_\eps (I - \Pi_\eps)
    \chi_\eps (\Upsilon^\eps F_{A,Q} + \innerr{w(\cdot)}{U^\#(\cdot,\y)} +
    h[\Phi^\eps_j,\Upsilon^\eps,\eps] ) ,\\
    \Phi^\eps_0 &= (\chi_\eps L_{A,Q} \chi_\eps)^{-1} \chi_\eps (I - \Pi_\eps)
    (\Upsilon^\eps F_{A,Q} + \innerr{w(\cdot)}{U^\#(\cdot,\y)} ).
  \end{split}
\end{equation}
By use of  \eqref{eq:59} and  \eqref{eq:93}, we have
\begin{align}
  \label{eq:62}
    &\left \| \Phi_{j+1} - \Phi_j \right \| 
    \leq 
    \left \| (\chi_\eps L_{A,Q} \chi_\eps)^{-1} \chi_\eps (I - \Pi_\eps)
      \chi_\eps (h[\Phi_j ,\Upsilon^\eps,\eps] - h[\Phi_{j-1},
      \Upsilon^\eps, \eps]) \right \|  \nn \\
    & \qquad \qquad \leq  \tau(\eps)\ \left \| \Phi_{j} - \Phi_{j-1} \right
    \|,\ \ \ \  \tau(\eps)\equiv\rho^{-1} C (\eps^r + \eps^{2-2r} +
    \eps^{3r-2}),\ \ \frac{2}{3} < r < 1\nn .
\end{align} 
Therefore, $\left \| \Phi_{j+1} - \Phi_j \right \| \le\ \tau(\eps)^j
\left \| \Phi_{1} - \Phi_0 \right \|$ and if $\eps$ satisfies the
smallness condition
%\begin{equation}
%  \label{eq:107}
$  \tau(\eps)< 1$,
%\end{equation}
the sequence $\{\Phi_j^\eps \}_{j \ge 0}$ is Cauchy in
$H^2(\real^d)$.  It therefore contains a subsequence, which is
convergent to a limit $\Phi_*^\eps \in H^2(\real^d)$.  By
$H^2(\real^d)$ continuity of the terms in the iteration (\ref{eq:60}),
one can pass to the limit in \eqref{eq:60} to obtain a solution
$\Phi_*^\eps$ which satisfies Eq.~(\ref{eq:58}).
 
This solution $\Phi_*^\eps$ is a functional of $\Upsilon^\eps$, and
appears in equation \eqref{eq:105}, which we view as an equation for
$\Upsilon^\eps$.  We write \eqref{eq:105} in the form
\begin{equation}
  \label{eq:121}
  \begin{split}
    g[\upsilon,\eps] \equiv & ~ \upsilon \chi_\eps \Pi_\eps
    F_{A,Q}(\y) + \chi_\eps \Pi_\eps 
    \innero{w(\cdot)}{U^\#(\cdot,\y)} \\
    &+ \chi_\eps \Pi_\eps
    h[\Phi[\upsilon],\upsilon,\eps] = 0 .
  \end{split}
\end{equation}
For $\eps = 0$, this equation has the solution $g[\upsilon_0,0] = 0$
with 
\begin{equation}
  \label{eq:122}
  \upsilon_0 = -
  \innerr{F_{A,Q}(\cdot)}{\innero{w(\circ)}{U^\#(\circ,\cdot)}} .
\end{equation}
The Jacobian, $D_\upsilon g[\upsilon_0,0] = 1$.  By the implicit
function theorem \cite{nirenberg_topics_2001}, for $|\eps|$
sufficiently small there exists a unique solution
$\eps\mapsto\Upsilon^\eps$ satisfying $g[\Upsilon^\eps,\eps] = 0$.
This completes the proof of Theorem \ref{sec:main-theorem}.

\appendix
 
\section{Effective Mass Tensor}
\label{sec:summary-discussion-1}

In this appendix we prove Proposition \ref{sec:bigoeps2-equation},
relating the Hessian matrix of the band dispersion function
$E_{b_*}(\kv)$ to the matrix $A$ resulting from the multiple-scale
analysis.  In addition, we prove hypothesis H2(b) under certain
conditions and the positive definiteness of $I - A$.

The solutions to the eigenvalue equation (\ref{ev-eqn}) are sought in
the form $u_b(\x;\kv) = e^{2\pi i\,\kv \cdot \x} p_b(\x;\kv)$, $\kv
\in \Omega_*$.  Then, $p_{b_*}$ and $E_{b_*}$ satisfy
\begin{equation}
  \label{eq:84}
  L_*^{(\kv)} p_{b_*} \equiv \left(- \Delta - 4\pi i \: \! \kv \cdot \nabla +
    4\pi^2 |\kv|^2   
    + V(\x) - E_{b_*}(\kv) \right) p_{b_*}(\x,\kv) =
  0, \quad \x \in \T^d ,
\end{equation}
with periodic boundary conditions $p_{b_*}(\x +
\mathbf{e}_j;\kv) = p_{b_*}(\x;\kv)$.  Taking the derivative
of eq.~(\ref{eq:84}) with respect to $k_j$ gives
\begin{equation}
  \label{eq:83}
  L_*^{(\kv)} \partial_{k_j} p_{b_*}(\x; \kv) = \left ( 4\pi i \partial_{x_j} -
    8\pi^2 k_j + \partial_{k_j} E_{b_*}(\kv)
  \right )p_{b_*}(\x;\kv) . 
\end{equation}
Evaluating eq.~(\ref{eq:83}) at $\kv = \kv_*$ and using the fact that
the kernel of $L_*^{(\kv_*)}$ is spanned by $p_{b_*}(\x;\kv_*)$, we
arrive at the solvability condition
\begin{equation}
  \label{eq:98}
  \begin{split}
    \partial_{k_j} E_{b_*}(\kv_*) = -4\pi i \innero{\left
        ( \partial_{x_j} + 2\pi i k_{*,j}
        \right ) p_{b_*}(\cdot;\kv_*)}{p_{b_*}(\cdot;\kv_*) } ,
  \end{split}
\end{equation}
for $j = 1,2, \ldots, d$.  When $\kv_* = 0$, $p_{b_*}(\x;\kv_*)$ is
real valued so that eq.~(\ref{eq:98}) simplifies to
\begin{equation}
  \label{eq:100}
  \partial_{k_j} E_{b_*}(\kv_*) = 0, \quad j = 1,2, \ldots, d .
\end{equation}
For the case $d = 1$, we also have
\begin{equation}
  \label{eq:102}
  E'_{b_*}(k_*) = 0, \quad k_* \in \{0, \pm 1/2 \}, \quad d = 1 .
\end{equation}
This result follows from properties of the Floquet discriminant $\Delta(E)$
\cite{Eastham-73}.  Briefly, for each $E$, one constructs a $2 \times
2$ fundamental matrix of solutions $M(E)$ and considers the values of
$E$ for which $M(E)$ has an eigenvalue $1$ or $-1$ corresponding to a
periodic or antiperiodic eigenvalue, respectively.  This is equivalent
to $\Delta(E) = \pm 2$ where
\begin{equation}
  \label{eq:103}
  \textrm{trace}(M(E_{b}(k))) = \Delta( E_{b}(k) ) = 2 \cos(2 \pi k(E_b)) .
\end{equation}
Differentiating this expression with respect to $k$ and evaluating at
$k = k_*$ gives
\begin{equation}
  \label{eq:108}
  \frac{d \Delta}{d E} \left( E_{b_*}(k) \right ) E_{b_*}'(k) = - 4 \pi \sin
  (2\pi k) .
\end{equation}
Since $d\Delta/dE(E) = 0$ if and only if $E$ is a double eigenvalue
(Theorem 2.3.1, \cite{Eastham-73}) and $E_{b_*}(k_*)$ is assumed
simple (hypothesis H2(a)), eq.~(\ref{eq:102}) follows.  

The above discussion proves hypothesis H2(b) at the left band edge
$\kv_* = 0$ for arbitrary $d$ and both left and right band edges $k_*
\in \{0, \pm 1/2 \}$ when $d = 1$.  It is possible for $\nabla
E_{b_*}(\kv_*) = 0$ in other cases, e.g.~separable potentials, and we
continue the discussion assuming this to be true.

It follows that
\begin{equation}
  \label{eq:99}
  \partial_{k_j} p_{b_*}(\x;\kv_*) = 
  4\pi i \big (  L_*^{(\kv_*)} \big )^{-1} \left \{ (\partial_{x_j}
    + 2 \pi i k_{*,j}) p_{b_*}(\cdot ; \kv_*) \right \}(\x) . 
\end{equation}
Differentiating eq.~(\ref{eq:83}) with respect to $k_l$ and setting
$\kv = \kv_*$ gives
\begin{equation}
  \label{eq:97}
  \begin{split}
    L_*^{(\kv_*)} \partial_{k_j} \partial_{k_l}
    &p_{b_*}(\x; \kv_*) = \\
    &-16 \pi^2 \left ( \partial_{x_l} +
      2\pi i k_{*,l} \right ) \big ( L_*^{(\kv_*)} \big )^{-1} \left \{ (\partial_{x_j}
      + 2 \pi i k_{*,j}) p_{b_*}(\cdot ; \kv_*) \right \}(\x) \\
    &-16 \pi^2 \left ( \partial_{x_j} +
      2\pi i k_{*,j} \right ) \big ( L_*^{(\kv_*)} \big )^{-1} \left \{ (\partial_{x_l}
      + 2 \pi i k_{*,l}) p_{b_*}(\cdot ; \kv_*) \right \}(\x) \\
    &+ \left ( \partial_{k_j
        k_l} E_{b_*}(\kv_*) - 8\pi^2 \delta_{jl} \right ) p_{b_*}(\x;\kv_*) .
  \end{split}
\end{equation}
Invoking the solvability condition and using $\| p_{b_*}
\|_{L^2(\Omega)} = 1$, eq.~(\ref{eq:100}) gives
\begin{equation}
  \label{eq:101}
  \begin{split}
    &\partial_{k_j k_l} E_{b_*}(\kv_*) = 8 \pi^2 \delta_{jl} \\
    &+ 16\pi^2 \innero{(\partial_{x_l} + 2\pi i k_{*,l}) \big (
      L_*^{(\kv_*)} \big )^{-1} \left \{ (\partial_{x_j} + 2 \pi i
        k_{*,j}) p_{b_*}(\circ ; \kv_*) \right \}(\cdot)
    }{p_{b_*}(\cdot;\kv_*)} \\
    &+ 16\pi^2 \innero{(\partial_{x_j} + 2\pi i k_{*,j}) \big (
      L_*^{(\kv_*)} \big )^{-1} \left \{ (\partial_{x_l} + 2 \pi i
        k_{*,l}) p_{b_*}(\circ ; \kv_*) \right \}(\cdot)
    }{p_{b_*}(\cdot;\kv_*)} .
  \end{split}
\end{equation}
Integrating by parts and using the fact that $\big( L_*^{(\kv_*)} \big
)^{-1}$ is self-adjoint, the last two terms are equal giving
\begin{equation}
  \label{eq:117}
  \begin{split}
    &\partial_{k_j k_l} E_{b_*}(\kv_*) = 8 \pi^2 \delta_{jl} \\
    &- 32\pi^2 \innero{(\partial_{x_j} + 2\pi i k_{*,j}) p_{b_*}(\cdot;\kv_*)}{\big (
      L_*^{(\kv_*)} \big )^{-1} \left \{ (\partial_{x_l} + 2 \pi i
        k_{*,l}) p_{b_*}(\circ ; \kv_*) \right \}(\cdot)} .
  \end{split}
\end{equation}
In order to identify eq.~(\ref{eq:117}) with the final result,
eq.~(\ref{eq:2}) in Prop.~\ref{sec:bigoeps2-equation}, we use the
definition of $w(\x)$ in eq.~(\ref{eq:52}) to compute
\begin{equation}
  \label{eq:118}
  \partial_{x_j} w(\x) = e^{2\pi i \kv_* \cdot \x} \left
    ( \partial_{x_j} + 2\pi i k_{*,j} \right )p_{b_*}(\x;\kv_*),
\end{equation}
which is the first term in the inner product of eq.~(\ref{eq:118}).
In addition, the identity
\begin{equation}
  \label{eq:119}
  L_* e^{2\pi i \kv_* \cdot \x} f(\x) = e^{2\pi i \kv_* \cdot \x}
  L_*^{(\kv_*)} f(\x) ,
\end{equation}
implies
\begin{equation}
  \label{eq:120}
  e^{2\pi i \kv_* \cdot \x} \big (L_*^{(\kv_*)} \big )^{-1} \left \{
    (\partial_{x_l} + 2 \pi i 
    k_{*,l}) p_{b_*}(\cdot ; \kv_*) \right \}(\x) = L_*^{-1}\left
    \{ \partial_{x_l} w(\cdot) \right \}(\x) ,
\end{equation}
and the result follows.

\section{Homogenization and Variational Analysis}
\label{sec:vari-exist-proof}

The existence of a bound state for eq.~\eqref{eq:81} bifurcating from
the lowest band edge $E_0(0)$ can be proved by showing that the
Rayleigh quotient
\begin{equation}
  \mathcal{E}[u] = \frac{\int_{\real^d} \left \{ |\nabla u(\x) |^2 +
      [V(\x) + \eps^2 Q(\eps 
      \x) - E_0(0)] |u(\x)|^2 d\x \right \} }{\int_{\real^d} |u(\x)|^2 d\x},
\end{equation}
is negative for some choice of $u \in H^1(\real^d)$
\cite{lieb_analysis_2001}.  A natural choice for $u$ is the
multi-scale expansion in eq.~\eqref{eq:3} with $\eps$ sufficiently
small.  Furthermore, a higher order, two-term trial function gives a
better approximation of the energy than the one-term trial function.
\begin{proposition}
  \label{sec:vari-exist-proof-2}
  \begin{enumerate}
  \item \textbf{Negative energy trial function:} with assumptions
    H1-H3 in section \ref{sec:main-results} and setting $E_* \equiv
    E_0(0)$, the lowest band edge, then there exists $\eps_0 > 0$ such
    that for all $0 < \eps < \eps_0$,
    \begin{equation}
      \label{eq:57}
      \mathcal{E}\left[ F_{A,Q}(\eps \circ) w(\circ) + 2 \nabla_\circ
        F_{A,Q}(\eps \circ) \cdot L_*^{-1} \left \{ \nabla w \right
        \}(\circ) \right ] < 0 .
    \end{equation}
    It follows that there exists a ground state.
  \item \textbf{Estimates of the ground state energy:} if $L_{I,Q}
    \equiv -\Delta_\y + Q(\y)$ has a simple eigenvalue $e_{I,Q} < 0$
    and corresponding eigenfunction $F_{I,Q}(\y) \in H^2(\real^d)$,
    then there exists $\eps_0 > 0$ such that for all $0 < \eps <
    \eps_0$,
  \begin{equation}
    \label{eq:109}
    \mathcal{E}\left[ F_{A,Q}(\eps \circ) w(\circ) + 2 \nabla_\circ
      F_{A,Q}(\eps \circ) \cdot L_*^{-1} \left \{ \nabla w \right
      \}(\circ) \right ] < \mathcal{E}\left[ F_{I,Q}(\eps \circ) 
      w(\circ) \right ] < 0 .
  \end{equation}
  \end{enumerate}
\end{proposition}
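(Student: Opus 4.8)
The strategy is to evaluate the Rayleigh quotient $\mathcal{E}$ on each of the two explicit two-scale trial functions, reduce it to the corresponding effective (homogenized) eigenvalue plus a remainder vanishing as $\eps\to 0$, and then order the limits using the operator inequality $A<I$ (equivalently, positive-definiteness of $I-A$), established in Appendix~\ref{sec:summary-discussion-1}. Write the two-term trial function of part~(1) as $u_\eps^{(2)}(\x)=U_0(\x,\eps\x)+\eps\,U_1(\x,\eps\x)$, where $U_0=w\,F_{A,Q}$ and $U_1=2\,\partial_{y_j}F_{A,Q}\,L_*^{-1}\{\partial_{x_j}w\}$ are the multiscale terms \eqref{eq:25}, \eqref{eq:15} with the choice $F_1\equiv 0$; using $\nabla_{\x}[F_{A,Q}(\eps\x)]=\eps(\nabla F_{A,Q})(\eps\x)$, this is exactly the function appearing in \eqref{eq:57}. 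Since $w$ and $L_*^{-1}\{\partial_{x_j}w\}$ lie in $H^2_{\textrm{per}}(\Omega)\subset L^\infty$ (elliptic regularity with $V\in L^\infty$) and $F_{A,Q}\in H^2(\real^d)$, both $u_\eps^{(2)}$ and $v_\eps=F_{I,Q}(\eps\,\cdot)\,w(\cdot)$ lie in $H^1(\real^d)$. The basic tool is the two-scale averaging identity: for $g\in L^\infty_{\textrm{per}}(\Omega)$ and $f\in H^s(\real^d)$ with $s>d$, $\int g(\x)f(\eps\x)\,d\x=\eps^{-d}\big(\int_\Omega g\big)\big(\int f\big)+\bigo(\eps^{s-d})$, proved exactly as the estimate in Lemma~\ref{lemma:TQPsi}(B), via Fourier series for $g$ and decay of $\widehat f$.

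\emph{Part (1).} Treating $\x,\y$ as independent as in \eqref{eq:4} and using \eqref{eq:5}, \eqref{eq:6}, \eqref{eq:7} with $\mu_1=0$ and $\mu_2=e_{A,Q}$, one obtains after setting $\y=\eps\x$
\[(H_\eps-E_*)u_\eps^{(2)}=\eps^2\big(-2\nabla_\x\!\cdot\!\nabla_\y U_1-\Delta_\y U_0+Q\,U_0\big)(\x,\eps\x)+\eps^3\big(Q\,U_1-\Delta_\y U_1\big)(\x,\eps\x).\]
Pairing with $u_\eps^{(2)}$, the $\eps^3$ term and the $\eps\,U_1$ cross-contribution are $\bigo(\eps^{3-d})=o(\eps^{2-d})$, since each factor has $L^2(\real^d)$-norm of order $\eps^{-d/2}$. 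For the $\eps^2$ term, the averaging identity, one integration by parts over $\Omega$ in the fast variable, and the identity $A_{jl}=\delta_{jl}-4\innero{\partial_{x_j}w}{L_*^{-1}\{\partial_{x_l}w\}}$ from \eqref{eq:26} collapse the fast integral and give
\[\innerr{u_\eps^{(2)}}{(H_\eps-E_*)u_\eps^{(2)}}=\eps^{2-d}\Big(\textstyle\int A_{jl}\partial_{y_j}F_{A,Q}\,\partial_{y_l}F_{A,Q}+\int Q\,F_{A,Q}^2\Big)(1+o(1))=\eps^{2-d}e_{A,Q}(1+o(1)).\]
Since $\|u_\eps^{(2)}\|_{L^2(\real^d)}^2=\eps^{-d}(1+o(1))$ (the $U_0$ term dominates, $\|w\|_{L^2(\Omega)}=\|F_{A,Q}\|_{L^2(\real^d)}=1$), we get $\mathcal{E}[u_\eps^{(2)}]=e_{A,Q}+o(1)$, which is $<0$ for $\eps$ small: at the lowest band edge $\kv_*=0$ is a minimum of $E_{b_*}$, so $A=\tfrac{1}{8\pi^2}D^2E_{b_*}(0)>0$ by H2(c), and H3 then forces $e_{A,Q}<0$. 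The existence of a ground state follows in the standard way: $\eps^2 Q(\eps\,\cdot)$ is relatively $(-\Delta)$-compact since $Q$ decays, so by Weyl's theorem $\inf\sigma_{\textrm{ess}}(H_\eps)=\inf\sigma(H_0)=E_0(0)=E_*$, and $\mathcal{E}[u_\eps^{(2)}]<0$ places $\inf\sigma(H_\eps)$ strictly below the essential spectrum, where it is a ground-state eigenvalue.

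\emph{Part (2).} The same computation applied to $v_\eps$ yields $(H_\eps-E_*)v_\eps=-2\eps\,\partial_{x_j}w(\x)\,\partial_{y_j}F_{I,Q}(\eps\x)+\eps^2 e_{I,Q}\,w(\x)F_{I,Q}(\eps\x)$. Pairing with $v_\eps$, the second term contributes $\eps^{2-d}e_{I,Q}(1+o(1))$, while the cross term is of higher order: $\int_\Omega w\,\partial_{x_j}w=\tfrac12\int_\Omega\partial_{x_j}(w^2)=0$, and one further integration by parts in the slow variable reduces its fast--slow integral to one containing $\int_{\real^d}\Delta_\y(F_{I,Q}^2)=0$, so it is $o(\eps^{2-d})$. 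Hence $\mathcal{E}[v_\eps]=e_{I,Q}+o(1)<0$. Finally, by the variational characterization of the lowest eigenvalue $e_{A,Q}$ of $L_{A,Q}$ and the strict inequality $A<I$ (positive-definiteness of $I-A$, Appendix~\ref{sec:summary-discussion-1}) tested against the non-constant $F_{I,Q}$,
\[e_{A,Q}\le\frac{\int A_{jl}\partial_{y_j}F_{I,Q}\,\partial_{y_l}F_{I,Q}+\int Q\,F_{I,Q}^2}{\|F_{I,Q}\|_{L^2(\real^d)}^2}<\frac{\int|\nabla F_{I,Q}|^2+\int Q\,F_{I,Q}^2}{\|F_{I,Q}\|_{L^2(\real^d)}^2}=e_{I,Q}<0;\]
as $e_{A,Q}<e_{I,Q}$ is a fixed gap, the $o(1)$ corrections are dominated for $\eps$ small and the chain $\mathcal{E}[u_\eps^{(2)}]<\mathcal{E}[v_\eps]<0$ follows.

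\emph{Main obstacle.} The substantive technical point is the uniform control of the two-scale averaging remainders: verifying that the $\bigo(\eps^{s-d})$ errors, and the doubly-cancelled cross term for $v_\eps$, are genuinely $o(\eps^{2-d})$ relative to the common $\eps^{2-d}$ leading order of numerator and denominator. This is where the hypotheses $Q\in H^\sigma$, $\sigma>d$, and the $C^3$ smoothness of $E_{b_*}$, $p_{b_*}$ in $\kv$ (H1) enter. The algebraic collapse of the $\eps^2$ numerator to $\innerr{F_{A,Q}}{L_{A,Q}F_{A,Q}}=e_{A,Q}$ and the final comparison step are routine once the averaging lemma and Appendix~\ref{sec:summary-discussion-1} are in hand.
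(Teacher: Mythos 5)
Your proof takes essentially the same route as the paper: evaluate $\mathcal{E}$ on the one- and two-term two-scale trial functions via the averaging lemma (the paper's Lemma~\ref{sec:vari-exist-proof-1}), identify the leading orders with $e_{I,Q}$ and $e_{A,Q}$, and then compare using the positive-definiteness of $I-A$ (Proposition~\ref{sec:effect-mass-tens}) through the variational characterization $e_{A,Q}=\inf J_{A,Q}\le J_{A,Q}[F_{I,Q}]<J_{I,Q}[F_{I,Q}]=e_{I,Q}$, exactly as in the paper. Two cosmetic slips worth correcting: the leading-order Rayleigh quotients should read $\mathcal{E}[u^{(2)}_\eps]=\eps^2 e_{A,Q}+o(\eps^2)$ and $\mathcal{E}[v_\eps]=\eps^2 e_{I,Q}+o(\eps^2)$ (not $e_{A,Q}+o(1)$ etc., since numerator $\sim\eps^{2-d}$ and denominator $\sim\eps^{-d}$), and the positive-definiteness of $I-A$ is proved in Appendix~\ref{sec:vari-exist-proof} (Proposition~\ref{sec:effect-mass-tens}), not Appendix~\ref{sec:summary-discussion-1} which only identifies $A$ with the Hessian of $E_{b_*}$.
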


For the proof of Prop.~\ref{sec:vari-exist-proof-2}, we will make
repeated use of the following averaging lemma.
\begin{lemma}
  \label{sec:vari-exist-proof-1}
  Let $p(\x) = p(\x + \z)$ be periodic with fundamental period cell
  $\Omega$ and $\sum_{\z \in \integer^d} |\widehat{p}_\z| < \infty$
  where $\widehat{p}_\z$ are the Fourier series coefficients of
  $p(\x)$.  If $G \in L^1(\real^d) \cap C^n(\real^d)$,
  then
  \begin{equation}
    \label{eq:114}
    \left | \int_{\real^d} p(\x) G(\eps \x) d\x - \frac{1}{\eps^d}
      \fint_{\Omega} p(\x) d\x \int_{\real^d} G(\y) d\y \right |  \le
    C \eps^n .
  \end{equation}
\end{lemma}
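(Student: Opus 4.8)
The plan is to pass to the Fourier series of the periodic factor $p$ and reduce everything to the decay of $\widehat G$ at high frequency. Write $p(\x)=\sum_{\z\in\integer^d}\widehat p_\z\,e^{2\pi i\z\cdot\x}$, with $\widehat p_\z=\fint_\Omega p(\x)e^{-2\pi i\z\cdot\x}\,d\x$ and $\sum_\z|\widehat p_\z|<\infty$ by hypothesis. Using the transform convention \eqref{FTdef} and the substitution $\y=\eps\x$ one has $\int_{\real^d}e^{2\pi i\z\cdot\x}G(\eps\x)\,d\x=\eps^{-d}\,\widehat G(-\z/\eps)$; interchanging the sum with the integral (justified below) then gives
\[
\int_{\real^d}p(\x)\,G(\eps\x)\,d\x \;=\; \frac{1}{\eps^d}\sum_{\z\in\integer^d}\widehat p_\z\,\widehat G\!\left(-\frac{\z}{\eps}\right).
\]
The $\z=0$ term is $\eps^{-d}\,\widehat p_0\,\widehat G(0)=\eps^{-d}\big(\fint_\Omega p\big)\big(\int_{\real^d}G\big)$, since $\widehat G(0)=\int_{\real^d}G$ and $|\Omega|=1$. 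Hence the left-hand side of \eqref{eq:114} equals $\eps^{-d}\big|\sum_{\z\ne0}\widehat p_\z\,\widehat G(-\z/\eps)\big|$, and the whole matter reduces to estimating this tail.

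For the tail I would use the smoothness and integrability of the derivatives of $G$, which yield the decay $|\widehat G(\kv)|\le C_G\,(1+|\kv|)^{-(n+d)}$ (from $(2\pi i\kv)^\alpha\widehat G(\kv)=\widehat{\partial^\alpha G}(\kv)$ and $\|\widehat{\partial^\alpha G}\|_{L^\infty}\le\|\partial^\alpha G\|_{L^1}$ for $|\alpha|\le n+d$; this is the sense in which the regularity of $G$ enters). Since every nonzero $\z\in\integer^d$ has $|\z|\ge1$, for $\eps\le1$ this gives $|\widehat G(-\z/\eps)|\le C_G\,\eps^{\,n+d}\,|\z|^{-(n+d)}$, so
\[
\frac{1}{\eps^d}\sum_{\z\ne0}|\widehat p_\z|\;\big|\widehat G(-\tfrac{\z}{\eps})\big|\;\le\;C_G\,\eps^{\,n}\sum_{\z\ne0}\frac{|\widehat p_\z|}{|\z|^{\,n+d}}\;\le\;C_G\,\eps^{\,n}\sum_{\z\in\integer^d}|\widehat p_\z|\;=\;C\,\eps^{\,n},
\]
which is \eqref{eq:114}. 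The same $\eps$-independent summable majorant $|\widehat p_\z|$ (times $\sup_{|\kv|\ge\eps^{-1}}|\widehat G|$) justifies interchanging $\sum_\z$ and $\int d\x$ by Tonelli, while the $\z=0$ term alone uses only $G\in L^1$ and, for $L^\infty$ convergence of the Fourier series of $p$, the hypothesis $\sum_\z|\widehat p_\z|<\infty$.

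An equivalent, Fourier-free route: set $q=p-\fint_\Omega p$, a periodic mean-zero function with summable Fourier coefficients; solve on the torus to write $q=\sum_{|\alpha|=n+d}\partial^\alpha T_\alpha$ with each $T_\alpha$ periodic and bounded (boundedness holds because $|\z|\ge1$ keeps the coefficients of $T_\alpha$ summable); then integrate by parts $n+d$ times in $\int q(\x)G(\eps\x)\,d\x$, each step producing a factor $\eps$ and moving a derivative onto $G(\eps\cdot)$, while the final change of variables contributes $\int|(\partial^\alpha G)(\eps\x)|\,d\x=\eps^{-d}\|\partial^\alpha G\|_{L^1}$, for the net order $\eps^{(n+d)-d}=\eps^n$.

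The point to watch — and the reason the argument must use $n+d$ rather than $n$ orders of control on $G$ — is the power counting: rescaling by $\eps$ inevitably costs $\eps^{-d}$, because $G(\eps\cdot)$ occupies a ball of radius $\sim\eps^{-1}$, so to reach the advertised order $\eps^n$ one must extract a compensating gain $\eps^{\,n+d}$ from evaluating $\widehat G$ at frequencies $|\z/\eps|\ge\eps^{-1}$ and verify that the residual lattice sum $\sum_{\z\ne0}|\widehat p_\z||\z|^{-(n+d)}$ is finite and $\eps$-independent. With only the bare $C^n$ smoothness of $G$ the method delivers $\eps^{\,n-d}$; the estimate \eqref{eq:114} is to be read with $G$ regular enough that $\widehat G$ decays at order $n+d$ (e.g. $G\in W^{n+d,1}$), which holds with room to spare for the rapidly decaying $G$ to which the lemma is applied in the proof of Proposition~\ref{sec:vari-exist-proof-2}.
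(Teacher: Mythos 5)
Your proof follows the same route as the paper's: expand $p$ in its Fourier series, identify the $\z=0$ term with the main term in \eqref{eq:114}, and bound the tail $\eps^{-d}\sum_{\z\ne0}\widehat{p}_\z\,\widehat{G}(-\z/\eps)$ by the decay of $\widehat{G}$ at frequencies $|\z|/\eps\ge\eps^{-1}$; the paper gets that decay by integrating by parts $n$ times in $\int G(\y)e^{2\pi i \z\cdot\y/\eps}\,d\y$ (see \eqref{eq:80}), which is your Fourier-side estimate in disguise. Your power-counting caveat is apt and is the one genuine point of divergence: with only $n$ orders of decay the tail is $O(\eps^{n-d})$ because of the $\eps^{-d}$ prefactor, so the literal bound $C\eps^{n}$ in \eqref{eq:114} requires decay of $\widehat{G}$ of order $n+d$ (e.g.\ derivatives up to order $n+d$ in $L^1$), which is exactly what you assume; note also that even order-$n$ decay of $\widehat G$ needs $\partial^\alpha G\in L^1$ rather than merely $G\in C^n$, a hypothesis the paper leaves implicit. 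The discrepancy is harmless where the lemma is applied — in \eqref{eq:85} and \eqref{eq:90} the main terms are of size $\eps^{-d}$ and $\eps^{2-d}$, so an $O(\eps^{n-d})$ remainder with a few derivatives on the relevant $G$ already yields the $o(\eps^{2})$ conclusion — but your sharper bookkeeping is the correct reading of the rate claimed in \eqref{eq:114}.
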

\begin{proof}
  Expand $p$ in the Fourier series $p(\x) = \sum_{\z \in \integer^d}
  \widehat{p}_\z e^{2\pi i \z \cdot \x}$.  Then
  \begin{equation}
    \label{eq:80}
    \begin{split}
      \int_{\real^d} p(\x) G(\eps \x) \, d\x &= \frac{1}{\eps^d}
      \sum_{\z \in \integer^d} \widehat{p}_\z \int_{\real^d} G(\y)
      e^{2\pi i \z
        \cdot \y/\eps} \, d\y \\
      &= \frac{\widehat{p}_0}{\eps^d} \int_{\real^d}
      G(\y) \, d\y +  \frac{1}{\eps^d} \sum_{\z \ne 0} \widehat{p}_\z
      \int_{\real^d} 
      G(\y) e^{2\pi i \z \cdot \y/\eps} d\y \\
      &= \frac{1}{\eps^d} \fint_\Omega p(\x) \, d\x \int_{\real^d}
      G(\y) \, d\y + \cO(\eps^n),
    \end{split}
  \end{equation}
  where the first line is justified by the assumed absolute
  convergence of the Fourier coefficients and the second line results
  from integration by parts $n$ times.
\end{proof}

First we consider the ansatz $u(\x) = F_{I,Q}(\eps \x) w(\x)$ for
eq.~(\ref{eq:109}).  A computation and several applications of the
averaging lemma~\ref{sec:vari-exist-proof-1} give
\begin{equation}
  \label{eq:85}
  \begin{split}
    \mathcal{E}\left[ F_{I,Q}(\eps \circ) 
      w(\circ) \right ] &= \eps^2 \frac{\int_{\real^d} \left \{ |\nabla F_{I,Q}(\y) |^2 +
        Q(\y)|F_{I,Q}(\y)|^2 \right \} d\y }{\int_{\real^d}
      |F_{I,Q}(\y)|^2 d\y}  + o(\eps^{2}) \\
    &= \eps^2 e_{I,Q} +
    o(\eps^{2}) < 0,
  \end{split}
\end{equation}
for $\eps$ sufficiently small.

A similar, more involved computation for the ansatz $u(\x) =
F_{A,Q}(\eps \x) w(\x) + 2 \nabla F_{A,Q}(\eps \x) \cdot L_*^{-1}
\left \{ \nabla w \right \}(\x)$ leads to
\begin{equation}
  \label{eq:90}
  \begin{split}
    &\mathcal{E}\left[ F_{A,Q}(\eps \circ) w(\circ) + 2 \nabla
      F_{A,Q}(\eps \circ) \cdot L_*^{-1} \left \{ \nabla w \right
      \}(\circ) \right ] \\
    &= \eps^2 \frac{\int_{\real^d} \left \{ \nabla F_{A,Q}(\y) \cdot A
        \nabla F_{A,Q}(\y) + Q(\y) | F_{A,Q}(\y) |^2  \right \} d\y
    }{\int_{\real^d} |F_{A,Q}(\y)|^2 d\y} +
    o(\eps^{2}) \\
    &= \eps^2 e_{A,Q} + o(\eps^{2}) < 0,
  \end{split}
\end{equation}
for $\eps$ sufficiently small.  For a bifurcation from the lowest band
edge, the effective mass tensor $A$ is positive definite
\cite{Kirsch-Simon:87} hence the eigenvalue $e_{A,Q}$ is negative.

The proof of Prop.~\ref{sec:vari-exist-proof-2} is completed if we can
show that $e_{A,Q} < e_{I,Q}$.  For this, we use the following
proposition.
\begin{proposition}
\label{sec:effect-mass-tens}
The operator $I - A = \left ( 4 \innero{\partial_{x_j} w}{L_*^{-1}
    \{ \partial_{x_l} w \}} \right )$ is positive definite at the
lowest band edge $(b_* = 0$, $\kv_* = 0)$.
\end{proposition}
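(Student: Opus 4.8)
The plan is to establish the definiteness of the symmetric matrix $I-A=\bigl(4\innero{\partial_{x_j}w}{L_*^{-1}\{\partial_{x_l}w\}}\bigr)$ by diagonalizing its quadratic form along an arbitrary direction and invoking the spectral structure of $L_*$ at the bottom of the spectrum of $-\Delta+V$. First I would record, using the self-adjointness of $L_*^{-1}$ and the definition \eqref{eq:26} of $A$, that for every $\xi\in\real^d$
\[
  \xi^{T}(I-A)\xi \;=\; 4\sum_{j,l}\xi_j\xi_l\,\innero{\partial_{x_j}w}{L_*^{-1}\{\partial_{x_l}w\}} \;=\; 4\,\innero{\xi\cdot\nabla w}{L_*^{-1}\{\xi\cdot\nabla w\}} .
\]
This is meaningful because $\xi\cdot\nabla w\in\{w\}^{\perp}$: since $w$ is real-valued and periodic, $\innero{w}{\partial_{x_j}w}=\tfrac12\int_\Omega\partial_{x_j}(w^2)\,d\x=0$.

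Next, because we sit at the lowest band edge, $E_*=E_0(0)=\inf\,\textrm{spec}(-\Delta+V)$ on $L^2_{\textrm{per}}(\Omega)$, so $L_*=-\Delta+V-E_*\ge 0$ with one-dimensional kernel $\textrm{span}\{w\}$, and $w=p_0(\cdot;0)$ may be chosen strictly positive by the Perron--Frobenius property of a ground state. Hence $L_*$, and therefore $L_*^{-1}$, is strictly positive on $\{w\}^{\perp}$, which yields $\xi^{T}(I-A)\xi\ge 0$ for all $\xi$, with equality precisely when $\xi\cdot\nabla w\equiv 0$ in $L^2(\Omega)$. An independent route to $A\le I$ is to use $w$ itself as a trial function in the Rayleigh quotient for $E_0(\kv)$: the identities $\innero{\partial_{x_j}w}{w}=0$ give $E_0(\kv)\le E_*+4\pi^2|\kv|^2$, and comparing with the Taylor expansion $E_0(\kv)-E_*=4\pi^2\,\kv^{T}A\kv+o(|\kv|^2)$ reproduces $\xi^{T}A\xi\le|\xi|^2$.

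The remaining point---strict positivity---is the crux, and I would treat it by excluding the equality case $\xi\cdot\nabla w\equiv 0$. If $\xi\cdot\nabla w\equiv 0$, then $w$ is invariant under the translation flow $\x\mapsto\x+t\xi$; together with $\integer^d$-periodicity, continuity, and the eigenequation $-\Delta w=(E_*-V)w$ with $w>0$, this forces $V$ to be independent of the coordinate along $\xi$ (and constant if the $\xi$-orbit is dense in $\T^d$). Excluding such ``cylindrical'' potentials therefore gives $I-A>0$. I expect this to be the genuine difficulty: strict definiteness really does fail for degenerate potentials such as $V(\x)=V_1(x_1)$ on $[0,1]^2$, for which $\partial_{x_2}w\equiv 0$ and $I-A$ is only positive semidefinite. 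Accordingly I would either state Proposition~\ref{sec:effect-mass-tens} under the standing assumption that $V$ depends genuinely on all $d$ coordinates, or prove only $I-A\ge 0$; the latter already suffices for Proposition~\ref{sec:vari-exist-proof-2}, since inserting $F_{I,Q}$ into the $A$-Rayleigh quotient gives $e_{A,Q}\le e_{I,Q}-\|F_{I,Q}\|_{L^2(\real^d)}^{-2}\int_{\real^d}\nabla F_{I,Q}\cdot(I-A)\nabla F_{I,Q}$, and the subtracted term is strictly positive whenever $I-A\ne 0$, because $\xi\cdot\nabla F_{I,Q}\not\equiv 0$ for every $\xi\ne 0$ (otherwise $F_{I,Q}\notin L^2(\real^d)$).
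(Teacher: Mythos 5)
Your proposal follows the same route as the paper's proof: write $\xi^{T}(I-A)\xi=4\innero{\xi\cdot\nabla w}{L_*^{-1}\{\xi\cdot\nabla w\}}$, note $\xi\cdot\nabla w\perp w$, and invoke positivity of $L_*^{-1}$ on $\{w\}^\perp$. The substantive difference is that you correctly isolate the final step, $\|\xi\cdot\nabla w\|_{L^2(\Omega)}\ge C|\xi|$, as the real content of the claim, observe that it is equivalent to linear independence of $\{\partial_{x_j}w\}_{j=1}^d$ in $L^2(\Omega)$, and point out that it can fail. The paper's proof in eq.~\eqref{eq:115} simply asserts this bound. Your cylindrical example $V(\x)=V_1(x_1)$ on $[0,1]^2$ is in fact a genuine counterexample satisfying H1--H3 at the lowest band edge but giving $\partial_{x_2}w\equiv 0$ and $(I-A)_{22}=0$; so Proposition~\ref{sec:effect-mass-tens} as stated is true only under an additional nondegeneracy assumption on $V$. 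Both remedies you suggest are sound, and the second is particularly economical: semidefiniteness $I-A\ge 0$ together with $I-A\ne 0$ already yields $\int\nabla F_{I,Q}\cdot(I-A)\nabla F_{I,Q}>0$, since $\xi\cdot\nabla F_{I,Q}\not\equiv 0$ for every $\xi\ne 0$ when $0\ne F_{I,Q}\in L^2(\real^d)$, and that is exactly the strict inequality $e_{A,Q}<e_{I,Q}$ used in Proposition~\ref{sec:vari-exist-proof-2}.

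A secondary point, not raised in your proposal but worth recording: the intermediate constant $(E_1(0)-E_*)^{-1}$ in eq.~\eqref{eq:115} is the \emph{upper} bound for $L_*^{-1}$ on $\{w\}^\perp$ (since $E_b(0)-E_*\to\infty$ as $b\to\infty$), so it cannot serve as a lower bound there; the needed lower bound is a finite-dimensionality statement on $\mathrm{span}\{\partial_{x_j}w\}$, whose validity again hinges on exactly the linear-independence condition you identify. Your alternative derivation of $A\le I$ via the trial function $e^{2\pi i\kv\cdot\x}w(\x)$ in the Rayleigh quotient, yielding $E_0(\kv)\le E_*+4\pi^2|\kv|^2$, is a clean independent confirmation that avoids the spectral bookkeeping.
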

\begin{proof}
  Recall that $L_* \ge 0$ with one-dimensional $L^2(\T^d)$ kernel
  spanned by $w$.  Let $\mathbf{v} = (v_1,\ldots, v_d) \in \R^d$ be
  arbitrary.  Then
  \begin{equation}
    \label{eq:115}
    \mathbf{v} \cdot (I - A) \mathbf{v} = \innero{\mathbf{v} \cdot \nabla
      w}{L_*^{-1} \mathbf{v} \cdot \nabla w} \ge \frac{1}{E_1(0) - E_*}
    \| \mathbf{v} \cdot \nabla w \|^2 \ge C \| \mathbf{v} \|^2, \quad
    C > 0,
  \end{equation}
  where $E_1(0) - E_* > 0$ is the second eigenvalue of $L_*$ acting on
  $L^2(\T^d)$.
\end{proof}

Introducing the energy functional
\begin{equation}
  \label{eq:66}
  J_{A,Q}[g] \equiv \frac{\int_{\R^d} \left [ \nabla g(\y) \cdot A \nabla
      g(\y) + Q(\y) g^2(\y) \right ] \, d\y}{\int_{\R^d} g^2(\y) \,
    d\y} , \quad g \in L^2(\Omega) ,
\end{equation}
we observe
\begin{equation}
  \label{eq:78}
  J_{A,Q}[g] - J_{I,Q}[g] = \frac{\int_{\R^d} \left [ \nabla g(\y)
      \cdot (A - I) \nabla g(\y) \right ] \,
    d\y}{\int_{\R^d} g^2(\y) \, d\y} < 0, \quad g \in L^2(\Omega),
\end{equation}
by the negative definiteness of $A - I$ from
Prop.~\ref{sec:effect-mass-tens}.  Using eqs.~(\ref{eq:85}),
(\ref{eq:90}), and (\ref{eq:78}) we find
\begin{equation}
  \label{eq:79}
  \begin{split}
    e_{A,Q} = J_{A,Q}[F_{A,Q}] = \inf_{g \in L^2(\Omega)} J_{A,Q}[g]
    \le J_{A,Q}[F_{I,Q}] < J_{I,Q}[F_{I,Q}] = e_{I,Q} ,
  \end{split}
\end{equation}
and the proof is complete.

\bibliographystyle{siam}
%\bibliography{zotero_full}

\bibliography{homog-defect}
\end{document}